\definecolor{Darkblue}{rgb}{0,0,0.4}
\definecolor{Brown}{cmyk}{0,0.61,1.,0.60}
\definecolor{Purple}{cmyk}{0.45,0.86,0,0}
\definecolor{Darkgreen}{rgb}{0.133,0.543,0.133}
\newif\ifdraft 
\newcommand{\namedref}[2]{\hyperref[#2]{#1~\ref*{#2}}}
\newcommand{\propref}[1]{\hyperref[#1]{property~(\ref*{#1})}}
\newtheorem{theorem}{Theorem}
\newtheorem{lemma}{Lemma}
\newtheorem{definition}{Definition}
\newtheorem{claim}{Claim}
\newtheorem{observation}{Observation}
\newtheorem{corollary}{Corollary}
\newtheorem{question}{Question}
\newcommand{\mst}{\mathrm{MST}}
\newcommand{\ddim}{\mathsf{ddim}}
\newcommand{\diam}{\mathrm{diam}}
\newcommand{\dm}{\mathrm{diam}}
\newcommand{\nd}{\delta}
\newcommand{\direct}{\overrightarrow}
\newcommand{\ball}{\mathbf{B}}
\newcommand{\eps}{\epsilon}
\newtheorem{fact}{Fact}
\definecolor{forestgreen}{rgb}{0.13, 0.55, 0.13}
\DeclareMathAlphabet{\mathpzc}{OT1}{pzc}{m}{it}
\newcommand{\etal}{{\em et al. \xspace}}
\newlength{\dhatheight}
\newcommand {\ignore} [1] {}
\newcommand{\initOneLiners}{%
	\setlength{\itemsep}{0pt}
	\setlength{\parsep }{0pt}
	\setlength{\topsep }{0pt}
}
\title{Greedy Spanners in Euclidean Spaces Admit Sublinear Separators}
\author{Hung Le}
\affil{University of Massachusetts at Amherst, \texttt{hungle@cs.umass.edu}}
\author{Cuong Than}
\affil{University of Massachusetts at Amherst, \texttt{cthan@umass.edu}}
\date{}
\begin{document}
\maketitle
\begin{abstract}
The greedy spanner in a low dimensional Euclidean space is a fundamental geometric construction that has been extensively studied over three decades as it possesses the two most basic properties of a good spanner: constant maximum degree and constant lightness.  Recently, Eppstein and Khodabandeh~\cite{EK21} showed that the greedy spanner in $\mathbb{R}^2$ admits a sublinear separator in a strong sense: any subgraph of $k$ vertices of the greedy spanner in $\mathbb{R}^2$ has a separator of size $O(\sqrt{k})$.  Their technique is inherently planar and is not extensible to higher dimensions. They left  showing the existence of a small separator for the greedy spanner  in $\mathbb{R}^d$ for any constant $d\geq 3$  as an open problem. 

In this paper, we resolve the problem of Eppstein and Khodabandeh~\cite{EK21} by showing that any subgraph of  $k$ vertices of the greedy spanner in $\mathbb{R}^d$ has a separator of size $O(k^{1-1/d})$. We introduce a new technique that gives a simple criterion for any geometric graph to have a sublinear separator that we dub \emph{$\tau$-lanky}: a geometric graph is  $\tau$-lanky if any ball of radius $r$ cuts at most $\tau$ edges of length at least $r$ in the graph. We show that any $\tau$-lanky geometric graph of $n$ vertices in $\mathbb{R}^d$ has a separator of size $O(\tau n^{1-1/d})$. We then derive our main result by showing that the greedy spanner is $O(1)$-lanky. We indeed obtain a more general result that applies to unit ball graphs and point sets of low fractal dimensions in $\mathbb{R}^d$.

Our technique naturally extends to doubling metrics. We use the $\tau$-lanky criterion to show that there exists a $(1+\eps)$-spanner for doubling metrics of dimension $d$ with a constant maximum degree and  a separator of size $O(n^{1-\frac{1}{d}})$;  this result resolves an open problem posed by Abam and Har-Peled~\cite{AH10} a decade ago. We then introduce another simple criterion for a graph in doubling metrics of dimension $d$ to have a sublinear separator. We use the new criterion to show that the greedy spanner of an $n$-point metric space of doubling dimension $d$  has a separator of  size $O((n^{1-\frac{1}{d}}) + \log\Delta)$ where $\Delta$ is the spread of the metric; the factor $\log(\Delta)$ is tightly connected to the fact that, unlike its Euclidean counterpart, the greedy spanner in doubling metrics has \emph{unbounded maximum degree}. Finally, we discuss algorithmic implications of our results.
\end{abstract}

\newpage

\setcounter{tocdepth}{2} 
\tableofcontents
    \newpage
    \pagenumbering{arabic}
    
\section{Introduction}

A \emph{$t$-spanner} of an edge-weighted graph $G= (V,E,w)$ is a spanning edge-weighted subgraph $H$ such that $\nd_{H}(u,v)\leq t \cdot \nd_{G}(u,v)$ for every pair of vertices $u,v \in V$; $t$ is called the \emph{stretch} of the spanner. This definition naturally extends to a $t$-spanner of a point set $P$ in a metric space $(X,\nd_X)$ where the graph $G = (P, {P \choose 2}, \delta_X)$ in this context is  the complete graph representing the metric induced by $P$. The study of spanners dated back to the work of Chew~\cite{Che86} in 1986 who constructed a $\sqrt{10}$-spanner for a point set on the Euclidean plane, though the term \emph{spanner} had not yet appeared until the work of Peleg and Sch\"{a}ffer~\cite{PS89}. 

Over more than 30 years, research on spanners has evolved into an independent field of study, and has found numerous applications, such as to VLSI circuit design \cite{CKRSW91,CKRSW92,SCRS01},  to distributed computing~\cite{Awerbuch85,ABP91,Peleg00}, to approximation algorithms~\cite{RS98,Klein05,Klein06,Gottlieb15,BLW17,CFKL20}, to wireless and sensor networks \cite{RW04,BDS04,SS10}, to machine learning \cite{GKK17}, and  to computational biology~\cite{RG05}.  Spanners enjoy wide applicability since they possess many desirable properties such as \emph{low sparsity}, implying, e.g., low storage cost; \emph{ small lightness}, implying, e.g., low construction cost; \emph{low maximum degree}, implying, e.g., small routing tables in routing applications. 
Here, the \emph{sparsity} of a spanner is the ratio of its number of edges to the number of edges of the minimum spanning tree ($\mst$), which is $n-1$, and the \emph{lightness} of a spanner is the ratio of its total edge weight to the weight of the $\mst$. Some spanners only have one property among three properties while others possess all of them and even more~\cite{ADMSS95,ES15}. 

The \emph{path greedy algorithm}, or greedy algorithm for short,  is perhaps the simplest and one of the most well-studied algorithms for constructing a $t$-spanner: consider edges of $G$ in the non-decreasing order of the weights, and add to the spanner an edge $e$ if the distance between its endpoints in the current spanner is larger than $t\cdot w(u,v)$. The output spanner is called the \emph{greedy spanner}. The greedy algorithm was introduced by Alth$\ddot{\mbox{o}}$fer \etal~\cite{ADDJS93} and independently discovered by Bern\footnote{Alth$\ddot{\mbox{o}}$fer \etal~\cite{ADDJS93} attributed the independent discovery of the greedy algorithm to an unpublished work of  Marshall Bern.}.  In addition to its great simplicity, the greedy algorithm has shown to be one-algorithm-fits-all. The greedy $(1+\eps)$-spanner has constant sparsity and constant lightness for point sets in Euclidean spaces\footnote{In this work, we assume that $\eps$ and $d$ are fixed constants, unless specified otherwise.}~\cite{CDNS92,DHN93,RS98,LS19}, doubling metrics~\cite{Smid09,Gottlieb15,BLW19}, and minor-free graphs~\cite{BLW19}. For \emph{edge-weighted} general graphs of $n$ vertices, the greedy spanner with stretch $(2k-1)$ has $O(n^{1/k})$ sparsity and with stretch  $(2k-1)(1+\eps)$ has $O(n^{1/k})$ lightness for a constant $\eps$, which are (nearly) optimal bounds assuming the Erd\H{o}s' girth conjecture~\cite{Erdos64}. Filtser and Solmon~\cite{FS16} showed that greedy spanners are \emph{existentially optimal} for both the size and the lightness for several graph families, which include general graphs and minor-free graphs. Roughly speaking,  existential optimality for a graph family $\mathcal G$ means that the worst-case sparsity and lightness of the greedy spanner over all graphs in $\mathcal G$  are as good as the sparsity and lightness of an optimal spanner over all graphs in $\mathcal G$. In summary,  understanding the greedy spanner is an important problem, and there have been many papers dedicated to this task.

In this work, we investigate greedy spanners for point sets  
from a different  perspective: 
the existence of sublinear balanced separators\footnote{We say that a separator $S$ of an $n$-vertex graph $G = (V,E)$ is balanced if every connected component of $G[V\setminus S]$ has at most $c\cdot n$ vertices for some constant $c$.}. For brevity, we simply refer to a  balanced separator as a separator.
While our work covers several settings, understanding the greedy spanner for point sets in Euclidean spaces is of special interest. This fundamental setting has been studied extensively; the chapter Geometric Analysis: The Leapfrog Property of the book by Narasimhan and Smid~\cite{NS07} is dedicated solely to  Euclidean spanners. Many algorithmic ideas developed in the context of Euclidean spanners could be carried over different settings, such as doubling metrics and unit ball graphs (UBGs), which are intersection graphs of unit balls in $\mathbb{R}^d$. On the other hand, having a small separator is algorithmically significant.  In their seminal work, Lipton and Tarjan~\cite{LT80} demonstrated the algorithmic power of sublinear separators in designing divide-and-conquer algorithms for solving many algorithmic tasks. (Though their results were stated for planar graphs, their techniques are applicable to any graph class that is closed under taking subgraphs and has sublinear separators.) Over four decades since the work of Lipton and Tarjan~\cite{LT80}, sublinear separators have played a central role in the development of many algorithmic paradigms, such as subexponential time parameterized algorithms~\cite{MP17,DFHT05,DH05,FLMPPS16}, analyzing local search heuristics~\cite{CG15,HQ17}, designing polynomial time approximation schemes (PTASes) for problems definable in first-order logic~\cite{Baker94,Dvo18}, to name a few. 

There have been some prior works that focus on (non-greedy) spanners with sublinear separators. Smith and Wormald~\cite{SW98} sketched a proof showing that the spanner of Arya \etal~\cite{ADMSS95} has a separator of size $O(n^{1-1/d})$; their proof relies on the \emph{$(\kappa,\lambda)$-thick property}\footnote{A set of objects (e.g., balls) in $\mathbb{R}^d$ is $(\kappa,\lambda)$-thick if every point in the space belongs to at most $\kappa$ objects and the maximum size (linear dimension, e.g., diameter) of an object is at most $\lambda$ times the minimum size.}. Abam and Har-Peled~\cite{AH10} constructed a $(1+\eps)$-spanner with a separator of size $O(n^{1-1/d})$ for point sets in $\mathbb{R}^d$ with maximum degree $O(\log^2 n)$ and constant sparsity using a semi-separated pair decomposition (SSPD), a concept introduced in the same paper. They left as an open problem of constructing a spanner with a sublinear separator and a \emph{constant maximum degree} in metrics of constant doubling dimensions. F{\"u}rer and Kasiviswanathan~\cite{FK12} constructed a spanner of constant sparsity with a separator of size $O(n^{1-1/d}) + \log(\Gamma))$ for ball graphs, which are intersection graphs of balls of arbitrary radii 
in $\mathbb{R}^d$; here $\Gamma$ is the ratio of the maximum radius to minimum radius over the balls, which could be exponentially large. However, for the important special case of UBGs ($\Gamma = 1$), the result of  F{\"u}rer and Kasiviswanathan~\cite{FK12} implies a separator of size $O(n^{1-1/d}))$. Sidiropoulos and Sridhar~\cite{SS17} devised a construction of a spanner with a separator of size $O(n^{1-1/d_{f}})$ for point sets in $\mathbb{R}^d$ with a \emph{fractal dimension} $d_f$ (see~\Cref{def:fractal} for a formal definition). Both results of F{\"u}rer and Kasiviswanathan~\cite{FK12} and Sidiropoulos and Sridhar~\cite{SS17} imply $(1+\eps)$-spanners with sublinear separators for point sets in $\mathbb{R}^d$ as their settings are more general\footnote{The fractal dimension of any point set in $\mathbb{R}^d$ is $d_f\leq d$.}. However, both constructions are fairly complicated and  based on  the $(\kappa,\lambda)$-thick property of Smith and Wormald~\cite{SW98}. Furthermore, their spanners are not known to have desirable properties of spanners such as maximum degree and constant lightness;  Abam and Har-Peled~\cite{AH10} showed that the max degree of spanner of F{\"u}rer and Kasiviswanathan~\cite{FK12} could be $\Omega(n)$. These results motivate the following question:

\begin{question}\label{ques:greedy} Does the greedy $(1+\eps)$-spanner for any point set in $\mathbb{R}^d$ have  a separator of size  $O(n^{1-1/d})$ for fixed constants $d\geq 2, \eps \leq 1$? More generally, do greedy $(1+\eps)$-spanners for more general settings, such as  ball graphs in $\mathbb{R}^d$ and point sets in Euclidean spaces of small fractal dimension, 
	have sublinear separators?
\end{question}

We remark that the upper bound $O(n^{1-1/d})$ of the separator size is optimal; the greedy $(1+\eps)$-spanner of point sets in the $d$-dimensional grid has size $\Omega(n^{1-1/d})$~\cite{SS17}.  

Recently, Eppstein and Khodabandeh~\cite{EK21} showed that the greedy spanner for point sets in $\mathbb{R}^2$ admits a separator of size $O(\sqrt{n})$, thereby resolving \Cref{ques:greedy} for the case of Euclidean graphs and $d = 2$. They derived their result by studying the crossing patterns of the straight-line drawing of the greedy spanner on the Euclidean plane. Since edge crossings are not useful in dimensions at least 3---the edges of the greedy spanners for point sets in a general position generally have no crossing---their technique cannot be applied to higher dimensions. They left \Cref{ques:greedy} for dimension $d\geq 3$ as an open problem. 

Nevertheless, the technique based on edge crossings of Eppstein and Khodabandeh~\cite{EK21} gives  a sublinear separator in a strong sense:  any subgraph of $k$ vertices of the greedy spanner in $\mathbb{R}^2$ has a separator of size $O(\sqrt{k})$. That is, having a square-root separator is a \emph{monotone property}: it is closed under taking subgraphs. Being a monotone property is not a mere generalization; all aforementioned algorithmic applications of sublinear separators require monotonicity. F{\"u}rer and Kasiviswanathan~\cite{FK12} showed that their spanner  has a monotone sublinear separator, and this is crucial for the algorithmic applications in their paper. It is not clear whether other non-greedy spanners mentioned above have the same property. On the other hand, all spanners in our paper have monotone sublinear separators. 

While resolving \Cref{ques:greedy} is an important goal, it is probably more important and challenging  to have a simple tool to resolve it, as asked in the following question:

\begin{question}\label{ques:characterization} Is there a simple criterion of graphs with sublinear separators in Euclidean and doubling metrics of constant dimensions that could be easily applied to (greedy or non-greedy) spanners?
\end{question}

A simple criterion asked in \Cref{ques:characterization} would be particularly significant, as  there are numerous spanner constructions in the literature that are faster than the greedy algorithm and give spanners with different properties than those guaranteed by greedy spanners. A prime example is greedy spanners for doubling metrics of constant dimensions. Smid~\cite{Smid09} constructed a doubling metric $(X,\nd_X)$ of dimension $1$ such that the greedy spanner for $(X,\nd_X)$ has maximum degree $n-1$. However, there are other constructions~\cite{CGMZ16,GL08} that give a spanner with constant maximum degree for doubling metrics. Does any of the spanners given by these algorithms have a sublinear separator? A positive answer to this question would resolve the open problem by Abam and Har-Peled~\cite{AH10}. And in this work, we provide a positive answer (\Cref{thm:CGMZDbouling}), by showing that the construction of~\cite{CGMZ16} gives a spanner with a sublinear separator (and constant maximum degree); we obtain our result based on a simple criterion that we develop. 

There have been several criteria of \emph{geometric graphs} with sublinear separators: $(\kappa,\lambda)$-thick property~\cite{SW98}, $k$-ply property of the intersection graphs of balls~\cite{MTTV97}, $k$-th nearest neighbor graphs~\cite{MTTV97}, intersection graphs of convex shapes that are tame~\cite{DMN21}; we refer readers to the cited papers for precise definitions of these concepts. While these criteria are fairly simple and generally applicable, it is not clear how to use them, except for the $(\kappa,\lambda)$-thick property~\cite{SW98}, in the context of spanners  where distances between points play a central role. (Sidiropoulos and Sridhar~\cite{SS17} and F{\"u}rer and Kasiviswanathan~\cite{FK12} used the $(\kappa,\lambda)$-thick property~\cite{SW98} in their constructions, but the algorithms become complicated.) Furthermore, except $k$-th nearest neighbor graphs~\cite{ZT09}, these criteria have been used for Euclidean spaces only and it is unclear how they could be useful to find sublinear separators of spanners in the more general setting of doubling metrics.

\subsection{Our Contributions}

Our first contribution is a  simple criterion of graphs with sublinear separators in Euclidean and doubling metrics, that we call \emph{$\tau$-lanky}. Roughly speaking, a graph is $\tau$-lanky if for every ball of radius $r$, there are at most $\tau$ edges of length at least $r$ is cut by the ball. Here, an edge is cut by a ball if exactly one endpoint of the edge is contained in the ball.

\begin{definition}[$\tau$-Lanky]\label{def:remotelyBoundedDeg} We say that a graph $G=(V,E,w)$ in a metric $(X,\delta_X)$ is \emph{$\tau$-lanky} if for any non-negative $r$, and for any ball $\ball_X(x,r)$ of radius $r$ centered at a vertex $x\in V$, there are at most $\tau$ edges of length at least $r$ that are cut by  $\ball_X(x,r)$. 
\end{definition}

We note that in \Cref{def:remotelyBoundedDeg},  we only count edges of length at least $r$. There could be $\Omega(n)$ edges of length less than $r$ that are cut by a ball of radius $r$.  We show in \Cref{subsec:boundeddeg} that lankiness implies the existence of a balanced sublinear separator. 

\begin{theorem}\label{thm:sublinearCharEucliden}Let $(X,\delta_X)$ be the Euclidean or a doubling metrics of constant dimension $d\geq 2$, and $G = (V,E,w)$ be an $n$-vertex graph in $(X,\delta_X)$ such that $G$ is $\tau$-lanky.  Then, $G$ has a balanced separator of size $O(\tau n^{1 - 1/d})$. Furthermore, the separator  can be found in $O(\tau \cdot n)$ expected time.
\end{theorem}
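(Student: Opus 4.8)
The plan is to adapt the classical geometric divide-and-conquer approach of Miller–Teng–Thurston–Vavasis, but use lankiness in place of the $k$-ply condition to control how many edges a sphere can cut. First I would handle the Euclidean case $(X,\delta_X)=\mathbb{R}^d$. The idea is to pick a random sphere and bound the expected number of edges it cuts. Concretely, I would consider the smallest ball $\ball$ containing a constant fraction (say $1/2$) of the $n$ vertices; by a standard packing/pigeonhole argument I can assume, up to rescaling and translating, that this enclosing ball has radius $\Theta(1)$ while a constant fraction of vertices also lies outside a ball of radius $\Theta(1)$. Then I would sample a radius $\rho$ uniformly from an interval $[\tfrac12,1]$ (or a similar constant-width window) and let the separator $S$ be the set of vertices whose incident edges are cut by the sphere $\partial\ball(c,\rho)$, together with one of the two sides to make it balanced. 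Balance is immediate from the choice of $\ball$: both the inside and the outside of any such sphere contain at least a constant fraction of the vertices after removing $S$.

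The heart of the argument is bounding $|S|$. I would split the edges cut by $\partial\ball(c,\rho)$ into \emph{short} edges (length $<\rho_{\min}$, where $\rho_{\min}$ is the left endpoint of the sampling window, i.e.\ a constant) and \emph{long} edges (length $\ge\rho_{\min}$). For long edges, I cannot sample a single $\rho$; instead I would rescale/dilate the whole configuration so that everything sits inside a unit ball, and then recurse the sampling over $\Theta(n^{1/d})$ nested spheres at geometrically—actually arithmetically—spaced radii, or equivalently use a grid of $\Theta(n^{1/d})$ concentric shells. Here lankiness pays off directly: an edge of length $\ell$ is cut by at most $O(\ell/\text{(shell spacing)})$ of the shells, but the $\tau$-lanky hypothesis, applied with $r$ equal to the shell spacing at each scale, says that each shell (a ball of the appropriate radius) cuts at most $\tau$ edges that are long relative to that shell. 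Summing over $\Theta(n^{1/d})$ shells and choosing the best one gives an average of $O(\tau n^{1-1/d})$ cut long edges. For short edges the standard MTTV counting via random great-circle/sphere sampling applies: the probability a short edge of length $\ell$ is cut by the random sphere is $O(\ell)$, and a volume/packing argument (the endpoints of short edges are spread out enough that their total "length footprint" is $O(n^{1-1/d})$ after the normalization) bounds the expected number of short cut edges by $O(n^{1-1/d})$. Taking the better of the two sides and amortizing yields a separator of size $O(\tau n^{1-1/d})$.

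For the doubling case I would replace Euclidean volume arguments with net/packing bounds: in a doubling metric of dimension $d$, a ball of radius $R$ can be covered by $(R/r)^{O(d)}$ balls of radius $r$, and an $r$-net in such a ball has size $(R/r)^{O(d)}$. The enclosing-ball and nested-shell construction goes through verbatim once Euclidean volume is swapped for these covering numbers; the short-edge count now uses that an $r$-separated set inside a radius-$R$ ball has size $(R/r)^{O(d)}$, and the long-edge count uses $\tau$-lankiness exactly as before, since \Cref{def:remotelyBoundedDeg} is already stated for a general metric. The running time claim follows because the whole procedure is: compute an approximate minimum enclosing ball (or a constant-fraction ball) in $O(n)$ time, sample $O(n^{1/d})$ shells, and for each shell inspect only the edges it cuts; by lankiness plus the short-edge bound the total work is $O(\tau\cdot n)$ in expectation after using the random choice of shell to keep each edge from being examined too many times.

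The main obstacle I anticipate is making the normalization and the nested-shell bookkeeping precise: one must ensure that after rescaling so the configuration fits in a unit ball, the shell spacing $\Theta(n^{-1/d})$ simultaneously (i) makes each long edge cross only $O(n^{1/d})\cdot(\text{its length})$ shells, so the total long-edge/shell incidences are $O(\tau\cdot n^{1/d}\cdot n^{1/d}) $—wait, this must come out to $O(\tau n)$ so that the average shell has $O(\tau n^{1-1/d})$—and (ii) keeps the short-edge footprint bounded by $O(n^{1-1/d})$, which requires a packing lemma for the endpoints that uses only the ambient dimension and not any minimum-distance assumption on $V$. Reconciling these two scale requirements, and doing the amortization cleanly enough to also get the $O(\tau n)$ expected running time, is where the real care is needed; everything else is a routine instantiation of the MTTV framework with lankiness substituted for the ply bound.
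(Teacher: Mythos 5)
Your general strategy---pick a ball that splits the vertices nearly evenly, sample a random radius for its boundary sphere, and bound the expected number of cut edges, with long edges controlled by lankiness---is the same skeleton the paper uses. But there is a genuine gap in the short-edge analysis, and you have already half-noticed it in your ``main obstacle'' paragraph: the packing lemma you want for the endpoints of short edges does not exist under the $\tau$-lanky hypothesis. Lankiness controls only edges of length $\geq r$ crossing a ball of radius $r$; it says nothing about how densely vertices can cluster, and the paper explicitly points out that $\Omega(n)$ short edges can cross a ball. So the claim that ``the endpoints of short edges are spread out enough that their total length footprint is $O(n^{1-1/d})$'' is not justified and cannot be derived from lankiness alone---this is exactly where a naive MTTV/$k$-ply substitution breaks.

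The paper's fix uses two facts you did not invoke. First, applying the $\tau$-lanky condition at $r=0$ shows $G$ has maximum degree at most $\tau$, hence at most $\tau n/2$ edges. So for \emph{very} short edges, those of length at most $r\,n^{-1/d}$, the cut probability is $O(n^{-1/d})$ and the total count is $O(\tau n)$, giving expected cut $O(\tau n^{1-1/d})$ without any spatial packing assumption on the vertices. Second, for edges whose length is between $r\,n^{-1/d}$ and $2r$, the paper does a dyadic decomposition by length scale $r_i \approx 2^i r\, n^{-1/d}$, covers the ball $\ball(v,2r)$ by an $r_i/2$-net (of size $O(n/2^{id})$ by the packing bound in the \emph{ambient} metric, not on the vertex set), and applies $\tau$-lankiness to each covering ball of radius $r_i/2$ to conclude each such ball cuts at most $\tau$ edges of length $\geq r_i/2$; summing over scales gives $O(\tau n^{1-1/d})$ in expectation. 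Finally, the ``nested shells'' machinery you introduce for long edges is unnecessary: once the random radius $r^*$ is fixed, lankiness applied with that $r^*$ directly says at most $\tau$ edges of length $\geq r^*$ are cut, an additive $\tau$ term that is absorbed into the bound. You would also want to replace the ``smallest enclosing ball of half the points'' step by the Har-Peled--Mendel ball-selection lemma, which simultaneously gives the balance guarantee and the linear-time running bound you claim.
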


Since  $\tau$-lanky property is \emph{closed under taking subgraphs}, i.e., it is a monotone property,  \Cref{thm:sublinearCharEucliden} implies that any subgraph of $k$ vertices of a $\tau$-lanky graph has a separator of size $O(\tau k^{1-1/d})$. Thus, as a by-product, we obtain monotone sublinear separators. 

Unlike criteria introduced earlier for Euclidean spaces which are mostly based on intersections of objects such as balls or convex shapes, our criterion in \Cref{thm:sublinearCharEucliden} directly related to the edge length of $G$ via the notion of $\tau$-lanky. As a result, our criterion works better for spanners, where distances (and hence edge lengths) play a central role. Furthermore, as we will show later, for most spanners in this paper, it is relatively easy to see that they are $\tau$-lanky for some constant $\tau$ (that depends on the dimension and $\eps$). Thus, \Cref{thm:sublinearCharEucliden} can be seen as a positive answer to \Cref{ques:characterization}. 

Our second contribution is to show that greedy spanners of Euclidean graphs, unit ball graphs, and  point sets with small fractal dimensions are $\tau$-lanky. Our proof is relatively simple; see \Cref{sec:Eucliden} for details. Since $\tau$-lanky is monotone, \Cref{thm:sublinearCharEucliden} implies that:

\begin{theorem}\label{thm:EuclideanUBGsSfractal-Sep} Let $H$ be a subgraph of $k$ vertices of the  greedy $(1+\eps)$-spanner for a graph $G$. Then,
	\begin{enumerate}[nolistsep,noitemsep]
		\item[(1)] $H$ has a balanced separator of size $O(k^{1-1/d})$ if $G$ is a complete Euclidean graph representing a set of points in $\mathbb{R}^d$.
		\item[(2)] $H$ has a balanced separator of size $O(k^{1-1/d})$ if $G$ is a unit ball graph in $\mathbb{R}^d$.
		\item[(3)] $H$ has a balanced separator of size $O(k^{1-1/d_f})$ if $G$ is a complete graph  representing a set of points in $\mathbb{R}^d$ that has a fractal dimension $d_f\geq 2$.
	\end{enumerate}
Furthermore, the separator of $H$ can be found in $O(k)$ expected time given $H$. 
\end{theorem}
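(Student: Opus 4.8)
The plan is to reduce \Cref{thm:EuclideanUBGsSfractal-Sep} to \Cref{thm:sublinearCharEucliden} by establishing that the greedy $(1+\eps)$-spanner is $\tau$-lanky for an appropriate constant (or near-constant) $\tau$, and then invoking the monotonicity of $\tau$-lankiness. Since the excerpt promises that the bulk of the work is "relatively simple," I would organize the argument around three $\tau$-lankiness claims, one for each of the three settings, and then combine each with \Cref{thm:sublinearCharEucliden} applied to the ambient metric (the Euclidean metric on $\mathbb{R}^d$ in parts (1) and (2); the host metric of fractal dimension $d_f$ in part (3)).

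\medskip

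\noindent\textbf{Step 1: greedy Euclidean spanners are $O(1)$-lanky.} Fix a ball $\ball(x,r)$ centered at a point $x$ of the point set, and let $e_1=(u_1,v_1),\dots,e_m=(u_m,v_m)$ be edges of the greedy spanner of length $\geq r$ each cut by $\ball(x,r)$, with $u_i$ inside the ball. The core fact is that the greedy spanner satisfies a \emph{leapfrog property}~\cite{NS07}: the endpoints $u_i$ lie inside a ball of radius $r$, each edge $e_i$ has length $\geq r$, and no one edge $e_i$ can be "shortcut" by a short path through the other edges. Concretely, I would show two things. First, the $u_i$'s (all in $\ball(x,r)$) are pairwise at distance $\Omega(\eps r)$ from one another: if $\|u_i - u_j\|$ were too small, then when the greedy algorithm considered the longer of $e_i,e_j$, the spanner path through the shorter edge plus the edge $(u_i,u_j)$ (or its $(1+\eps)$-approximation already in the spanner) would certify that the longer edge was unnecessary—contradicting that it was added. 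A packing argument in $\mathbb{R}^d$ then bounds the number of such $u_i$ by $(c/\eps)^d = O(1)$. Second, I also need the directions of the $e_i$ emanating from nearby $u_i$'s to be "spread out"; the standard greedy angle argument (two spanner edges from the same vertex, or from nearby vertices, whose longer one has length comparable to the shorter, must make an angle bounded below by a function of $\eps$) handles this, again giving an $O(1)$ bound via an angular packing. Together these yield $m = O_{d,\eps}(1)$, i.e., $\tau = O(1)$.

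\medskip

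\noindent\textbf{Step 2: the unit ball graph case.} Here $G$ is the intersection graph of unit balls in $\mathbb{R}^d$, so an edge $(u,v)$ of $G$ exists only if $\|u-v\|\leq 2$ (identifying each ball with its center). Thus every edge has length at most a constant, and in particular a ball $\ball(x,r)$ with $r > 2$ cuts \emph{no} edge of length $\geq r$; so lankiness is only at stake for $r \leq 2$, where edge lengths are $\Theta(1)$. I would run the same leapfrog/packing analysis as in Step 1, using that greedy still only adds an edge $(u,v)$ of $G$ (so $\|u-v\|\leq 2$) when the current spanner distance exceeds $(1+\eps)\|u-v\|$; the bounded edge-length ratio makes the packing and angle arguments go through with constants depending only on $d$ and $\eps$, giving $\tau = O(1)$ again.

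\medskip

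\noindent\textbf{Step 3: the fractal-dimension case, and assembly.} For part (3), $G$ is the complete graph on a point set $P \subset \mathbb{R}^d$ of fractal dimension $d_f$; the ambient geometry is Euclidean but the relevant "dimension" in the separator bound is $d_f$. The greedy spanner is still $O(1)$-lanky by exactly the argument of Step 1 (that argument never used anything beyond Euclidean packing of the center points, which is dominated by the fractal dimension). Then I apply \Cref{thm:sublinearCharEucliden}—or more precisely the version of it suited to point sets of fractal dimension $d_f$, which the paper establishes—to conclude a separator of size $O(\tau k^{1-1/d_f}) = O(k^{1-1/d_f})$. Finally, in all three cases, monotonicity of $\tau$-lankiness (a subgraph $H$ of a $\tau$-lanky graph is $\tau$-lanky, since deleting vertices and edges cannot increase the count of long edges cut by any ball) lets me pass from the whole greedy spanner to an arbitrary $k$-vertex subgraph $H$, and the "found in $O(k)$ expected time" clause is immediate from the algorithmic part of \Cref{thm:sublinearCharEucliden} applied to $H$.

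\medskip

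\noindent\textbf{Main obstacle.} The crux is Step 1: proving the greedy spanner is $O(1)$-lanky. The subtlety is that the natural packing bound on the \emph{inside} endpoints $u_i$ is not by itself enough—two long edges can share (or nearly share) an inside endpoint. One must invoke the leapfrog/angle property of greedy spanners to rule out many long edges leaving nearby points in nearly the same direction; making the quantitative dependence on $\eps$ clean, and handling the case where $\eps$ is close to $1$ (so the angle lower bound is not large), is where the care is needed. Everything downstream—the UBG and fractal cases, monotonicity, and the running time—then follows routinely from this lankiness bound plus \Cref{thm:sublinearCharEucliden}.
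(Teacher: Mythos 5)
Your overall plan---prove that the greedy spanner is $\tau$-lanky for an $\eps$- and $d$-dependent constant $\tau$, then apply \Cref{thm:sublinearCharEucliden} in the appropriate $(\eta,d)$-packable metric, and use monotonicity of lankiness to pass to subgraphs---is exactly the paper's plan (\Cref{lm:EuiclideanGreedyLanky} plus \Cref{cor:subgraphSepLanky}), and your Steps 2 and 3 track the paper's Sections 4.3 and 4.2 closely. However, the ``First'' claim in your Step 1 is false as stated, and this is more than an imprecision.

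You assert that the inside endpoints $u_i$ of the long cut edges are pairwise $\Omega(\eps r)$-separated, ``because otherwise the spanner path through the shorter edge plus the edge $(u_i,u_j)$ would certify the longer edge unnecessary.'' This is wrong: two long edges $e_i=(u_i,v_i)$ and $e_j=(u_j,v_j)$ can have $u_i=u_j$ (or $u_i,u_j$ arbitrarily close) provided $v_i$ and $v_j$ point in sufficiently different directions, because the path $u_j\to u_i\to v_i$ terminates nowhere near $v_j$ and so cannot shortcut $e_j$. For the greedy rule to reject the longer edge, one needs the \emph{outside} endpoints to be close too, which you do not have at that point. The follow-up ``Second'' angular claim is not a separate patch that can be bolted on afterward; it has to be interleaved with the first. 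The paper does this by covering $\ball(x,r)$ with $O(\eps^{-d})$ balls of radius $\Theta(\eps r)$ (\Cref{lm:EuiclideanGreedyLanky}), and then, for each such small ball $X$ and the far set $Y$, covering directions from a fixed $x\in X$ by $O(\eps^{1-d})$ cones of half-angle $\Theta(\eps)$ and showing at most one $X$-to-$Y$ edge per cone (\Cref{lm:XtoYEdges}, via \Cref{clm:xy-length} and the chain of inequalities leading to \Cref{eq:ty-contradiction}). That per-cone contradiction uses $\diam(X)$ small and the cone angle small \emph{simultaneously}; neither condition alone suffices. You should restructure Step 1 around this joint small-ball/cone decomposition rather than the two independent separation claims. (Relatedly, in the UBG case one must also check that the shortcut path's intermediate hops, e.g.\ $(q,y)$ and $(t,s)$, are actually edges of $G_B$ before applying the spanner guarantee, which the paper verifies and your sketch omits.)
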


\Cref{thm:EuclideanUBGsSfractal-Sep} provides a positive answer to \Cref{ques:greedy}, and completely resolves the open problem raised by Eppstein and Khodabandeh~\cite{EK21}.   We also give a deterministic  linear time algorithm finding a sublinear separator  when the spread of the metric is polynomially large (\Cref{subsec:derandomize}). 

Our third contribution is to show that the classical $(1+\eps)$-spanner of Chan \etal~\cite{CGMZ16}  which is the first spanner for doubling metrics of constant maximum degree, has a sublinear separator. This result resolves the open problem raised by Abam and Har-Peled~\cite{AH10} a decade ago.  We obtain our result by showing that the spanner is $\tau$-lanky for a constant $\tau$. It then follows by \Cref{thm:sublinearCharEucliden} that:

\begin{theorem}\label{thm:CGMZDbouling} There exists a $(1+\eps)$-spanner in doubling metrics of  dimension $d\geq 2$  with a constant maximum degree such that any subgraph $H$ of $k$ vertices of the spanner has a balanced separator of size $O(k^{1-1/d})$. When $d = 1$,  $H$ has a balanced separator of size $O(\log k)$. \\
	Furthermore, the separator of $H$ can be found in $O(k)$ expected time given $H$. 
\end{theorem}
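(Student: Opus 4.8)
The plan is to verify that the $(1+\eps)$-spanner of Chan \emph{et al.}~\cite{CGMZ16} is $\tau$-lanky for some constant $\tau=\tau(d,\eps)$, and then to invoke \Cref{thm:sublinearCharEucliden}; since being $\tau$-lanky is monotone under taking subgraphs, the bound $O(\tau k^{1-1/d})=O(k^{1-1/d})$ for every $k$-vertex subgraph $H$ follows immediately, as does the $O(\tau k)=O(k)$ expected running time. The case $d=1$, where \Cref{thm:sublinearCharEucliden} does not apply, will be handled by a short direct argument.

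First I would recall just enough of the construction of~\cite{CGMZ16}. It is built on a hierarchy of nets $N_0\supseteq N_1\supseteq\cdots\supseteq N_L$ with $N_0=X$, where $N_i$ is a $2^i$-net (any two points of $N_i$ are more than $2^i$ apart, and every point of $X$ lies within $2^i$ of $N_i$) and $L=O(\log\Delta)$. Each spanner edge carries a \emph{scale} $i$, and the two features I need are: (i) an edge of scale $i$ has length $\Theta_{\eps}(2^i)$ and both endpoints lie in $N_{i}$ (after adjusting constants, in $N_{i-O(1)}$); and (ii) the maximum degree is a constant $2^{O(d)}(1/\eps)^{O(d)}$, so each vertex is incident to edges of only $O(1)$ distinct scales. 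Feature (ii) is exactly the qualitative improvement of~\cite{CGMZ16} over the plain net-tree/greedy spanner, and it is what prevents a $\log\Delta$ factor below.

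Now fix a vertex $x$ and a radius $r\ge 0$, and let $\mathcal{E}$ be the set of edges of length at least $r$ that are cut by $\ball(x,r)$; I want $|\mathcal{E}|=O(1)$. Each $e\in\mathcal{E}$ has an endpoint $u\in\ball(x,r)$, and since $e$ has length $\ge r$, its scale $i$ satisfies $2^i\ge c\cdot r$ for a fixed constant $c=c(\eps)$; by (i) this forces $u\in N_j$ with $j=\lceil\log(cr)\rceil-O(1)$, so $2^j=\Omega(r)$. Hence every inside-endpoint of an edge of $\mathcal{E}$ lies in $U:=N_j\cap \ball(x,r)$, and because $U$ is a set of $\Omega(r)$-separated points contained in a ball of radius $r$, a packing argument in a doubling metric of dimension $d$ gives $|U|\le 2^{O(d)}$. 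Combining this with (ii)—each vertex of $U$ is incident to $O(1)$ edges—yields $|\mathcal{E}|\le |U|\cdot 2^{O(d)}=2^{O(d)}$, so the spanner is $\tau$-lanky with $\tau=2^{O(d)}(1/\eps)^{O(d)}$, a constant. For $d\ge 2$, \Cref{thm:sublinearCharEucliden} then gives the claimed balanced separator of size $O(k^{1-1/d})$ in $O(k)$ expected time. For $d=1$ the bound $k^{1-1/d}$ degenerates to $O(1)$, which is too strong—an $n$-point net hierarchy on a line already forces $\Omega(\log n)$ edges across any split—so here I would instead recurse using the lanky property directly: a ball $\ball(x,r)$ cuts only $O(1)$ edges of length $\ge r$, so after deleting those edges and the center it suffices to recurse on the $O(1)$ resulting pieces, deferring the short edges inside the ball to the next level; in a one-dimensional doubling metric halving $r$ roughly halves the number of enclosed points, so the recursion has depth $O(\log k)$ and charges $O(1)$ vertices per level, giving an $O(\log k)$-size separator, again in $O(k)$ expected time.

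The main obstacle is the middle step—extracting from~\cite{CGMZ16} precisely the property that each vertex participates in edges of only constantly many scales. If one had only the generic net-tree features (scale-$i$ edges have length $\Theta(2^i)$ and endpoints in $N_i$), the set $U$ above would still be small by packing, but it would only capture the long cut edges \emph{at a single scale}, and summing over the $\Theta(\log\Delta)$ relevant scales would give $\tau=\Theta(\log\Delta)$ rather than $\tau=O(1)$ (this is exactly the source of the additive $\log\Delta$ term in the separator bound for the greedy spanner in doubling metrics). Converting the constant-degree guarantee of~\cite{CGMZ16} into a clean statement that $U$, together with the bounded degree of its vertices, accounts for \emph{all} of $\mathcal{E}$ at once is the crux of the argument; the packing estimate and the appeal to \Cref{thm:sublinearCharEucliden} are then routine.
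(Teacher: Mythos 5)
Your overall plan—show that the spanner of Chan et~al.\ is $\tau$-lanky for a constant $\tau$ depending only on $d$ and $\eps$, invoke the general separator theorem, and exploit monotonicity of lankiness—is exactly the paper's approach, and your packing estimate plus bounded-degree bookkeeping is the right shape of argument. But there is a genuine gap in the middle step, and you have flagged the wrong thing as the crux.

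The claim you call feature~(i), that an edge of scale $i$ in the spanner $G_2$ has length $\Theta_\eps(2^i)$ with \emph{both} endpoints in the net $N_{i-O(1)}$, is false for $G_2$; it only holds for the intermediate graph $G_1$ built from cross edges. The entire point of Step~2 of the CGMZ construction (the rerouting rule) is to replace the high-level endpoint $w$ of a long directed edge $(v\rightarrow w)$ by a nearby point $u$ living in a much \emph{lower}-level (finer) net $N^{in}_{i_{j-l}}(w)$: by Claim~\ref{clm:reroutedLength}, $\nd_X(u,w)\le \eps\,\nd_X(v,w)$, so the rerouted edge $(v\rightarrow u)$ still has length $\Theta(\nd_X(v,w))$, but $u$ need not belong to any net at the scale of that length. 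Consequently your set $U := N_j\cap\ball(x,r)$ does \emph{not} capture the inside endpoints of all long cut edges: a rerouted edge directed into $\ball(x,r)$ can have its inside endpoint be a low-level net point, outside $N_j$, and the step "$|\mathcal{E}|\le|U|\cdot 2^{O(d)}$" breaks. This is precisely what the paper's proof of Lemma~\ref{lm:G2Lanky} is built to handle: it splits the long cut edges into those copied from $\direct{G_1}$ (for which your reasoning, via Lemma~\ref{lm:longEdgeEndpts}, works) and those arising from rerouting, and the latter require the separate case analysis in Claims~\ref{clm:E2in} and~\ref{clm:E2out}, driven by the structural Lemma~\ref{lm:reroutedInBallLargeEdges} that bounds how many rerouted edges can land on the same vertex. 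You noticed that "accounting for all of $\mathcal{E}$ at once" is delicate, but the actual difficulty is not about distinct scales per vertex—it is that rerouting puts inside endpoints into nets at the wrong scale altogether.

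Two smaller points. First, you need not invent a separate recursion for $d=1$: the paper's internal version of the separator theorem (Theorem~\ref{thm:sublinearChar} and its subgraph form Corollary~\ref{cor:subgraphSepLanky}) already gives an $O(\tau\log k)$ separator when $d=1$, because the geometric sum over scales fails to converge and yields $\log n$ rather than $n^{1-1/d}$. Second, your proposed one-dimensional recursion is underspecified—after removing the $O(1)$ long cut edges and the center, the two sides of the ball can still be connected by short cut edges, so "recursing on the pieces" does not obviously produce a separator of logarithmic size.
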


Finally, we study the greedy $(1+\epsilon)$-spanner for doubling metrics. By taking $r = 0$, the $\tau$-lanky property implies that the maximum degree of a $\tau$-lanky graph $G$ is at most $\tau$. However, Smid~\cite{Smid09} showed that the greedy spanner could have a maximum degree up to $n-1$, and thus, the lankiness parameter $\tau$ of the greedy spanner is $n-1$ in the worst case. On the other hand, the spread of the metric by Smid~\cite{Smid09} is $\Delta = 2^{\Omega(n)}$. Therefore, it is natural to ask: Does the greedy spanner for a doubling metric with a \emph{subexponential spread} have a sublinear separator? 
 We answer this question by introducing another property in our criterion: $\kappa$-thinness. A graph $G$ is $\kappa$-thin if for any  two sets of diameter at most $r$ whose distance is at least $r$ has $\kappa$ edges between them; see \Cref{def:kappa-separated-connect} for a formal definition. We then show (\Cref{thm:anotherSublinearSeparator}) that  any $\tau$-lanky and $\kappa$-thin graph in doubling metric of dimension $d$ has a separator of size $O(\kappa n^{1-1/d} + \tau)$. Indeed, our \Cref{thm:anotherSublinearSeparator} is stronger as it holds for a weaker notion of lankiness which does not impose the maximum degree of the graph by $\tau$. The size of the separator now depends on $\tau$ additively instead of multiplicatively as in \Cref{thm:sublinearCharEucliden}. By showing that the greedy spanner for doubling metrics is $\tau$-lanky and $\kappa$-thin for $\tau = O(\log \Delta)$ and $\kappa = O(1)$, we have that the greedy spanner of doubling metrics has a sublinear separator for any subexponential spread, as described in the following theorem.

\begin{restatable}{theorem}{StrongDoubilngSeparator}
	\label{thm:strong_doubling_separator}
	Let $(X, \nd_X)$ be a doubling metric of $n$ points with a constant doubling dimension $d$. Let $G$ be a $(1 + \eps)$-greedy spanner of $(X, \nd_X)$. $G$ has a balanced separator of size $O(n^{1-1/d} + \log(\Delta))$ when $d\geq 2$ and of size $O(\log(n) + \log(\Delta))$ when $d = 1$. 
\end{restatable}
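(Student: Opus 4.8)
The plan is to derive this as a corollary of the second criterion, Theorem~\ref{thm:anotherSublinearSeparator}, which states that a $\tau$-lanky and $\kappa$-thin graph in a doubling metric of dimension $d$ has a balanced separator of size $O(\kappa n^{1-1/d} + \tau)$ (and of size $O(\kappa \log n + \tau)$ when $d=1$). So the entire work is to show that the greedy $(1+\eps)$-spanner $G$ of an $n$-point doubling metric $(X,\nd_X)$ of doubling dimension $d$ is $\tau$-lanky with $\tau = O(\log\Delta)$ and $\kappa$-thin with $\kappa = O(1)$; plugging these into Theorem~\ref{thm:anotherSublinearSeparator} immediately yields the claimed separator sizes $O(n^{1-1/d} + \log\Delta)$ for $d\geq 2$ and $O(\log n + \log\Delta)$ for $d=1$. (I should double-check whether the weak form of lankiness used in Theorem~\ref{thm:anotherSublinearSeparator} is what I verify; since that theorem only needs the weaker notion that does not bound the max degree, verifying the stronger $\tau$-lanky property is more than enough.)

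First I would establish $\kappa$-thinness with $\kappa = O(1)$. Let $A$ and $B$ be two point sets each of diameter at most $r$ with $\nd_X(A,B)\geq r$, and suppose $G$ contains two edges $(a_1,b_1)$ and $(a_2,b_2)$ with $a_i\in A$, $b_i \in B$. Each such edge has length $\Theta(r)$ (between $r$ and $3r$, say). The key is the \emph{leapfrog property} of greedy spanners, which in the doubling setting asserts that a set of greedy-spanner edges cannot be ``too parallel and too close'': if $(a_1,b_1)$ was added after $(a_2,b_2)$, then the path in the spanner between $a_1$ and $b_1$ at the time of insertion had length $>(1+\eps)\,\nd_X(a_1,b_1)$, yet if many such edges existed one could route $a_1 \to a_2 \to b_2 \to b_1$ using short detours of total length $O(r)\cdot(\text{number of edges})$, and a packing argument in the doubling metric bounds how many near-parallel length-$\Theta(r)$ edges can coexist with endpoints confined to two sets of diameter $O(r)$ at distance $\Theta(r)$. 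Concretely, the endpoints $a_i$ all lie in a ball of radius $O(r)$, so by the doubling property only $2^{O(d)}$ of them can be pairwise $\Omega(\eps r)$-separated; and two greedy edges whose four endpoints are pairwise within $\eps r/c$ contradict the greedy insertion rule (the later edge would be shortcut through the earlier one). This gives $\kappa = 2^{O(d)} = O(1)$, using here that $d$ is constant.

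Next I would establish $\tau$-lankiness with $\tau = O(\log\Delta)$. Fix a vertex $x$ and radius $r$; I must bound the number of spanner edges of length $\geq r$ cut by $\ball_X(x,r)$. Group these edges by length scale: edge $e$ is in class $j$ if $2^j r \le w(e) < 2^{j+1} r$. Since all edge lengths lie in $[\,\delta_{\min},\ \diam(X)\,]$ and the spread is $\Delta$, there are only $O(\log\Delta)$ nonempty classes. Within a single class $j$, all relevant edges have one endpoint in $\ball_X(x,r)$ — hence in a common ball of radius $r$ — and length $\Theta(2^j r)$; I claim only $2^{O(d)}$ of them exist. Indeed, let $e=(u,v)$ and $e'=(u',v')$ be two such edges with $u,u' \in \ball_X(x,r)$. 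If additionally their far endpoints $v,v'$ were within distance $\eps\cdot 2^j r / c$ of each other, then $\nd_X(u,u') \le 2r$ and $\nd_X(v,v')$ small would let the later-inserted edge be shortcut by the earlier one along a path of length $(1+\eps)\,w(e') - \Omega(\eps 2^j r)$, contradicting greediness (for $r$ small relative to $2^j r$, i.e.\ once the class index $j$ exceeds a constant; the $O(1)$ lowest classes contribute only $2^{O(d)}$ edges in total by a direct packing bound). So the far endpoints of class-$j$ edges are $\Omega(\eps 2^j r)$-separated while lying in a ball of radius $O(2^j r)$ around $x$, and the doubling property caps their number at $2^{O(d)}$. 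Summing $2^{O(d)} = O(1)$ over the $O(\log\Delta)$ classes gives $\tau = O(\log\Delta)$.

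The main obstacle will be the careful execution of the leapfrog/shortcutting argument in step two and step three: one must be precise about which edge was inserted first, track the spanner path that existed at that moment, and bound the detour lengths so that the greedy inequality $\nd_{H}(u,v) > (1+\eps)\,\nd_X(u,v)$ is genuinely violated — this requires choosing the separation threshold (the constant $c$ above) as a function of $\eps$ and $d$, and handling the $O(1)$ smallest length classes separately by a crude packing bound since the shortcut argument only kicks in once the scales are well-separated. Everything else is bookkeeping: once $\kappa=O(1)$ and $\tau=O(\log\Delta)$ are in hand, Theorem~\ref{thm:anotherSublinearSeparator} finishes the proof, including the $d=1$ case where its bound reads $O(\log n + \log\Delta)$.
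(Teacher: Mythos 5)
Your outline matches the paper's exactly: establish that the greedy spanner is $\kappa$-thin for constant $\kappa$ and (weakly) $\tau$-lanky for $\tau = O(\log\Delta)$, then invoke Theorem~\ref{thm:anotherSublinearSeparator}. The paper's execution packages the shortcut argument into one clean lemma --- at most one edge between a $(4/\eps)$-separated pair (Lemma~\ref{lm:WSPDSpEdges}, via Fact~\ref{fact:path-greedy}) --- extends it by covering to $\beta$-separated pairs (Corollary~\ref{cor:WSPDSEdges}), and then applies that corollary uniformly across \emph{all} $O(\log\Delta)$ length scales in the lankiness proof (Lemma~\ref{lm:DoublingGreedyLanky}), avoiding any case split between ``low'' and ``high'' scales.

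There is one concrete gap in your write-up. For the $O(\log(1/\eps))$ lowest length classes in the $\tau$-lankiness argument you invoke ``a direct packing bound,'' but this cannot work as stated: in a doubling metric the greedy spanner can have maximum degree $n-1$ (Smid's example, cited in the paper), so you cannot bound the number of edges in a class simply by counting endpoints within a ball. You still need the greedy shortcut argument at those scales. The fix is exactly the paper's: for class $j$ with $2^j = \eps^{-O(1)}$, cover both $\ball(x,r)$ and the annulus at distance $\Theta(2^j r)$ by balls of radius $\Theta(\eps r)$, giving $\eps^{-O(d)}$ covering balls on each side; observe that any pair of these is $\Omega(1/\eps)$-separated, hence carries at most one spanner edge. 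Once that is repaired your argument goes through (also note that $\Omega(\eps r)$-separation inside a radius-$O(r)$ ball yields $\eps^{-O(d)}$ points, not $2^{O(d)}$ --- harmless here since $\eps,d$ are constants, but the stated $\kappa$ should read $\eps^{-O(d)}$, matching the paper's $\eps^{-2d}$).
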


\subsection{Algorithmic Implications}

One of the most popular applications of the spanners is to serve as overlay networks in  wireless networks~\cite{RW04,BDS04,SS10}. In these applications, we are often interested in solving computational problems in the spanner, such as shortest path~\cite{LWF03,GZ05}, independent set~\cite{Basagni2001,MM09}, dominating sets~\cite{MM09,PCA18}, connected dominating set~\cite{YWWY13}. The existence of sublinear separators in spanners implies that we can design provably good algorithms for these problems. 

In graph-theoretic terms, the spanners in \Cref{thm:EuclideanUBGsSfractal-Sep} and \Cref{thm:CGMZDbouling} have \emph{polynomial expansion}~\cite{DN16}. Har-Peled and Quanrud~\cite{HQ17} showed that many \emph{unweighted} optimization problems such as independent set, vertex cover, dominating set, connected dominating set, packing problems, admit a polynomial-time approximation scheme (PTAS) in graphs with polynomial expansion. Thus, all of these problems admit PTAS in our spanners. However, if the graphs have weights on vertices, the algorithm of Har-Peled and Quanrud~\cite{HQ17}, which is based on local search, does not have any guarantee on the approximation ratio. Indeed, designing a PTAS for vertex-weighted NP-hard problems in graphs with polynomial expansion remains an open problem~\cite{Dvo18}. We remark that planar graphs and minor-free graphs on which these problems were extensively studied~\cite{Baker94,Eppstein00,DH05,DHK05} are special cases of graphs with polynomial expansion. 

On the other hand, Dvo\v{r}\'{a}k~\cite{Dvo18} showed that if a graph has a polynomial expansion and \emph{bounded maximum degree}, then vertex-weighted optimization  problems considered above admit a PTAS. This implies that these problems admit a PTAS on the spanners considered in \Cref{thm:EuclideanUBGsSfractal-Sep} and \Cref{thm:CGMZDbouling} even when we have weights on the vertices, as these spanners have bounded maximum degree.  As approximating the vertex-weighted problems has been studied in many wireless network applications~\cite{SNS14,WB08,Basagni2001,Huang13}, our results could be of interest in these settings.

\section{Preliminaries}\label{sec:prelim}

We denote an edge-weighted graph by $G= (V,E,w)$ where $V$ is the vertex set, $E$ is the edge set, and $w: E(G)\rightarrow \mathbb{R}^+$ is the weight function on the edge set.  We denote a minimum spanning tree of $G$ by $\mst(G)$. When the graph is clear from context, we use $\mst$  as a shorthand for $\mst(G)$.  The {\em distance} between two vertices $p,q$ in $G$, denoted by $\nd_G(p,q)$, is the minimum weight of a path between them in $G$. The diameter of $G$,
denoted by $\dm(G)$, is the maximum pairwise distance in $G$. 

For a subset of vertices $A\subseteq V$, we denote by $G[A]$ the subgraph induced by $A$. We denote by $G-\{e\}$ a subgraph of $G$ obtained from $G$ by removing the edge $e$. For a subgraph $H$ of $G$, we define  $w(H) = \sum_{e\in E(H)}w(e)$ to be the total weight of edges in $H$. 	

A \emph{$c$-balanced separator} of a graph $G$ is a subset $S$ of $V$ such that every connected component of $G[V\setminus S]$ has at most $c|V|$ vertices. If $c$ is a constant, we simply refer to $S$ as a separator. 

In this paper, we study the spanner obtained by the (path) greedy algorithm. The following fact is well known for greedy spanners (see, e.g.,  Fact 6.1 in the full version of ~\cite{LS19})

\begin{fact}\label{fact:path-greedy}Let $H$ be a $t$-spanner of a graph $G = (V,E,w)$, then  $t\cdot w(x, y) \leq \nd_{H-e}(x,y)$ for any edge $e=(x,y)$ in $H$.
\end{fact}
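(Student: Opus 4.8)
The plan is to argue directly from the greedy selection rule, with no induction needed; here $H$ is understood to be the greedy $t$-spanner of $G$, as in the sentence preceding the Fact. Fix an edge $e=(x,y)$ of $H$, and write $H_{<e}$ for the partial spanner the greedy algorithm has built at the moment $e$ is considered. Breaking ties by an infinitesimal perturbation of the weights consistent with the processing order, I may assume all edge weights are distinct, so that $H_{<e}$ is exactly the set of edges of $H$ of weight strictly less than $w(e)$. The only property of the greedy rule I need is that, since $e$ was added, $\nd_{H_{<e}}(x,y) > t\cdot w(e)$. If $x$ and $y$ lie in different components of $H-e$ the statement is immediate, so let $P$ be a shortest (hence simple) $x$-$y$ path in $H-e$; the goal is to prove $w(P) > t\cdot w(e)$, which gives the Fact since $w(e)=w(x,y)$.

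First I would dispose of the easy case. Let $f$ be a heaviest edge on $P$. If $w(f) < w(e)$, then every edge of $P$ has weight strictly below $w(e)$ and hence lies in $H_{<e}$; thus $P$ is an $x$-$y$ path inside $H_{<e}$, so $w(P) \ge \nd_{H_{<e}}(x,y) > t\cdot w(e)$.

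The substantive case is $w(f) > w(e)$ (note $f\neq e$ since $e\notin P$). Write $f=(a,b)$ and decompose $P = P_1 \cdot f \cdot P_2$, where $P_1$ runs from $x$ to $a$ and $P_2$ from $b$ to $y$. Every edge of $P_1$ and $P_2$ has weight $< w(f)$, and $w(e) < w(f)$ as well, so $P_1$, $P_2$, and the edge $e$ all belong to $H_{<f}$, the partial spanner just before $f$ is processed. Routing from $a$ to $x$ along $P_1$, then from $x$ to $y$ across the edge $e$, then from $y$ to $b$ along $P_2$, all inside $H_{<f}$, yields
\[
\nd_{H_{<f}}(a,b) \;\le\; w(P_1) + w(e) + w(P_2) \;=\; w(P) - w(f) + w(e).
\]
But $f$ was added by the greedy rule, so $\nd_{H_{<f}}(a,b) > t\cdot w(f)$; combining the two bounds gives $w(P) > (t+1)\,w(f) - w(e)$, and since $w(f) > w(e)$ the right-hand side exceeds $(t+1)\,w(e) - w(e) = t\cdot w(e)$. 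In either case $w(P) > t\cdot w(x,y)$, which is the claim.

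The main obstacle is not a hard estimate but spotting the right move: the greedy condition should be invoked for the heaviest edge $f$ on the detour, not for $e$ directly, after which the crude triangle-inequality route that passes through $e$ itself inside $H_{<f}$ does the rest — the slack factor $(t+1)$ beats $t$ precisely because $f$ is strictly heavier than $e$. The only bookkeeping care needed is the meaning of $H_{<e}$ and $H_{<f}$ when weights coincide, which the perturbation reduction handles cleanly.
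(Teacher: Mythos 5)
Your argument is correct, and it is the standard proof of this fact; the paper itself does not prove it but only cites Fact~6.1 of the full version of~\cite{LS19}, so there is no in-paper argument to compare against. You correctly read the (loosely stated) hypothesis: as literally written the claim is false for an arbitrary $t$-spanner, and your proof rightly uses only the greedy insertion rule. Both cases check out: when every edge of the detour $P$ is lighter than $e$ you invoke the greedy test for $e$ itself, and otherwise you invoke it for the heaviest edge $f$ of $P$ and route $a\to x\to y\to b$ through $e$ inside $H_{<f}$, giving $w(P) > (t+1)w(f)-w(e) \geq t\cdot w(e)$.

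The only soft spot is the tie-breaking. Perturbing the weights changes the input to the greedy algorithm, and when a greedy comparison $\nd_{H_{\mathrm{cur}}}(u,v)$ versus $t\cdot w(u,v)$ is exactly tight the perturbed run can produce a different spanner, so ``WLOG distinct weights'' deserves a word of justification. A cleaner route that avoids perturbation entirely: take $f$ to be the edge of $P$ that the greedy algorithm \emph{processes last} (rather than the heaviest). Then all other edges of $P$ lie in $H_{<f}$ by choice of $f$; if $e$ is also processed before $f$ your Case~2 computation goes through verbatim with $w(f)\geq w(e)$ (which still yields $w(P) > t\cdot w(e)$, the first inequality being strict), and if every edge of $P$ is processed before $e$ you are in Case~1. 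This covers equal weights with no modification of the input.
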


We denote a metric space $X$ with a distance function $\nd_X$ by $(X, \nd_X)$. The diameter of a point set is the maximum distance between points in the set. A ball $\ball_X(p, r)$ centered at $p$ of radius $r$ is the set of points  within distance at most $r$ from the center $p$. We say that a ball $\ball_X(p, r)$ cuts an edge if exactly one endpoint of the edge is in $\ball_X(p, r)$. When the metric is clear from the context, we drop the subscript $X$ in the notation $\ball_X(p, r)$. 
We say that a set of points $P$ is \emph{$r$-separated} with $r > 0$ if the distance between any two points in $P$ is at least $r$.  

Given a set of points $P \in (X,\delta_X)$, we say that a subset $N\subseteq P$ is an \emph{$r$-net} of $P$ if $N$ is $r$-separated and for every point $y\in P$, there exists a point $x\in N$ such that $\delta_X(x,y)\leq r$. 

Given two sets of points $A$ and $B$, we denote by $\nd_X(A, B) = \min\limits_{a \in A, b \in B}{\nd_X(a, b)}$ the distance between $A$ and $B$.

\begin{definition}[$c$-Separated Pair]
	A pair of subsets $(A, B)$ in a metric $(X,\nd_X)$ is a \emph{$c$-separated pair} for some  $c > 0$  if the distance between $A$ and $B$ is at least $c$ times the maximum diameter of $A$ and $B$.
\end{definition}

If $(X,\nd_X)$ is the Euclidean metric, we use $|uv|$ to denote the distance between $u$ and $v$, and use $|A,B|$ to denote the distance between two point sets $A$ and $B$. Another metric studied in this paper is doubling metrics.

\begin{definition}[Doubling Metric]
	\label{def:doubling}
	A metric space $(X, \nd_X)$ has doubling constant $\lambda$ if any ball of radius $R$ can be covered by at most $\lambda$ balls of radius $\frac{R}{2}$. The number $\ddim = \log_2{\lambda}$ is called the doubling dimension of $(X, \nd_X)$.
\end{definition}

 It is well-known that the metric induced by any point set in $\mathbb{R}^d$ has doubling dimension $O(d)$. Doubling metrics and the Euclidean metric of dimension $d$ satisfy the following packing bound.

\begin{lemma}\label{lm:packing} Let $\ball(p,r)$ be a ball of radius $R$ in a Euclidean/doubling metric of dimension $d$, and $Y$ an $Y \subseteq \ball(p,x)$ be an $r$-separated subset for some $r\leq R$, then $|Y| = 2^{O(d)}\left(\frac{R}{r}\right)^d$.
\end{lemma}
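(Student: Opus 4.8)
The goal is the packing bound $|Y| = 2^{O(d)}(R/r)^d$ for an $r$-separated subset $Y$ of a ball of radius $R$ in a Euclidean or doubling metric of dimension $d$. The plan is to iterate the doubling property from scale $R$ down to scale $r/2$. Starting from the single ball $\ball(p,R)$, apply the definition of doubling metric: this ball is covered by at most $\lambda = 2^d$ balls of radius $R/2$; covering each of those by $\lambda$ balls of radius $R/4$, after $i$ rounds $\ball(p,R)$ is covered by at most $\lambda^i = 2^{di}$ balls of radius $R/2^i$. Choose $i = \lceil \log_2(R/r) \rceil + 1$ so that $R/2^i < r/2$. Then $\ball(p,R) \supseteq Y$ is covered by at most $\lambda^i$ balls of radius less than $r/2$.

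The key observation is then a pigeonhole/diameter argument: each covering ball of radius $<r/2$ has diameter $<r$, so it can contain at most one point of $Y$, since any two points of $Y$ are at distance at least $r$. Hence $|Y|$ is at most the number of covering balls, which is $\lambda^i = 2^{d(\lceil \log_2(R/r)\rceil + 1)} \le 2^{d(\log_2(R/r) + 2)} = 2^{2d}(R/r)^d = 2^{O(d)}(R/r)^d$. For the Euclidean case, one can either invoke the fact (stated in the excerpt) that $\mathbb{R}^d$ with the Euclidean metric has doubling dimension $O(d)$ and run the same argument, or give the standard volume argument directly: balls of radius $r/2$ centered at the points of $Y$ are pairwise disjoint (since $Y$ is $r$-separated) and all contained in $\ball(p, R + r/2) \subseteq \ball(p, 2R)$ (using $r \le R$), so comparing volumes gives $|Y| \cdot (r/2)^d \le (2R + r/2)^d \le (3R)^d$, hence $|Y| \le 6^d (R/r)^d = 2^{O(d)}(R/r)^d$.

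There is no real obstacle here; the only mild subtlety is the rounding in the choice of $i$ and making sure the "at most one point per covering ball" step is stated cleanly — a covering ball $\ball(q,\rho)$ with $\rho < r/2$ satisfies $\diam(\ball(q,\rho)) \le 2\rho < r$, so two distinct points inside it would be within distance $<r$, contradicting $r$-separation. I would also remark (for use of the lemma in the sequel) that the bound holds even when $Y$ is only assumed to lie in $\ball(p,R)$ rather than being a net, and that when $R/r < 1$ the statement is vacuous in the sense that $|Y| \le 1$. I would write the doubling-metric proof as the main argument since it subsumes the Euclidean case, and mention the volume argument only as an alternative.
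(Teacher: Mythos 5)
Your proof is correct and gives the standard argument; note that the paper itself states this packing bound without proof, treating it as a folklore fact (it is essentially a restatement of \Cref{def:doubling} iterated $\lceil \log_2(R/r)\rceil+1$ times, plus the observation that a ball of diameter less than $r$ can contain at most one point of an $r$-separated set). Both of your arguments—the iterated-doubling chain for general doubling metrics and the volume comparison for $\mathbb{R}^d$—are exactly what one would write out if the paper had included a proof, and the bookkeeping (choice of $i$, the $2^{O(d)}$ constant absorbing the rounding) is handled correctly.
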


We also consider the notion of \emph{fractal dimension} introduced by Sidiropoulos and Sridhard~\cite{SS17}. 

\begin{definition}[Fractal Dimension]
	\label{def:fractal}
	Let $P$ be a set of points in $\mathbb{R}^d$. $P$ has a \emph{fractal dimension} $d_f$ if and only if for any two positive numbers $r > 0, R \geq 2r$, any point $p \in \mathbb{R}^d$, and any $r$-net $N$ of $P$, $|N \cap \ball(p, R)| = O((R/r)^{d_f})$.  
\end{definition}

\section{Criteria of Graphs with Sublinear Separators}

In this section, we provide two criteria of graphs with sublinear separators in low dimensional Euclidean and doubling metrics: one for bounded degree graphs (\Cref{subsec:boundeddeg}), and one for graphs with high vertex degrees (\Cref{subsec:highdeg}).  Our proof uses the following lemma due to Har-Peled and Mendel~\cite{HPM06}.

\begin{lemma}[Lemma 2.4~\cite{HPM06}]\label{lm:smallBall} Let $P$ be a set of $n$ points in a metric $(X,\delta_X)$ with doubling constant $\lambda_X$. There exists a point $v\in P$ and a radius $r \geq 0$ such that (a)  $|\ball_X(v,r)\cap P| \geq \frac{n}{2\lambda_X}$ and  (b) $|\ball_X(v,2r)\cap P|\leq \frac{n}{2}$. Furthermore, $v$ and $r$ can be found in $O(\lambda_X^3 n)$ expected time. 
\end{lemma}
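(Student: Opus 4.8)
The plan is to prove this by an extremal argument powered by the packing property of $(X,\delta_X)$. Set
\[
\rho \;:=\; \min\bigl\{\, t\ge 0 \;:\; \exists\, u\in P \text{ with } |\ball_X(u,t)\cap P|\ge n/2 \,\bigr\},
\]
and let $v\in P$ be a point attaining this minimum at radius $\rho$ (the minimum is attained since $P$ is finite; the usual nuisance that many points may sit at exactly the same distance from a center I would handle by an infinitesimal generic perturbation of the distances, or by replacing closed balls by open balls wherever the boundary sphere matters). The output will be the point $v'$ produced below together with a radius $r$ just under $\rho/2$.

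Property (b) is then essentially free: by minimality of $\rho$, every ball of radius strictly less than $\rho$ contains fewer than $n/2$ points of $P$, so any $r<\rho/2$ gives $|\ball_X(v',2r)\cap P|<n/2$. The substance is property (a) — that the inner ball $\ball_X(v',r)$ still holds a $\tfrac1{2\lambda_X}$-fraction of the points. For this I would use that $S:=\ball_X(v,\rho)\cap P$ has at least $n/2$ points and diameter at most $2\rho$, and cover it by balls of radius $r$, for a value $r$ just below $\rho/2$, centered at points of $P$: take an $r$-net $N$ of $S$, so $N\subseteq P$ and, by the packing bound of \Cref{lm:packing} (applied with radius $\rho$ and separation $r\approx\rho/2$), $|N|$ is bounded by a function of the doubling constant; pigeonholing the $\ge n/2$ points of $S$ over the balls $\{\ball_X(x,r):x\in N\}$ yields a point $v'\in N\subseteq P$ with $|\ball_X(v',r)\cap P|\ge \tfrac{n}{2\lambda_X}$, and since $2r<\rho$ property (b) is preserved.

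The main obstacle is the quantitative bookkeeping in this last step. Passing from a ball to a sub-ball of half the radius, and simultaneously forcing the center of the heavy sub-ball to be a point of $P$ while keeping the doubled radius below the threshold $\rho$, each naively costs a factor of the doubling constant, so to land the stated constant $\tfrac1{2\lambda_X}$ one must arrange that only a single halving of the doubling property is spent — i.e. that the net cardinality $|N|$ is bounded by $\lambda_X$ rather than by $\lambda_X^2$ — which is exactly where the precise packing estimate (\Cref{lm:packing}) and the right choice of net resolution (and possibly of the minimality threshold itself) are needed. For the algorithmic claim I would make the proof constructive by random sampling: pick a uniformly random $p\in P$, compute in $O(n)$ time (linear-time selection) the distance from $p$ to its $\lceil n/2\rceil$-th nearest neighbour, and test in $O(n)$ time whether $p$ with radius slightly below half that distance satisfies both inequalities; repeat until success. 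The existence argument above in fact exhibits $\Omega(n/\lambda_X^{O(1)})$ points that work as centers — all points of the heavy sub-ball, whose $\lceil n/2\rceil$-nearest-neighbour radius is at least $\rho$ by minimality and whose $\tfrac{n}{2\lambda_X}$-radius ball is heavy by the re-centering — so a random $p$ succeeds with probability $\Omega(1/\lambda_X^{O(1)})$, giving $O(\lambda_X^{O(1)})$ iterations in expectation and the claimed $O(\lambda_X^3\, n)$ expected running time, the $\lambda_X$-powers absorbing the cost of certifying a candidate via the covering computation.
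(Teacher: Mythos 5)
First, a point of reference: the paper does not prove \Cref{lm:smallBall} at all --- it is imported verbatim from Har-Peled and Mendel~\cite{HPM06} --- so your proposal has to stand on its own. Your overall strategy is the right one and matches the standard argument in spirit: take the minimal radius $\rho$ at which some ball centered at a point of $P$ captures $n/2$ points, get property (b) essentially for free from minimality once $2r<\rho$, recover property (a) by covering the heavy ball with smaller balls centered in $P$ and pigeonholing, and make the statement algorithmic by random sampling of a center plus an $O(n)$ selection-based test per candidate.

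The genuine gap is the one you flagged yourself and then did not close: the constant in (a). Covering $S=\ball_X(v,\rho)\cap P$ by the balls $\{\ball_X(x,r):x\in N\}$ for an $r$-net $N$ of $S$ with $r$ just under $\rho/2$ bounds $|N|$ only through the \emph{packing} estimate (\Cref{lm:packing}), and a $(\rho/2)$-separated subset of a radius-$\rho$ ball can have $\Theta(\lambda_X^2)$ points (one must halve at least twice before a covering ball has diameter below the separation). The pigeonhole therefore yields only $|\ball_X(v',r)\cap P|\geq n/(2\lambda_X^2)$, not the claimed $n/(2\lambda_X)$. The way to spend a single halving is to use the doubling property in its \emph{covering} form directly on the induced finite metric $(P,\delta_X|_P)$ --- which is exactly how the paper invokes the lemma, on $(V,\delta_V)$ --- so that $\ball(v,\rho)$ is covered by $\lambda_X$ balls of radius $\rho/2$ whose centers are already points of $P$ and no re-centering loss is incurred; a net-plus-packing route inherently loses the second factor of $\lambda_X$. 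Even with that fix one must still resolve the boundary case $2r=\rho$, since minimality of $\rho$ only controls radii strictly below $\rho$; your "infinitesimal perturbation" gesture is where that fiddly but standard step is being deferred. To be fair, the weaker bound $n/(2\lambda_X^2)$ that your argument actually delivers would suffice everywhere in this paper (it only degrades the balance constant in \Cref{lm:shortEdgesCut} and \Cref{thm:sublinearChar}), but it does not prove the lemma as stated. The randomized algorithm and its $O(\lambda_X^{O(1)}n)$ expected-time analysis are plausible as sketched, modulo the same constant-tracking in the success probability.
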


Before presenting these criteria in details, we introduce a notion of a \emph{packable metric space} (\Cref{def:packabeMetric}), which captures the packing bound (\Cref{lm:packing}) in both Euclidean and doubling metrics.  The notion of packable metric is very similar to the notion of fractal dimension in \Cref{def:fractal}; the main difference is that we do not restrict a packable metric  to being a submetric of an Euclidean metric. 

\begin{definition}[$(\eta,d)$-Packable Metric Space]\label{def:packabeMetric}
	A metric $(X,\delta_X)$ is \emph{$(\eta,d)$-packable} if for any $r\in (0,1]$ and any $r$-separated set $P\subseteq X$ contained in a \emph{unit ball}, $|P| \leq \eta\left(\frac{1}{r}\right)^d$.  \\
	We call $d$ the \emph{packing dimension} of the metric and $\eta$ the \emph{packing constant} of the metric.  
\end{definition}

A folklore result is that the doubling dimension of the Euclidean metric of dimension $d$ is $O(d)$. However, since the dimension of the metric will appear in the exponent of the separator, treating Euclidean metrics as a special case of doubling metrics would result in a \emph{polynomial loss} in the size of the separator. A strength of our technique is that, we only need the packing bound (\Cref{lm:packing}) in the construction of the separator. This property allows us to unify the construction of both doubling and Euclidean metrics via packable metrics.

By setting $R = 1$ in \Cref{lm:packing}, the Euclidean metric and doubling metrics of dimension $d$ are both $(\eta,d)$-packable for some $\eta = 2^{O(d)}$. We observe that the packing \Cref{lm:packing} also holds for $(\eta,d)$-packable metric space as well.

\begin{observation}\label{obs:packingPackable} Let $B$ be any ball of radius $R$ in a  $(\eta,d)$-packable metric space $(X,\delta_X)$, and $P\subseteq B$ be any $r$-separated set for some $r$ such that $0< r \leq R$. Then $|P| \leq \eta \left(\frac{R}{r}\right)^d$.
\end{observation}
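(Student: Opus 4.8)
\textbf{Proof plan for \Cref{obs:packingPackable}.}
The plan is to reduce the general-ball statement to the unit-ball statement in \Cref{def:packabeMetric} by a rescaling argument, being careful that the definition of packability only quantifies over radii $r \in (0,1]$ and over sets contained in a \emph{unit} ball. First I would dispose of the trivial degenerate cases: if $R = 0$ the ball $B$ is a single point, and since $r > 0$ any $r$-separated subset has at most one element, so $|P| \le 1 \le \eta(R/r)^d$ provided $\eta \ge 1$ (which we may assume, since the packing constant of any nontrivial metric is at least $1$). Similarly if $R/r = 1$ the bound $|P| \le \eta$ is immediate from the case below, so assume $R > 0$.

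The main step is the rescaling. Let $(X, \delta_X)$ be $(\eta,d)$-packable, let $B = \ball_X(c, R)$, and let $P \subseteq B$ be $r$-separated with $0 < r \le R$. Define a new metric $\delta'_X := \delta_X / R$ on the same point set $X$. Then $(X, \delta'_X)$ is still $(\eta,d)$-packable: the property in \Cref{def:packabeMetric} is scale-invariant because it quantifies over \emph{all} radii $r' \in (0,1]$ and \emph{all} separated sets in a unit ball, and scaling the metric by a constant merely relabels which separation parameter corresponds to which configuration. (If one prefers to avoid even this remark, one can instead observe directly that a unit ball in $\delta'_X$ is exactly the ball $\ball_X(\cdot, R)$ in $\delta_X$, and an $r'$-separated set in $\delta'_X$ is exactly an $(r'R)$-separated set in $\delta_X$.) Now in the metric $\delta'_X$, the ball $B$ becomes the unit ball $\ball_{\delta'_X}(c,1)$, and $P \subseteq B$ is $(r/R)$-separated in $\delta'_X$. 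Since $0 < r/R \le 1$, \Cref{def:packabeMetric} applies directly and yields
\[
|P| \;\le\; \eta\left(\frac{1}{r/R}\right)^{d} \;=\; \eta\left(\frac{R}{r}\right)^{d},
\]
which is exactly the claimed bound.

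I do not expect any genuine obstacle here; the statement is essentially a scale-invariance unwinding of the definition. The only point requiring a sentence of care is that \Cref{def:packabeMetric} restricts to $r \in (0,1]$ and to subsets of a \emph{unit} ball, so one cannot apply it verbatim to a ball of radius $R \ne 1$ — hence the explicit rescaling to normalize the enclosing ball to unit radius, after which the hypothesis $r \le R$ becomes exactly the hypothesis $r/R \le 1$ needed to invoke the definition. If desired, the same conclusion can be re-derived for the concrete Euclidean and doubling cases by setting $R = 1$ in \Cref{lm:packing} and rescaling, but stating it at the level of packable metrics is what lets the later separator construction treat both metrics uniformly.
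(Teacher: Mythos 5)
Your proof is correct and takes essentially the same approach as the paper: scale the metric by $1/R$ so that $B$ becomes a unit ball and $P$ becomes $(r/R)$-separated, then invoke \Cref{def:packabeMetric} directly. Your version is merely more explicit about degenerate cases and about why packability is scale-invariant.
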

\begin{proof} Scaling the metric by $\frac{1}{R}$, $B$ has radius $1$ and $P$ is $(r/R)$-separated. The observation now follows from \Cref{def:packabeMetric}.
\end{proof}

 We observe that the doubling dimension of a $(\eta,d)$-packable metric $(X,\nd_X)$ is also close to the packing dimension of $X$.

\begin{observation}\label{obs:doublingdimPack} Let $Y$ be a subset of points in an $(\eta,d)$-packable metric $(X,\delta_X)$. Let $\delta_Y$ be the restriction of the distance function $\delta$ on $Y\times Y$. Then $(Y,\delta_Y)$ has doubling constant $\lambda_Y \leq \eta 2^{d}$, and hence doubling dimension $\ddim_Y = d + \log(\eta)$.
\end{observation}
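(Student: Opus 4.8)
The plan is to bound the doubling constant of $(Y,\delta_Y)$ directly using the packing bound of the ambient $(\eta,d)$-packable metric via Observation~\ref{obs:packingPackable}. Recall that the doubling constant of $(Y,\delta_Y)$ is the smallest $\lambda_Y$ such that every ball in $Y$ of some radius $R$ can be covered by $\lambda_Y$ balls in $Y$ of radius $R/2$; so I must exhibit such a covering with at most $\eta 2^d$ balls.

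First I would fix an arbitrary ball $\ball_Y(y,R)$ in the submetric. The standard trick is to take a maximal $(R/2)$-separated subset $N \subseteq \ball_Y(y,R)$ (equivalently, an $(R/2)$-net of $\ball_Y(y,R)$), which exists by a greedy/maximality argument. By maximality, every point of $\ball_Y(y,R)$ lies within distance $R/2$ of some point of $N$, so the balls $\{\ball_Y(z,R/2) : z\in N\}$ cover $\ball_Y(y,R)$. It then remains to bound $|N|$. Since $N$ is an $(R/2)$-separated set and $N \subseteq \ball_Y(y,R) \subseteq \ball_X(y,R)$ (distances in $Y$ agree with distances in $X$, so a ball of $Y$ is contained in the corresponding ball of $X$), Observation~\ref{obs:packingPackable} applied with radius $R$ and separation $r = R/2$ gives $|N| \leq \eta (R/(R/2))^d = \eta 2^d$. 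Hence $\ball_Y(y,R)$ is covered by at most $\eta 2^d$ balls of radius $R/2$, so $\lambda_Y \leq \eta 2^d$, and $\ddim_Y = \log_2 \lambda_Y \leq d + \log_2 \eta$.

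There is no real obstacle here — the only mild point to get right is that the $(R/2)$-separated set $N$ must consist of points of $Y$ (so that the covering balls are genuine balls of the submetric), which is automatic since we choose $N$ inside $\ball_Y(y,R) \subseteq Y$; and that Observation~\ref{obs:packingPackable} is stated for arbitrary balls in the ambient packable space, which it is, with the separated set allowed to be any subset of such a ball. So the argument is just the textbook "doubling dimension $\lesssim$ packing dimension" implication, specialized to the submetric.
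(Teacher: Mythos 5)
Your proof is correct and matches the paper's own argument almost verbatim: both take an $(R/2)$-net $N$ of $\ball_Y(y,R)$, cover the ball by $R/2$-balls centered at $N$, and bound $|N| \le \eta 2^d$ via the packing property (the paper cites Definition~\ref{def:packabeMetric} directly, you cite Observation~\ref{obs:packingPackable}, which is the same bound). No gap.
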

\begin{proof}Let $\ball_Y(v,R)$ be a ball of radius $R$ in $Y$ centered at a vertex $v\in Y$. Let $N\subseteq \ball_Y(v,R)$ be a $R/2$-net of   $\ball_Y(v,R)$. By \Cref{def:packabeMetric}, $|N|\leq \eta 2^{d}$. Since  $\ball_Y(v,R)$ can be covered by balls of radius $R/2$ centered at points in $N$, it follows that $\lambda_Y \leq \eta 2^{d}$.
\end{proof}

We say that an edge-weighted graph $G=(V,E,w)$ with $n$ vertices is \emph{a graph in a metric space} $(X,\delta_X)$ if $V\subseteq X$ and for any two vertices $u\not=v \in V$, $w(u,v) = \delta_X(u,v)$.  

\subsection{Bounded Degree Graphs}\label{subsec:boundeddeg}

We now introduce the \emph{lanky} property, and then we show that lanky graphs have sublinear separators. 

\begin{definition}[$\tau$-Lanky]\label{def:remotelyBoundedDegold} We say that a graph $G=(V,E,w)$ in a metric $(X,\delta_X)$ is \emph{$\tau$-lanky} if for any non-negative $r$, and for any ball $\ball_X(x,r)$ of radius $r$ centered at a vertex $x\in V$, there are at most $\tau$ edges of length at least $r$ that are cut by  $\ball_X(x,r)$. 
\end{definition}

Intuitively, if $G$ is $\tau$-lanky for a constant $\tau$, then for any ball $B$ of radius $r$, there is only a constant number of edges of length at least $r$ coming out from $B$.  We note that there could be as many as $\Omega(|E|)$ short edges of $G$ that are cut by $B$. We observe in the following that if $G$ is lanky, it has small degree.

\begin{observation}\label{obs:GDeg}  If $G$ $\tau$-lanky, then its maximum degree is at most $\tau$.
\end{observation}
\begin{proof}
	Let $v$ be any vertex in $G$, and let $r = 0$. By \Cref{def:remotelyBoundedDeg}, there are at most $\tau$ edges in $G$ of positive length that cut the ball $\ball(v,r)$, which only contains $v$. Thus, the degree $v$ is at most $\tau$. 
\end{proof}

We now show the main theorem in this section:  if $G$ is thin and lanky, it has a sublinear separator.

\begin{theorem}\label{thm:sublinearChar}Let $(X,\delta_X)$ is an $(\eta,d)$-packable metric space and $G = (V,E,w)$ is an $n$-vertex graph in $(X,\delta_X)$ such that $G$ is $\tau$-lanky.  Then, $G$ has a $\left(1-\frac{1}{\eta2^{d+1}}\right)$-balanced separator $S$ such that $|S| = O(\tau \eta 8^{d} n^{1 - 1/d})$ when $d\geq 2$ and $|S| = O(\tau \eta 8^{d}\log n)$ when $d= 1$. Furthermore, $S$  can be found in $O((\eta^38^d + \tau)n)$ expected time.
\end{theorem}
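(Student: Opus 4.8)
The plan is to prove \Cref{thm:sublinearChar} by a standard divide-and-conquer recursion that builds the separator out of one carefully chosen ``separating sphere'' at each level, exactly in the spirit of the geometric separator theorems of Miller--Teng--Thurston--Vavasis, but using only the packing bound and the $\tau$-lanky property rather than any intersection structure. First I would invoke \Cref{lm:smallBall} applied to the metric $(V,\delta_V)$ induced on the vertex set; by \Cref{obs:doublingdimPack} this metric has doubling constant $\lambda_V \le \eta 2^d$, so we obtain a vertex $v\in V$ and radius $r\ge 0$ with $|\ball(v,r)\cap V|\ge \frac{n}{2\lambda_V}=\frac{n}{2\eta 2^d}$ and $|\ball(v,2r)\cap V|\le \frac n2$, in $O(\lambda_V^3 n)$ expected time. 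The annulus $\ball(v,2r)\setminus \ball(v,r)$ now plays the role of a ``corridor'' in which we will place the separating sphere: since both $\ball(v,r)\cap V$ (at least $n/(2\eta 2^d)$ vertices) and $V\setminus\ball(v,2r)$ (at least $n/2$ vertices) are large and on opposite sides of the corridor, \emph{any} sphere sitting inside the corridor that we choose as $S$ will give a $\big(1-\frac1{\eta 2^{d+1}}\big)$-balanced partition of the remaining vertices.

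The heart of the argument is to choose the separating radius $\rho\in(r,2r)$ so that the sphere $\partial\ball(v,\rho)$ cuts few long edges, and then to handle the short edges by a net argument. I would subdivide the corridor $(r,2r)$ into $\Theta(r/\ell)$ concentric sub-annuli of width $\ell$, where $\ell$ is a scale parameter to be optimized; edges of length $< \ell$ whose ``cut radius'' falls in a given sub-annulus are confined to that sub-annulus, so by averaging there is a choice of sub-annulus cut by at most $O(\ell/r)\cdot|E|$ short edges --- but $|E|$ could be $\Omega(\tau n)$ since the max degree is $\tau$ by \Cref{obs:GDeg}, so this alone is not enough and we must instead discard the short-edge endpoints lying in the chosen sub-annulus. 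The key packing estimate is: a sub-annulus of width $\ell$ inside $\ball(v,2r)$ can be covered by $O\!\big(\eta (r/\ell)^{d-1}\big)$ balls of radius $\ell/2$ (this follows from \Cref{obs:packingPackable}: an $(\ell/2)$-net of such a thin annulus has size $O(\eta (r/\ell)^{d-1})$ when $\ell \le r$, because the annulus is essentially $(d{-}1)$-dimensional), and each such small ball, being a ball of radius $\ell/2$ centered appropriately, touches at most $O(\tau)$ vertices... no --- more carefully, each net-ball of radius $\ell/2$ captures at most $O(\eta)$ net points at scale $\ell/2$ is the wrong count; instead I would use that an $\ell$-net of the whole space restricted to the chosen annulus has $O(\eta(r/\ell)^{d-1})$ points, add all these net points to $S$, and argue that after deleting them every remaining short edge crossing $\partial\ball(v,\rho)$ is impossible because its two endpoints would be within $\ell$ of a deleted net point; combined with the long-edge bound this bounds the per-level separator by $O(\tau(r/\ell)^{d-1} + \eta(r/\ell)^{d-1})$ net points plus the long edges.

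For the long edges: by a second averaging over the $\Theta(r/\ell)$ sub-annuli, and using the $\tau$-lanky property to bound, for each threshold radius $2^i\ell$, the number of edges of length $\ge 2^i\ell$ cut by the corresponding concentric ball by $\tau$ --- summing the geometric series over the $O(\log(r/\ell))$ dyadic scales between $\ell$ and $2r$ --- I would get that some sub-annulus is crossed by only $O(\tau \log(r/\ell))$ long edges (length $\ge \ell$), whose endpoints we also add to $S$. Balancing the two contributions $O(\eta(r/\ell)^{d-1})$ (short) against the recursion, the right choice is $\ell/r \asymp n^{-1/d}$, giving a per-level cost of $O(\tau\eta 8^d n^{1-1/d})$ when $d\ge 2$ (the $8^d$ slack absorbs the crude covering constants), and the recursion $T(n) \le T((1-c)n) + T(cn) + O(n^{1-1/d})$ with $c = \frac{1}{\eta 2^{d+1}}$ solves to $T(n)=O(\tau\eta 8^d n^{1-1/d})$; for $d=1$ the ``$(r/\ell)^{d-1}$'' collapses to a constant and the long-edge $\log$ term dominates, so each level costs $O(\tau\eta 8^d)$ and the recursion depth $O(\log n)$ yields the stated $O(\tau \eta 8^d \log n)$. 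The expected running time is the sum over levels of the \Cref{lm:smallBall} cost $O(\lambda_V^3 n)=O(\eta^3 8^d n)$ plus $O(\tau n)$ to scan edges, which telescopes to $O((\eta^3 8^d + \tau)n)$.

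The main obstacle I anticipate is the short-edge accounting: unlike the long edges, short edges crossing the separating sphere are genuinely numerous, and the only way to kill them is to delete a net of the annulus into $S$ --- so the whole theorem hinges on the packing bound \Cref{obs:packingPackable} giving that a \emph{thin} annulus of width $\ell$ in a radius-$2r$ ball has an $\ell$-net of size only $O(\eta (r/\ell)^{d-1})$, i.e. genuinely one dimension lower, and on verifying that after removing this net no short crossing edge survives (both endpoints within distance $\ell$, hence both within $\ell$ of one deleted net point, hence on the same side of a slightly shifted sphere --- this requires choosing the actual cut radius $\rho$ to lie strictly inside the chosen sub-annulus with margin $\ell$ on each side, which is why we need sub-annuli of width $\ge 3\ell$, a harmless constant-factor adjustment). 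Getting the dimension-reduction in the net size exactly right, and tracking that it is $(r/\ell)^{d-1}$ and not $(r/\ell)^{d}$, is the delicate point; everything else is routine optimization of $\ell$ and bookkeeping of the recursion.
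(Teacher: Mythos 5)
Your plan starts the same way as the paper's proof---invoke \Cref{lm:smallBall} to get $v,r$ with $|\ball(v,r)\cap V|\ge n/(\eta 2^{d+1})$ and $|\ball(v,2r)\cap V|\le n/2$, then pick a cutting radius $\rho\in(r,2r)$ and argue that few edges cross it---but it then takes a detour that introduces two genuine gaps.

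The first and most serious gap is the net-removal argument for short edges. You propose to put an $\ell$-net of the chosen sub-annulus into $S$ and claim that ``after deleting them every remaining short edge crossing $\partial\ball(v,\rho)$ is impossible.'' That is false: removing a net of the annulus does not disconnect the annulus. In a general $(\eta,d)$-packable metric, arbitrarily many vertices of $V$ can lie within distance $\ell$ of a single net point, straddle $\partial\ball(v,\rho)$, and be joined by short edges among themselves; none of them are net points, so none are deleted, and after removing $S$ they still form a path from inside to outside. Being within $\ell$ of a deleted net point, or ``on the same side of a slightly shifted sphere,'' does not prevent an edge from connecting the two halves of $V\setminus S$---what you actually need is that at least one \emph{endpoint} of every crossing edge is in $S$. (The paper defines $S$ as exactly the set of endpoints of edges cut by $\ball(v,r^*)$, which makes this automatic.)

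The second gap is the claim that an $\ell$-net of a width-$\ell$ sub-annulus inside $\ball(v,2r)$ has size $O(\eta(r/\ell)^{d-1})$ ``because the annulus is essentially $(d{-}1)$-dimensional.'' That dimension reduction is a Euclidean fact; in an $(\eta,d)$-packable metric the only tool available is \Cref{obs:packingPackable}, which gives $O(\eta(r/\ell)^{d})$ for any $\ell$-separated subset of $\ball(v,2r)$, annulus or not. Nothing in \Cref{def:packabeMetric} forces the metric to ``spread out'' radially, so a thin annulus can carry an $\ell$-net as large as the whole ball's. An averaging argument over the $\Theta(r/\ell)$ sub-annuli would recover the $(r/\ell)^{d-1}$ count on average, but you do not state this, and the justification you give is wrong in the generality the theorem requires.

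Ironically, your own first observation already contains the correct short-edge bound and makes both detours unnecessary: ``by averaging there is a choice of sub-annulus cut by at most $O(\ell/r)\cdot|E|$ short edges.'' With $\ell/r = n^{-1/d}$ and $|E|\le \tau n/2$ (by \Cref{obs:GDeg}) this is $O(\tau n^{1-1/d})$---exactly the target bound---so you could simply put the endpoints of those crossing short edges into $S$ and be done, with no nets and no recursion. (The recursion is also unneeded: the theorem asks for a single balanced separator, and removing the endpoints of all crossed edges already separates the inside of the cut sphere from the outside, with the stated balance constant.) This corrected version is essentially the paper's proof, except that the paper chooses the cut radius $r^*=(1+\sigma)r$ uniformly at random and bounds $\mathbb{E}[|E^*|]$, partitioning the cut edges by dyadic length scales and using $\tau$-lankiness at each scale, rather than doing a deterministic averaging over discrete sub-annuli.
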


We first show the following lemma, which says that there exists a ball of radius $r^*$ that contains a constant fraction of vertices of $G$ and cut at most $O(n^{1-\frac{1}{d}})$ short edges of $G$. Our proof uses the random ball technique of Har-Peled~\cite{HarPeled11}; the same technique was used in  the construction of Sidiropoulos and Sridhar~\cite{SS17}.

\begin{lemma}
	\label{lm:shortEdgesCut} There exists a vertex $v\in V$ and a radius $r^*$ such that: 
		\begin{itemize}[noitemsep]
		\item[(1)] $\frac{n}{\eta 2^{d+1}}\leq |\ball_X(v,r^*) \cap V|\leq \frac{n}{2}$
		\item[(2)] $|E^*| = O(\tau \eta 8^{d} n^{1 - 1/d})$ when $d\geq 2$ and $|E^*| = O(\tau \eta 8^{d}\log n)$ when $d= 1$. Here $E^*$ is the set of all edges in $G$ of length at most $r^*$ that are cut by $\ball_X(v,r^*)$.
	\end{itemize}
Furthermore, $v$ and $r^*$ can be found in $O((\eta^3 8^{d}+ \tau)n)$ expected time.
\end{lemma}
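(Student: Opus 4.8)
The plan is to apply the random-ball technique of Har-Peled. First I would invoke \Cref{lm:smallBall} on the point set $V$ viewed as a subset of the $(\eta,d)$-packable metric $(X,\delta_X)$; by \Cref{obs:doublingdimPack} its doubling constant is $\lambda \le \eta 2^d$. This yields a vertex $u$ and radius $\rho \ge 0$ with $|\ball(u,\rho)\cap V| \ge \frac{n}{2\eta 2^d} = \frac{n}{\eta 2^{d+1}}$ and $|\ball(u,2\rho)\cap V| \le \frac{n}{2}$, found in $O(\eta^3 8^d n)$ expected time. Now I would pick $r^*$ uniformly at random in the interval $[\rho, 2\rho]$ and set $v = u$. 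For any choice of $r^* \in [\rho,2\rho]$ the lower bound $|\ball(v,r^*)\cap V| \ge |\ball(v,\rho)\cap V| \ge \frac{n}{\eta 2^{d+1}}$ holds, and the upper bound $|\ball(v,r^*)\cap V| \le |\ball(v,2\rho)\cap V| \le \frac{n}{2}$ holds, so property (1) is satisfied deterministically regardless of the random choice.

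The work is in property (2): bounding the number of \emph{short} cut edges (those of length $\le r^*$). The standard argument is to charge each short edge $e=(x,y)$ by the probability that a random radius in $[\rho,2\rho]$ separates its endpoints, namely $\Pr[r^* \text{ cuts } e] \le \frac{w(e)}{\rho}$ when $w(e) \le \rho$, or more carefully, the probability that exactly one of $\delta(v,x),\delta(v,y)$ lies below $r^*$ is at most $\frac{|\delta(v,x)-\delta(v,y)|}{\rho} \le \frac{w(e)}{\rho}$ by the triangle inequality. So $\mathbb{E}[|E^*|] \le \sum_{e: \text{cut-able}} \frac{w(e)}{\rho}$, where the sum is over edges $e$ with both endpoints within distance $2\rho$ of $v$ and $w(e) \le 2\rho$ (only such edges can be short cut edges for some $r^* \le 2\rho$). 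To bound this sum I would bucket these relevant edges by length into classes $E_i$ with $w(e) \in (2^{-i}\cdot 2\rho,\, 2^{-i+1}\cdot 2\rho]$ for $i = 1,2,\dots$. For each class, both endpoints lie in $\ball(v,4\rho)$; I would extract a maximal $2^{-i}\rho$-net among the endpoints of $E_i$, which by \Cref{obs:packingPackable} has size $O(\eta 8^d \cdot 2^{id})$, and then use $\tau$-lankiness: each net point, as the center of a ball of radius $\approx 2^{-i}\rho$, cuts at most $\tau$ edges of length $\ge 2^{-i}\rho$, and every edge of $E_i$ has length $\ge 2^{-i}\rho$ and is incident to a net point (after accounting for the covering radius with a small constant blow-up of the ball). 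This gives $|E_i| = O(\tau \eta 8^d 2^{id})$. Then the contribution of $E_i$ to $\sum w(e)/\rho$ is $O(\tau \eta 8^d 2^{id}) \cdot O(2^{-i}) = O(\tau\eta 8^d 2^{i(d-1)})$.

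The final step is to cut off the summation: the smallest possible edge length among $n$ points is $\ge \delta_{\min}$, and since an $n$-point $r$-separated set in a ball of radius $4\rho$ forces $n \le \eta(4\rho/r)^d$, we get $r \ge 4\rho\, \eta^{1/d} n^{-1/d}$, so only classes up to $i = O(\log n)$ with $2^{-i}\rho \gtrsim \rho n^{-1/d}$, i.e. $2^i = O(n^{1/d})$, are nonempty. Summing the geometric-type series $\sum_{i\le \frac{1}{d}\log n} 2^{i(d-1)}$: when $d \ge 2$ this is dominated by the last term, giving $O((n^{1/d})^{d-1}) = O(n^{1-1/d})$, so $\mathbb{E}[|E^*|] = O(\tau \eta 8^d n^{1-1/d})$; when $d=1$ every term is $O(1)$ and there are $O(\log n)$ of them, giving $O(\tau\eta 8^d \log n)$. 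By Markov (or by taking the best of $O(1)$ independent trials, each using the deterministic part once) we obtain a choice of $r^*$ achieving property (2) within the stated bound, and the extra cost of testing is $O(\tau \cdot |\ball(v,2\rho)\cap V|) = O(\tau n)$ since we only need to examine edges incident to the $\le n/2$ vertices in the ball. The main obstacle is making the bucketing/net argument in the previous paragraph fully rigorous — in particular correctly handling the covering-radius blow-up so that the ball centered at a net point genuinely has radius at most the length of the edges it is charged, which is exactly where $\tau$-lankiness must be applied with the right constant (hence the $8^d$ rather than $2^d$ factors).
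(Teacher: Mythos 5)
Your proposal follows the same high-level strategy as the paper — invoke \Cref{lm:smallBall}, pick a random radius $r^*\in[\rho,2\rho]$, charge each short edge $e$ by $\Pr[\text{cut}]\le w(e)/\rho$, bucket edges by length, bound each bucket by a net of the appropriate scale together with $\tau$-lankiness, and sum the resulting geometric series. Items (1) and the running-time accounting are handled the same way, and the per-bucket net-plus-lanky step is the right idea.

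However, there is a genuine gap in the step that truncates the summation. You argue that only buckets with $2^{-i}\rho\gtrsim \rho n^{-1/d}$ are nonempty, via a lower bound on the minimum edge length from the packing inequality $n\le\eta(4\rho/r)^d$. That inequality, solved for $r$, gives $r\le 4\rho\,\eta^{1/d}n^{-1/d}$ — an \emph{upper} bound on the minimum pairwise distance, not a lower bound; and in any case the points in $\ball(v,4\rho)$ need not be $n$ in number, and the minimum inter-point distance can be arbitrarily small relative to $\rho$ when the spread is large. So the buckets at scale $2^{-i}\rho\ll\rho n^{-1/d}$ are not empty in general, and because the per-bucket contribution $O(\tau\eta 8^d 2^{i(d-1)})$ grows with $i$ for $d\ge 2$, the untruncated series diverges. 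The paper avoids this by splitting the short cut edges into $M_1$ (weight $\le rn^{-1/d}$) and $M_2$ (the rest). For $M_1$ it bounds the expectation directly using $\mathbb{E}[|M_1|]\le n^{-1/d}|E|\le n^{-1/d}\cdot\tau n/2$, where $|E|\le\tau n/2$ comes from \Cref{obs:GDeg} (a $\tau$-lanky graph has maximum degree at most $\tau$). Only $M_2$ is then handled by the net-plus-lanky bucketing, and the index range of that bucketing is bounded by construction rather than by a (false) lower bound on the minimum edge length. You would need to add the analogue of the $M_1$ argument — bounding the total number of edges via the degree bound implied by lankiness — to close the gap.
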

\begin{proof}
Let $(V,\delta_V)$ be the submetric of $(X,\delta_X)$ induced by $V$. By \Cref{obs:doublingdimPack}, $(V,\nd_V)$ has doubling constant $\lambda_V = \eta 2^{d}$. By \Cref{lm:smallBall}, we can find in $O(\lambda_V^3 n) = O(\eta^38^d n)$ time a vertex $v$ and a radius $r$ such that: (a) $|\ball_V(v,r)| \geq \frac{n}{2\lambda_V} \geq \frac{n}{\eta 2^{d+1}}$ and (b) $|\ball_V(v,2r)| \leq \frac{n}{2}$.

Let $\sigma \in [0,1]$ be chosen uniformly at random. Let $r^* = (1+\sigma)r$; $ r^*\leq 2r$ since $\sigma \leq 1$. By properties (a) and (b), we have that $|\ball_X(v,r^*) \cap V| = |\ball_V(v,r^*)| \geq  |\ball_V(v,r)|\geq \frac{n}{\eta 2^{d+1}}$ and that $|\ball_X(v,r^*) \cap V| = |\ball_V(v,r^*)| \leq  |\ball_V(v,2r)|\leq \frac{n}{2}$. Thus, Item (1) is satisfied.

We now bound the expected size of $E^*$, the set of edges of length at most $r^*$ that are cut by $\ball_X(v,r^*)$. Let $E'$ be the set of edges of length at most $2r$ that are cut by $\ball_X(v, r^*)$. Then $E^*\subseteq E'$. We will bound the expected size of $E'$ instead, which implies the same bound on the expected size of $E^*$. We partition $E'$ into two sets $M_1,M_2$ where $M_1$ contains every edge of weight at most $rn^{-1/d}$ and $M_2 = E'\setminus M_1$. Observe that every edge $e$ of weight at most $rn^{-1/d}$ is cut by $\ball_X(v,r^*)$ with probability at most $\frac{rn^{1-d}}{r} = n^{-1/d}$. By \Cref{obs:GDeg}, $|E| \leq  \tau n/2$, and hence there are at most $\tau n/2$ edges of weight at most $rn^{-1/d}$.  Thus, it follows that:
\begin{equation}\label{eq:M1Expect}
	\mathbb{E}(|M_1|) \leq  n^{-1/d}(\tau n)/2 = \tau n^{1 - 1/d}/2
\end{equation}

We now bound the expected size of $M_2$.  For each $i \in [1,  \lceil\log(2n^{-1/d})\rceil]$, we define $r_i = 2^{i-1}n^{-1/d}r$ and a set of edges $M_2^i = \{(u, v) \in M_2|  r_i < w(u,v) \leq r_{i+1}\}$. Observe that $M_2 = \bigcup_{i=1}^{\lceil\log(2n^{-1/d})\rceil} M_2^i$. \hypertarget{ball_construction}{}Let $N_i$ be a $(r_i/2)$-net of $\ball_X(v,2r)$. By \Cref{obs:packingPackable}, we have that:
\begin{equation}\label{eq:NiSize}
	|N_i| \leq \eta\left(\frac{2r}{r_i/2}\right)^d =  \eta\left(\frac{4r}{2^{i-1}n^{-1/d}r}\right)^d = \eta\frac{8^dn}{2^{id}} 
\end{equation}
 Let $\mathcal{N}_i$ be set of balls with center in $N_i$ and radius $2^{i-2}n^{-1/d}r$. Since $r^*\leq 2r$, $\mathcal{N}_i$ covers all points in $\ball_X(v,r^*)$. Furthermore, since each ball in $N_i$ has diameter at most $r_i$, every edge of length at least $r_i$ (including edges in $M_i^{i}$) will be cut by at least one ball (and at most two balls) in $\mathcal{N_i}$. Note that the construction of $\mathcal{N}_i$ is deterministic. 
 
 Let $E_i$ be the set of edges of length at least $r_i$ and at most $2r_i$ that are cut by at least one ball in $\mathcal{N}_i$. Since $G$ is $\tau$-lanky, there are at most $\tau$ edges of length at least $r_i/2$ that are cut by a ball in $\mathcal{N}_i$. It follows that
 \begin{equation}\label{eq:MiSize}
 	|E_i| \leq \tau |N_i| = \tau \eta\frac{8^dn}{2^{id}} 
 \end{equation}
 by \Cref{eq:NiSize}. Since every edge in $e$ in $E_i$ is cut by $\ball_X(v,r^*)$ with probability at most $w(e)/r \leq r_{i+1}/r$. Thus, it follows that $\mathbb{E}(|M_2^i|) \leq |E_i| r_{i+1}/r$. By the linearity of expectation and \Cref{eq:MiSize}, we have:
 \begin{equation}
 	\begin{split}
 			\mathbb{E}(|M_2|) &=  \sum_{i = 1}^{\lceil\log(2n^{-1/d})\rceil} \mathbb{E}(|M_2^i|) = \sum_{i = 1}^{O(\log n)} \tau \eta\frac{8^dn}{2^{id}} \cdot \frac{r_{i+1}}{r}\\
 			& =  \tau \eta 8^d\sum_{i = 1}^{O(\log n)} \frac{n}{2^{id}}  \cdot \frac{2^{i}n^{-1/d}r}{r} =  \tau \eta 8^d n^{1 - 1/d}\sum_{i = 1}^{O(\log{n})}\left(\frac{1}{2^{d - 1}}\right)^i
 	\end{split}
\end{equation}

We conclude that $\mathbb{E}(|M_2|) = O(\tau \eta 8^{d}\log{n})$ when $d = 1$ and  $\mathbb{E}(|M_2|) = O(\tau \eta 8^{d}n^{1 - 1/d})$ when $d\geq 2$. By \Cref{eq:M1Expect}, we have that:
\begin{equation}\label{eq:EstarExpect}
\mathbb{E}[|E^*|] = \tau n^{1 - 1/d}/2 +  O(\tau \eta 8^{d}n^{1 - 1/d}) = O(\tau \eta 8^{d}n^{1 - 1/d}) 
\end{equation}
Thus, by Markov's inequality, with a constant probability, $E^*$ has size $O(\tau \eta 8^{d}n^{1 - 1/d})$.  Since the running time to find $E^*$ for each random choice of $r^*$ is $O(|E|) = O(\tau n)$, by repeating many times until we find $r^*$ such that $E^*$ satisfies Item (2), the expected running time is still $O(\tau n)$.  Recall that $v$ can be found in $O(\eta^38^d n)$  expected time. Thus, the total expected running time is  $O((\eta^38^d + \tau)n)$ as claimed.
\end{proof}

We are now ready to prove \Cref{thm:sublinearChar}.

\begin{proof}[Proof of \Cref{thm:sublinearChar}] Let $v$ and $r^*$  the center and the radius of the ball $\ball_X(v,r^*)$ as in \Cref{lm:shortEdgesCut}. We define the function $g(d,n)$ as follows $g(d,n) = O(\log(n))$ when $d = 1$ and $g(d,n) = O(n^{1-1/d})$ when $d\geq 2$.  Let $\tilde{E}$ be the set of edges that are cut by $\ball_X(v,r^*)$. By Item (2) in \Cref{lm:shortEdgesCut}, there are at most $O(\tau \eta 8^{d}g(d,n))$ edges in $\tilde{E}$ of length at most $r^*$. Since $G$ is $\tau$-lanky, there are at most $\tau$ edges of length at least $r^*$ in  $\tilde{E}$. Thus, it holds that
	\begin{equation} \label{eq:EtildeSize}
		|\tilde{E}| = O(\tau \eta 8^{d}g(d,n) + \tau)
	\end{equation}
Let $S$ be the set of all endpoints of edges in $\tilde{E}$. Then removing $S$ from $G$ disconnects the set of points in $\ball_X(v,r^*)$  from the set of points outside $\ball_X(v,r^*)$. Thus, $S$ is a $\left(1-\frac{1}{\eta 2^{d+1}}\right)$-balanced separator by Item (1) of \Cref{lm:shortEdgesCut}. The running time to construct $S$ is dominated by the running time to construct $v$ and $r^*$, which is  $O((\eta^38^d + \tau)n)$ by \Cref{lm:shortEdgesCut}.
\end{proof}

Observe by the definition that $\tau$-lanky property is closed under taking subgraph:  if $G$ is $\tau$-lanky, then any subgraph $H$ of $G$ is also $\tau$-lanky. Thus, by \Cref{thm:sublinearChar}, we have:

\begin{corollary}\label{cor:subgraphSepLanky} 
	Let $(X,\delta_X)$ is an $(\eta,d)$-packable metric space and $G = (V,E,w)$ is a graph in $(X,\delta_X)$ such that $G$ is $\tau$-lanky.  Let $H$ be any subgraph of $G$ with $k$ vertices.  Then, $H$ has a $\left(1-\frac{1}{\eta2^{d+1}}\right)$-balanced separator $S$ such that $|S| = O(\tau \eta 8^{d} k^{1 - 1/d})$ when $d\geq 2$ and $|S| = O(\tau \eta 8^{d}\log k)$ when $d= 1$. Furthermore, $S$  can be found in $O((\eta^38^d + \tau)k)$ expected time given $H$.
\end{corollary}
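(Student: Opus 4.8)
The plan is to obtain \Cref{cor:subgraphSepLanky} as an immediate consequence of \Cref{thm:sublinearChar}; the only thing to verify is that the hypothesis of \Cref{thm:sublinearChar} is inherited by subgraphs. First I would record the (essentially definitional) observation that $\tau$-lankiness is monotone: if $H$ is a subgraph of $G$, then $V(H)\subseteq V(G)$ and $E(H)\subseteq E(G)$, so for every vertex $x\in V(H)$ and every radius $r\ge 0$ the set of edges of $H$ of length at least $r$ that are cut by $\ball_X(x,r)$ is a subset of the set of edges of $G$ of length at least $r$ cut by the same ball, and the latter set has size at most $\tau$ by assumption. Hence $H$ is $\tau$-lanky.

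Second, I would note that $H$ is itself a graph in the metric $(X,\delta_X)$ in the sense used in this section: its vertex set lies in $X$ (being a subset of $V(G)\subseteq X$), and each edge $(u,v)\in E(H)\subseteq E(G)$ has weight $\delta_X(u,v)$ since $G$ is a graph in $(X,\delta_X)$. Thus $H$ is a $k$-vertex $\tau$-lanky graph in the same $(\eta,d)$-packable metric $(X,\delta_X)$.

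Finally, applying \Cref{thm:sublinearChar} verbatim to $H$ (with $n$ replaced by $k=|V(H)|$) produces a $\left(1-\frac{1}{\eta 2^{d+1}}\right)$-balanced separator $S$ of $H$ with $|S| = O(\tau\eta 8^d k^{1-1/d})$ for $d\ge 2$ and $|S| = O(\tau\eta 8^d \log k)$ for $d=1$, computable in $O((\eta^3 8^d+\tau)k)$ expected time given $H$; this substitution of $k$ for $n$ is the only bookkeeping required. I do not expect any genuine obstacle here: all of the analytic content---the random-ball argument of \Cref{lm:shortEdgesCut} together with the packing bounds (\Cref{obs:packingPackable}, \Cref{obs:doublingdimPack}) and \Cref{lm:smallBall}---already lives inside \Cref{thm:sublinearChar}, and the monotonicity check is one line. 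The point worth stating explicitly in the write-up is that this monotonicity is exactly what promotes our separator guarantee to an arbitrary subgraph on $k$ vertices, which is the form needed for the algorithmic applications discussed in the introduction.
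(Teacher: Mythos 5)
Your proposal is correct and follows exactly the paper's argument: the paper derives the corollary by observing that $\tau$-lankiness is closed under taking subgraphs and then applying \Cref{thm:sublinearChar} to $H$ with $k$ in place of $n$. Your additional check that $H$ remains a graph in $(X,\delta_X)$ is a sensible bit of bookkeeping that the paper leaves implicit.
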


\subsection{Graphs with High Vertex Degrees}\label{subsec:highdeg}

By \Cref{obs:GDeg}, the $\tau$-lanky property implies that the maximum degree is bounded by $\tau$.
In this section, we introduce another criterion that could be used for graphs with high vertex degrees. Our criterion is based on two properties: weakly lanky and thin.

\begin{definition}[Weakly $\tau$-lanky]
	\label{def:weakly-tau-lanky}
	A graph $G = (V, E, w)$ in a metric $(X, \nd_X)$ is \emph{weakly $\tau$-lanky} if for any non-negative $r$, and for any ball $\ball_X(x,r)$ of radius $r$ centered at a vertex $x \in V$, there are at most $\tau$ vertices inside $\ball_X(x,r)$ that are incident to all edges of length at least $r$ cut by $\ball_X(x,r)$.
\end{definition}

Observe that $\tau$-lanky implies weakly $\tau$-lanky but the converse statement does not hold. There could be $\Omega(n)$ edges of length at least $r$ incident to a single vertex that is cut by a ball of radius $r$. We observe that the same proof in \Cref{subsec:boundeddeg} is applicable to yield a separator of size $O(\tau n^{1-1/d})$ for a weakly  $\tau$-lanky graph. However, in this section, we look for a separator of size $O(n^{1-1/d} + \tau)$. To this end, we introduce anther property that we call $\kappa$-thin.

\begin{definition}[$\kappa$-Thin]
	\label{def:kappa-separated-connect}
	A graph $G = (V, E, w)$ in a metric $(X, \nd_X)$ is $\kappa$-thin if for any $1$-separated pair $(A, B)$ of $V$, there are at most $\kappa$ edges between $A$ and $B$.
\end{definition}

Our goal in this section is to show the following theorem, which implies that if $G$ is $\kappa$-thin and weakly $\tau$-lanky, it has sublinear separator.  

\begin{theorem}
	\label{thm:anotherSublinearSeparator}
	Let $(X,\delta_X)$ is an $(\eta,d)$-packable metric space and $G = (V,E,w)$ is a graph in $(X,\delta_X)$ that is weakly $\tau$-lanky and $\kappa$-thin.  Then, $G$ has a $\left(1-\frac{1}{\eta2^{d+1}}\right)$-balanced separator $S$ such that $|S| = O(\eta^{O(1)}2^{O(d)}\kappa n^{1 - 1/d} + \tau)$ when $d\geq 2$ and $|S| = O(\eta^{O(1)}2^{O(d)}\kappa \log n + \tau)$ when $d= 1$. Furthermore, $S$ can be constructed in $O((\eta^38^d + \kappa)n)$ expected time.
\end{theorem}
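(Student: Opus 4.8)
The plan is to mimic the proof of Theorem~\ref{thm:sublinearChar}, with one crucial modification: instead of using $\tau$-lankiness to bound \emph{all} long edges cut by the separating ball $\ball_X(v,r^*)$, we will exploit $\kappa$-thinness to bound the long edges at each ``scale'' by $O(\kappa \cdot |N_i|)$ rather than worry about a potentially huge degree, and then handle the leftover very-long edges (those incident to the $\le\tau$ special vertices promised by weak lankiness) separately as an additive $\tau$ term. So the skeleton is: (i) invoke \Cref{lm:smallBall} on the submetric $(V,\delta_V)$ — whose doubling constant is $\le \eta 2^d$ by \Cref{obs:doublingdimPack} — to get a vertex $v$ and radius $r$ with $|\ball_V(v,r)|\ge n/(\eta 2^{d+1})$ and $|\ball_V(v,2r)|\le n/2$; (ii) randomly perturb $r^*=(1+\sigma)r$ with $\sigma\in[0,1]$ uniform, so Item~(1)-type balance holds; (iii) bound $\mathbb{E}[|E^*|]$, the short edges (length $\le r^*$) cut by the ball; (iv) bound the long edges (length $> r^*$) cut by the ball by $O(\tau)$ using weak lankiness; (v) take $S$ to be the endpoints of the cut edges.

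The main work is step (iii), and this is where $\kappa$-thinness replaces the degree bound $|E|\le\tau n/2$ used in the original proof. I would reuse the net-based dyadic decomposition verbatim: set $r_i = 2^{i-1}n^{-1/d}r$, partition the cut edges of length in $(r_i,r_{i+1}]$ into the family $M_2^i$, and take $N_i$ to be an $(r_i/2)$-net of $\ball_X(v,2r)$, so $|N_i|\le \eta 8^d n / 2^{id}$ by \Cref{obs:packingPackable}. Now the key point: the balls $\mathcal{N}_i$ of radius $r_i/4$ (say) centered at net points cover $\ball_X(v,r^*)$; any pair of distinct net balls at this scale whose centers are at distance $\ge r_i$ forms (after rescaling by $1/(r_i/4)$) a $1$-separated pair, so by $\kappa$-thinness there are at most $\kappa$ edges between them. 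An edge of length in $(r_i, r_{i+1}]$ cut by $\ball_X(v,r^*)$ has its two endpoints in two net balls whose centers are at distance $\ge r_i - 2\cdot (r_i/4) = r_i/2 \ge$ (a constant times $r_i$), so a $\kappa$-thinness argument at an appropriately chosen net radius bounds the number of such edges by $O(\kappa)$ times the number of net-ball \emph{pairs} that can be ``spanned'' by an edge cut by the ball — and since net balls of radius $\Theta(r_i)$ within $\ball_X(v,2r)$ number $|N_i|$ and each has $2^{O(d)}$ neighbors at distance $O(r_i)$ by packing, this is $O(\kappa \, 2^{O(d)} |N_i|)$. Then each such edge is cut by $\ball_X(v,r^*)$ with probability $O(r_{i+1}/r)$, and summing the geometric series in $i$ exactly as in \Cref{lm:shortEdgesCut} gives $\mathbb{E}[|E^*|] = O(\eta^{O(1)} 2^{O(d)}\kappa \cdot g(d,n))$ where $g(d,n)=n^{1-1/d}$ for $d\ge 2$ and $\log n$ for $d=1$; Markov's inequality then makes this hold with constant probability.

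For step (iv): by \Cref{def:weakly-tau-lanky}, the edges of length $\ge r^*$ cut by $\ball_X(v,r^*)$ are all incident to a set $W$ of at most $\tau$ vertices lying inside the ball; we simply put all of $W$ into $S$, contributing the additive $+\tau$. Combined with step (iii), $|S| = O(\eta^{O(1)}2^{O(d)}\kappa \cdot g(d,n) + \tau)$, and since removing $S$ disconnects $\ball_X(v,r^*)\cap V$ from its complement, the balance bound $\left(1 - \frac{1}{\eta 2^{d+1}}\right)$ follows from the Item~(1)-type inequality. The running time is dominated by \Cref{lm:smallBall} ($O(\eta^3 8^d n)$) plus $O(\kappa n)$ for scanning edges per trial with $O(1)$ expected trials, giving $O((\eta^3 8^d + \kappa)n)$.

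\textbf{The main obstacle} I anticipate is pinning down the precise constant in the per-scale thinness bound — in particular, making rigorous the claim that the cut edges of a given dyadic length class correspond to a number of $1$-separated net-ball pairs that is only $2^{O(d)}|N_i|$, not $|N_i|^2$. The clean way is: each such edge $(u,u')$ has $u$ in some net ball $B_j$ and $u'$ in some $B_{j'}$ with $\delta_X(\text{center}_j,\text{center}_{j'}) \le r_{i+1} + 2(r_i/2) = O(r_i)$; by the packing bound each $B_j$ has $2^{O(d)}$ such ``close'' partners, so there are $\le 2^{O(d)}|N_i|$ relevant \emph{ordered} pairs, and choosing the net radius small enough (a suitable constant fraction of $r_i$) guarantees each such pair, once the two balls are disjoint, is $1$-separated after rescaling — hence carries $\le\kappa$ edges. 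One must check the boundary case where $u,u'$ land in net balls that are \emph{not} disjoint (centers within one net-radius); there the number of such degenerate pairs is still $2^{O(d)}|N_i|$ and each such pair spans $\le$ (packing bound) $= 2^{O(d)}$ vertices hence $\le \binom{2^{O(d)}}{2}$ edges, which is absorbed into the constant. With this care the geometric-series summation is routine and identical in form to \Cref{lm:shortEdgesCut}.
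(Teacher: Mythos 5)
Your plan follows the paper's structure closely: invoke \Cref{lm:smallBall}, take a random radius $r^*\in[r,2r]$, bound the cut edges of length $\le r^*$ by a dyadic decomposition where $\kappa$-thinness at each net scale replaces the per-ball $\tau$ bound, and handle edges of length $\ge r^*$ via weak lankiness as an additive $\tau$. Your per-scale pair-counting (bounding the number of close net-ball pairs by $2^{O(d)}|N_i|$, each carrying $\le\kappa$ edges after a rescaling makes the pair $1$-separated) is equivalent in spirit to the paper's \Cref{lm:boundShortCutEdge}, which isolates a clean per-ball bound of $O(\eta^2 2^{O(d)}\kappa)$ on the edges of length $[r',2r']$ cut by a ball of radius $r'$ and then multiplies by $|\mathcal{N}_i|$; both lead to the same geometric series.

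However, there is a genuine gap. You write that ``$\kappa$-thinness replaces the degree bound $|E|\le\tau n/2$ used in the original proof,'' and then describe only the dyadic scales $r_i = 2^{i-1}n^{-1/d}r$ for $i\ge 1$. But the original proof crucially partitions $E^*$ into $M_1$ (edges of length below the smallest scale $r\,n^{-1/d}$) and $M_2 = \bigcup_i M_2^i$, and bounds $\mathbb{E}[|M_1|]$ by $n^{-1/d}\cdot|E|$. Your proposal never supplies the replacement bound on $|E|$. The dyadic/net argument cannot be pushed to scales below $r\,n^{-1/d}$, since $|N_i|$ exceeds $n$ there and, worse, the number of scales would depend on the spread. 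The paper resolves this with a separate lemma (\Cref{lm:WSPDLinearEdge}): a $\kappa$-thin graph in an $(\eta,d)$-packable metric has $O(\eta^{O(1)}2^{O(d)}\kappa n)$ edges, proved by taking a $1$-WSPD of $V$ with $\eta^{O(1)}2^{O(d)} n$ pairs (via Har-Peled--Mendel together with \Cref{obs:doublingdimPack}) and observing that each pair carries at most $\kappa$ edges. That step is not a routine restatement of $\kappa$-thinness; your proof needs it (or an equivalent global edge-count argument), and without it the $M_1$ contribution to $\mathbb{E}[|E^*|]$ is unbounded.
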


In the proof of \Cref{thm:anotherSublinearSeparator}, we follow the same construction presented in \Cref{subsec:boundeddeg}: take a ball $\ball_X(v,r^*)$ of random radius $r^* \in [r,2r]$ centered at a specific vertex $v$, and construct the separator $S$ by taking the endpoint inside  $\ball_X(v,r^*)$ of every edge cut by the ball. We then show that in expectation, the size of $S$ is small.  There are three places in the proof of \Cref{thm:sublinearChar} where the $\tau$-lanky property is used to bound the size of the separator; here we point out how the $\kappa$-thin property could be used to replace $\tau$.  First, the number of edges in $E$ is bounded by $\tau n/2$, and this fact is used to bound the expected size of of edges $M_1$ in \Cref{eq:M1Expect}. We show in \Cref{lm:WSPDLinearEdge} below that  $\kappa$-thin property, implies that $|E| =  O(\kappa n)$. Second, the number of edges in $E$ of size at most $2r_i$  for some radius $r_i$ cut by some ball of radius at least $r_i$ is bounded by $\tau$, and this fact is used to bound the size of $E_i$ in \Cref{eq:MiSize}. We show in \Cref{lm:boundShortCutEdge} that the number of such edges is $O(\kappa)$ if $G$ is $\kappa$-thin, thereby, removing the depedency on $\tau$.  Finally, the $\tau$-lanky property is used to bound the number of edges of length at least $2r$ cut by $\ball_X(v,r^*)$, which contribute an additive $\tau$ in the size of $\tilde{E}$ in \Cref{eq:EtildeSize}. Since we can tolerate the additive term  $\tau$ in \Cref{thm:anotherSublinearSeparator}, we do not need to  do anything.

Now we focus on showing that $\kappa$-thin property implies that $G$ has $O(\kappa n)$ edges. Our proof uses well-separated pair decomposition. An  \emph{$s$-well-separated pair decomposition} ($s$-WSPD) for a point set $P$ for some non-negative parameter $s$ in a metric $(X,\nd_X)$ is the set of $s$-separated pairs $\mathcal{P} = \{(A_1,B_1),\ldots, (A_m,B_m)\}$ such that (i) $A_i, B_i \subseteq P, A_i \cap B_i = \emptyset$ for all $i \in [1, m]$ and (ii) for every pair of points $(p,q)$, there exists a some $i \in [1,m]$ such that either $p \in A_i, q\in  B_i$ or $q\in A_i$ and $p \in B_i$.

\begin{lemma}
	\label{lm:WSPDLinearEdge}
	If an $n$-vertex graph $G = (V, E, w)$ in an $(\eta,d)$-packable metric $(X, \nd_X)$ is $\kappa$-thin, then $|E| = O(\eta^{O(1)}2^{O(d)}\kappa n)$.
\end{lemma}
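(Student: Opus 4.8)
The plan is to use a well-separated pair decomposition to charge the edges of $G$ to pairs, and then argue that each pair can be charged only a bounded number of times because $G$ is $\kappa$-thin. First I would invoke the standard existence result for WSPDs in doubling (and $(\eta,d)$-packable) metrics: for any $s>0$ there is an $s$-WSPD $\mathcal{P}=\{(A_1,B_1),\dots,(A_m,B_m)\}$ for $V$ with $m = \eta^{O(1)}2^{O(d)} s^{O(d)} n$ pairs; taking $s$ to be a fixed constant (say $s=1$, which is exactly the separation parameter appearing in \Cref{def:kappa-separated-connect}), this gives $m = \eta^{O(1)}2^{O(d)} n$. I would cite the packing bound \Cref{obs:packingPackable} together with \Cref{obs:doublingdimPack} as the underlying reason such a WSPD exists with a linear number of pairs; the paper earlier reduces packable metrics to having bounded doubling dimension, so the classical WSPD construction of Har-Peled–Mendel applies.

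Next, for every edge $e=(p,q)\in E$, by the covering property (ii) of the WSPD there is an index $i$ with $p\in A_i$ and $q\in B_i$ (or vice versa). Assign $e$ to this pair; this gives a map from $E$ into $\mathcal{P}$. The key observation is that $(A_i,B_i)$ is a $1$-separated pair of $V$ (by choice of $s=1$), so by the $\kappa$-thin property (\Cref{def:kappa-separated-connect}) there are at most $\kappa$ edges of $G$ between $A_i$ and $B_i$. Hence each pair receives at most $\kappa$ edges, and summing over the $m$ pairs gives $|E|\le \kappa m = O(\eta^{O(1)}2^{O(d)}\kappa n)$, as claimed.

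The one subtlety — and the only real obstacle — is matching the separation constants: \Cref{def:kappa-separated-connect} defines $\kappa$-thinness in terms of a pair $(A,B)$ with $\nd_X(A,B)$ at least the maximum diameter of $A$ and $B$ (i.e., a $1$-separated pair in the sense of \Cref{def:doubling}'s neighboring definition). So I must build the WSPD with separation parameter $s\ge 1$; any constant $s\ge 1$ works and still yields $m = \eta^{O(1)}2^{O(d)}n$ pairs since the $s^{O(d)}$ factor is absorbed into the $2^{O(d)}$. With $s=1$ the output pairs are exactly $1$-separated, the $\kappa$-thin hypothesis applies verbatim, and the counting goes through. I would also note in passing that the WSPD here is only used as a counting device — no algorithmic claim about constructing it is needed for this lemma, only its existence with $O_{d,\eta}(n)$ pairs.
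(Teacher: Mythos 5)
Your proposal is correct and matches the paper's proof essentially verbatim: both build a $1$-WSPD of $V$ via the Har-Peled--Mendel bound (using \Cref{obs:doublingdimPack} to pass from packability to doubling dimension $d+\log\eta$), charge each edge to a covering pair, and invoke $\kappa$-thinness to cap the charge per pair at $\kappa$. The concern you raise about matching separation constants is already resolved the same way in the paper — the pairs produced are $1$-separated, so \Cref{def:kappa-separated-connect} applies directly.
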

\begin{proof} Let $\mathcal{D} = \{(A_1, B_1), (A_2, B_2), \ldots (A_m, B_m)\}$ be a $1$-WSPD of $V$ with minimum number of pairs. Har-Peled and Mendel (Lemma 5.1~\cite{HPM06}) show that for any $n$-point set $P$ in a metric of doubling dimension $\ddim$ has an $s$-well-separated pair decomposition $\mathcal{P}$ with $|\mathcal{P}| = 2^{O(\ddim)}s^{\ddim}n$ for any $s\geq 1$. Since $X$ has doubling dimension $d + \log(\eta)$ by \Cref{obs:doublingdimPack}, $|\mathcal{D}| = \eta^{O(1)}2^{O(d)} n$. For each $i \in [1,m]$, let $H_{i}$ be the set of edges between $A_i$ and $B_i$. Since $(A_i, B_i)$ is $1$-separated, $|H_i| = \kappa$ by \Cref{def:kappa-separated-connect}. By the definition of well-separated pair decomposition, for each edge $e = (u, v) \in E$, there exists an index $i$ such that $(u, v)$ or $(v, u)$ is in $A_i \times B_i$. Thus, $\bigcup_{i = 1}^mH_i = E$. It follows that:
	\begin{equation}
		|E| = |\bigcup_{i = 1}^mH_i| \leq \sum_{i = 1}^m|H_i| = O(\eta^{O(1)}2^{O(d)}\kappa n),
	\end{equation}
	as claimed.
\end{proof}

Next, we show that, if $G$ is $\kappa$-thin, then the number of edges of length at least $r$ and at most $2r$ cut by a ball of radius $r$ is $O(\kappa)$. 

\begin{lemma}
	\label{lm:boundShortCutEdge} Let $\ball_X(x,r)$ be any ball of radius $r$ centered at some point $x \in X$. If $G$ is $k$-thin, there are are at most $O(\eta^22^{O(d)}\kappa)$ edges of $G$ of length at least $r$ and at most $2r$ that are cut by  $\ball_X(x,r)$. 
\end{lemma}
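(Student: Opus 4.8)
The plan is to cover the ball $\ball_X(x,r)$ by a constant number of small balls, then charge each relevant edge to one of these small balls via a $1$-separated pair and apply the $\kappa$-thin property. Concretely, first I would fix a $(r/4)$-net $N$ of $\ball_X(x,r)$. Since $\ball_X(x,r)$ has radius $r$ and $N$ is $(r/4)$-separated, \Cref{obs:packingPackable} gives $|N| \leq \eta\left(\tfrac{r}{r/4}\right)^d = \eta 4^d = \eta 2^{O(d)}$. Let $\mathcal{N}$ be the collection of balls of radius $r/8$ centered at the points of $N$; these balls cover $\ball_X(x,r)$, and each has diameter at most $r/4$.

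Next, consider any edge $e=(u,v)$ cut by $\ball_X(x,r)$ with $r \leq w(u,v) \leq 2r$; say $u \in \ball_X(x,r)$. Then $u$ lies in some ball $\ball_X(p, r/8)$ with $p \in N$. I want to also locate $v$ near a net point, but $v$ may lie outside $\ball_X(x,r)$; however, since $w(u,v) \leq 2r$ and $u \in \ball_X(x,r)$, we have $v \in \ball_X(x, 3r)$, so enlarging the net (take a $(r/4)$-net $N'$ of $\ball_X(x,3r)$, still of size $\eta 2^{O(d)}$ by \Cref{obs:packingPackable}, and balls $\mathcal{N}'$ of radius $r/8$) we get $v \in \ball_X(q, r/8)$ for some $q \in N'$. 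So every such edge is "captured" by an ordered pair of small balls $(\ball_X(p,r/8), \ball_X(q,r/8))$ with $p \in N$, $q \in N'$. For the pair that captures $e$: $\ball_X(p,r/8)$ and $\ball_X(q,r/8)$ each have diameter $\leq r/4$, and their distance is at least $w(u,v) - r/8 - r/8 \geq r - r/4 = 3r/4 \geq r/4 = $ (max diameter), so after scaling by $4/r$ this is a $1$-separated pair. By $\kappa$-thinness (\Cref{def:kappa-separated-connect}), at most $\kappa$ edges of $G$ go between the vertex sets of any such pair.

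Finally I would sum over all pairs: the number of ordered pairs $(p,q)$ with $p\in N$, $q\in N'$ is at most $|N|\cdot|N'| = (\eta 2^{O(d)})^2 = \eta^2 2^{O(d)}$, and each contributes at most $\kappa$ edges of the desired length range, so the total is $O(\eta^2 2^{O(d)}\kappa)$, as claimed. The main thing to be careful about is the separation constant when forming the $1$-separated pair — I need the gap between the two small balls to dominate their diameters, which forces the net scale and the small-ball radius to be chosen comfortably below $r$ (a $(r/4)$-net with radius-$r/8$ balls works; any sufficiently small constants do). A minor subtlety is that a pair $(A,B)$ in the WSPD-style argument must have $A$ and $B$ disjoint as vertex sets; here disjointness is automatic because the two small balls are at distance $\geq 3r/4 > 0$ apart. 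No other obstacle arises; the packing bound does all the heavy lifting.
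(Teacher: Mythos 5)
Your proof follows essentially the same route as the paper: cover $\ball_X(x,r)$ and the dilated ball $\ball_X(x,3r)$ by small balls of scale a small constant times $r$, note that the two small balls containing the endpoints of a short cut edge form a $1$-separated pair, and multiply $\kappa$ by the number of such pairs via the packing bound. (The paper uses $(r/6)$-nets with radius-$r/6$ balls; structurally the two arguments are the same.)

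However, as written there is a genuine gap in the first step: balls of radius $r/8$ centered at the points of a $(r/4)$-net of $\ball_X(x,r)$ need \emph{not} cover $\ball_X(x,r)$. By the paper's definition of an $r$-net, a $(r/4)$-net only guarantees that every point of $\ball_X(x,r)$ is within distance $r/4$ of some net point, so a point $u$ lying at distance between $r/8$ and $r/4$ from its nearest net point is in no radius-$r/8$ ball, and an edge incident to such a $u$ would never be charged to any pair. The fix is immediate (e.g., take an $(r/8)$-net with radius-$r/8$ balls; the net size becomes $\eta 8^d$, still $\eta 2^{O(d)}$, and the separation bound survives). A second, more cosmetic slip: the distance between the two small balls \emph{as sets} is at least $w(u,v) - \diam(A) - \diam(B) \geq r - r/4 - r/4 = r/2$, not $3r/4$; the quantity $w(u,v) - r/8 - r/8$ you compute is a lower bound on the distance between the \emph{centers}, not the balls. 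Since $r/2 \geq r/4$ still, the pair is $1$-separated and the conclusion is unharmed. Your closing hedge that ``any sufficiently small constants do'' is right in spirit — but they must be chosen so the net spacing does not exceed the small-ball radius, which your $(r/4,\,r/8)$ choice violates.
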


\begin{proof}
	Let $E_\text{short}$ be the set of edges of length in $[r, 2r]$. Observe by the triangle inequality that for every edge $(u, v) \in E_\text{short}$, both endpoints $u$ and $v$ are in $\ball_X(x, 3r)$. Let $\mathcal{B}_\text{in}$ ($\mathcal{B}_\text{out}$) be the set of balls obtained by taking balls of radius $\frac{r}{6}$ centered at points in a $\frac{r}{6}$-net of $\ball_X(x, r)$ ($\ball_X(x, 3r)$). By \Cref{obs:packingPackable}, $|\mathcal{B}_\text{in}| = O(\eta\left(\frac{r}{r/6}\right)^d) = O(\eta6^d)$ and $\mathcal{B}_\text{out} = O(\eta\left(\frac{3r}{r/6}\right)^d) = O(\eta18^d)$ that covers $\ball_X(x, 3r)$. Let $(B_\text{in}, B_\text{out}) \in \mathcal{B}_\text{in} \times \mathcal{B}_\text{out}$ be a pair of balls such that there exists an edge $(u, v) \in E_\text{short}$ between them. By the triangle inequality, $\nd_X(B_\text{in}, B_\text{out}) \geq \nd_X(u, v) - \diam(B_\text{in}) - \diam(B_\text{out}) \geq r - 2r/6 - 2r/6 = r/3$. Hence, $\frac{\nd_X(B_\text{in}, B_\text{out})}{\max\{\diam(B_\text{in}, B_\text{out})\}} \geq 1$, which implies that $(B_\text{in}, B_\text{out})$ is $1$-separated. Thus, there are at most $\kappa$ edges between $B_\text{in}$ and $B_\text{out}$ by the definition of $\kappa$-thin (\Cref{def:kappa-separated-connect}). It follows that $|E_\text{short}| = \kappa|\mathcal{B}_\text{in} \times \mathcal{B}_\text{out}| = O(\eta^2\kappa108^d) = O(\eta^2 2^{O(d)}\kappa)$, as claimed.
\end{proof}

Next, we show the following lemma, which is analogous to \Cref{lm:shortEdgesCut}. The key difference is that the size of $E^*$, the set of edges of length at most $r^*$ cut by  $\ball_X(v,r^*)$. 

\begin{lemma}
	\label{lm:anotherShortEdgesCut} Let $G = \{V, E, w\}$ is a weakly $\tau$-lanky and $\kappa$-thin graph in an $(\eta,d)$-packable metric space $(X,\delta_X)$. There exists a vertex $v\in V$ and a radius $r^*$ such that: 
	\begin{itemize}[noitemsep]
		\item[(1)] $\frac{n}{\eta 2^{d+1}}\leq |\ball_X(v,r^*) \cap V|\leq \frac{n}{2}$
		\item[(2)] $|E^*| = O(\eta^{O(1)}2^{O(d)} \kappa n^{1 - 1/d})$ when $d\geq 2$ and $|E^*| = O(\eta^{O(1)}2^{O(d)} \kappa\log n)$ when $d = 1$. Here $E^*$ is the set of all edges in $G$ of length at most $r^*$ that are cut by $\ball_X(v,r^*)$.
	\end{itemize}
	Furthermore, $v$ and $r^*$ can be found in $O(\eta^{O(1)}2^{O(d)} \kappa n)$ expected time.
\end{lemma}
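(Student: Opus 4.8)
The plan is to mirror the proof of \Cref{lm:shortEdgesCut} almost line for line, substituting the bound $|E| = O(\eta^{O(1)}2^{O(d)}\kappa n)$ from \Cref{lm:WSPDLinearEdge} for the weaker $|E|\le \tau n/2$ coming from \Cref{obs:GDeg}, and substituting the bound from \Cref{lm:boundShortCutEdge} for the $\tau$-lankiness bound wherever short cut edges are counted. First I would invoke \Cref{obs:doublingdimPack} to see that the submetric $(V,\delta_V)$ has doubling constant $\eta 2^d$, then apply \Cref{lm:smallBall} to obtain a vertex $v$ and radius $r$ with $|\ball_V(v,r)|\ge \frac{n}{\eta 2^{d+1}}$ and $|\ball_V(v,2r)|\le \frac n2$; this can be done in $O(\eta^3 8^d n)$ expected time. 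Picking $\sigma\in[0,1]$ uniformly at random and setting $r^*=(1+\sigma)r\le 2r$ gives Item (1) exactly as before, since $\ball_V(v,r)\subseteq\ball_V(v,r^*)\subseteq\ball_V(v,2r)$.

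For Item (2), I would again write $E^*\subseteq E'$, the edges of length at most $2r$ cut by $\ball_X(v,r^*)$, and split $E'=M_1\cup M_2$ with $M_1$ the edges of weight at most $rn^{-1/d}$. Each such edge is cut with probability at most $n^{-1/d}$, and by \Cref{lm:WSPDLinearEdge} there are only $O(\eta^{O(1)}2^{O(d)}\kappa n)$ edges total, so $\mathbb{E}(|M_1|) = O(\eta^{O(1)}2^{O(d)}\kappa n^{1-1/d})$. For $M_2$ I would use the same dyadic decomposition: set $r_i = 2^{i-1}n^{-1/d}r$, let $M_2^i$ be edges with weight in $(r_i,r_{i+1}]$, build the $(r_i/2)$-net $N_i$ of $\ball_X(v,2r)$ of size $|N_i|\le \eta \frac{8^d n}{2^{id}}$ by \Cref{obs:packingPackable}, and cover $\ball_X(v,r^*)$ by balls $\mathcal{N}_i$ of radius $2^{i-2}n^{-1/d}r$ centered at $N_i$. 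The one genuine change is here: in \Cref{lm:shortEdgesCut} the edges of $E_i$ cut by a single small ball were bounded by $\tau$ directly from $\tau$-lankiness, whereas now I would observe that each ball in $\mathcal{N}_i$ has radius $\Theta(r_i)$, and an edge of length in $[r_i,2r_i]$ cut by such a ball is also an edge whose endpoints lie near a slightly larger ball; applying \Cref{lm:boundShortCutEdge} to each ball of $\mathcal{N}_i$ (after a constant rescaling so that the ball radius and the edge lengths match up to the factor-of-2 window the lemma handles) bounds the number of such edges by $O(\eta^2 2^{O(d)}\kappa)$ per ball. Hence $|E_i| = O(\eta^2 2^{O(d)}\kappa)\cdot|N_i| = O(\eta^{O(1)}2^{O(d)}\kappa \frac{n}{2^{id}})$, and summing $\mathbb{E}(|M_2^i|)\le |E_i| r_{i+1}/r$ over $i$ yields the geometric series $\sum_i (2^{-(d-1)})^i$ exactly as in \Cref{lm:shortEdgesCut}, giving $\mathbb{E}(|M_2|) = O(\eta^{O(1)}2^{O(d)}\kappa n^{1-1/d})$ for $d\ge 2$ and $O(\eta^{O(1)}2^{O(d)}\kappa\log n)$ for $d=1$. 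Combining the $M_1$ and $M_2$ bounds and applying Markov's inequality gives a choice of $r^*$ with $|E^*|$ of the claimed size with constant probability.

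For the running time I would note that each trial requires computing $E^*$, which takes $O(|E|) = O(\eta^{O(1)}2^{O(d)}\kappa n)$ time, and since a constant fraction of choices of $\sigma$ work, in expectation a constant number of trials suffices; together with the $O(\eta^3 8^d n)$ cost of finding $v$ this gives the claimed $O(\eta^{O(1)}2^{O(d)}\kappa n)$ expected time. The main obstacle I anticipate is purely bookkeeping: making sure the rescaling in the step that invokes \Cref{lm:boundShortCutEdge} is legitimate — i.e., that the balls of $\mathcal{N}_i$ and the length window $[r_i, 2r_i]$ really do satisfy the hypotheses of that lemma (a ball of radius $r_i$ and edges of length in $[r_i, 2r_i]$), possibly after absorbing small constant factors into the $2^{O(d)}$ term — and that the dyadic index range $i\in[1,\lceil\log(2n^{-1/d})\rceil]$ is the same here as before so the $\log n$ factor in the $d=1$ case comes out cleanly. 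Everything else is a verbatim transcription of the proof of \Cref{lm:shortEdgesCut} with $\tau$ replaced by $O(\eta^{O(1)}2^{O(d)}\kappa)$ in the relevant counting steps.
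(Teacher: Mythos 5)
Your proposal is correct and follows the paper's own proof essentially line by line: reuse $v,r,r^*$ from \Cref{lm:shortEdgesCut}, bound $\mathbb{E}(|M_1|)$ via \Cref{lm:WSPDLinearEdge}, and replace the $\tau$-lanky counting of $E_i$ with \Cref{lm:boundShortCutEdge}. The constant-factor rescaling issue you flag when invoking \Cref{lm:boundShortCutEdge} (matching a radius-$r_i/2$ ball against the $[r_i,2r_i]$ length window) is real but harmless, and the paper handles it at the same level of rigor by absorbing it into the $2^{O(d)}$ term.
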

\begin{proof}
	We reuse the notation in the proof of \Cref{lm:shortEdgesCut}. Specifically, we construct $v,r$ and $r^*$ as in \Cref{lm:shortEdgesCut}; Item (1) follows directly from the construction. Next, we bound the size of $E^*$ following the same strategy: partitioning $E^*$ into two set $M_1$ and $M_2$ where $M_1$ is the set of edges with length at most $rn^{1 - 1/d}$ and $M_2$ contains the other edges.  The same argument in  \Cref{lm:shortEdgesCut}, specifically \Cref{eq:M1Expect},  yields: 
	\begin{equation}\label{eq:M1ExpecThin}
		\mathbb{E}(|M_1|) \leq n^{-1/d}|E| \leq O(\eta^{O(1)}2^{O(d)}\kappa)n
	\end{equation}
	by \Cref{lm:WSPDLinearEdge}.
	
	To bound the expected size of $M_2$, we partition $M_2$ into $\{M_1,\ldots, M_{\lceil \log{(2n^{-1/d})} \rceil}\}$, where $M^i_2$ be the set of edges with length in $(r_i, r_{i + 1}]$; here  $r_i = 2^{i - 1}n^{1- 1/d}r$. Following the same \hyperlink{ball_construction}{construction} in \Cref{lm:shortEdgesCut}, we construct a set of $\mathcal{N}_i$ of balls of radius  $2^{i-2}n^{-1/d}r =r_i/2$ covering $\ball_X(v, 2r)$. By  \Cref{eq:NiSize},  $|\mathcal{N}_i| = \eta \frac{8^d n}{2^{id}}$. Let $E_i$ by the set of edges of length at least $r_i$ and at most $2r_i$ cut by at least one ball in $\mathcal{N}_i$. By \Cref{lm:boundShortCutEdge}, it follows that:
	\begin{equation}\label{eq:EiSizeThin}
		|E_i| \leq \tau |\mathcal{N}_i| = O(\eta^22^{O(d)}\kappa) \eta\frac{8^dn}{2^{id}} = O\left(\eta^3 2^{O(d)} \frac{n}{2^{id}}\right) 
	\end{equation}
	The rest of the argument is exactly the same as the proof of \Cref{lm:shortEdgesCut} (with different constants), that yields $\mathbb{E}[M_2] = O(\kappa \eta^{3}2^{O(d)}n^{1-1/d})$ when $d\geq 2$ and $\mathbb{E}[M_2] = O(\kappa \eta^{3}2^{O(d)}\log(n))$ when $d = 1$. This in turn implies $|E^*| = O(\eta^{O(1)}2^{O(d)} \kappa n^{1 - 1/d})$ when $d\geq 2$ and $|E^*| = O(\eta^{O(1)}2^{O(d)} \kappa\log n)$ when $d = 1$ as  desired.  The expected running time bound follow the same line of reasoning in the proof of \Cref{lm:shortEdgesCut}.
\end{proof}

We are now ready to complete the proof of \Cref{thm:anotherSublinearSeparator}. 

\begin{proof}[Proof of \Cref{thm:anotherSublinearSeparator}]
	Let $v$ and $r^*$ be the center and radius of the ball $\ball_X(v, r^*)$ in \Cref{lm:anotherShortEdgesCut}. Let $g(d, n) = O(\log{n})$ if $d = 1$ and $g(d, n) = O(n^{1 - 1/d})$ if $d \geq 2$. Let $S$ be the set of all endpoints in $\ball_X(v, r^*)$ of edges in $\tilde{E}$. Recall that $\tilde{E}$ is the set of edges that are cut by $\ball_X(v,r^*)$. Removing $S$ from $G$ disconnects $\ball_V(v, r^*)$ and $V \setminus \ball_V(v, r^*)$. By Item (2) of \Cref{lm:anotherShortEdgesCut}, there are $O(\eta^{O(1)}2^{O(d)}\kappa)g(d, n)$ edges of length at most $r^*$ cut by $\ball_X(v, r^*)$. By \Cref{def:weakly-tau-lanky}, there are at most $\tau$ points in $\ball_X(v, r^*)$ that are incident to all edges of length at least $r^*$ in $\tilde{E}$.  Thus, $|S| = O(\eta^{O(1)}2^{O(d)}\kappa)g(d, n) + \tau$. By Item (1) of \Cref{lm:anotherShortEdgesCut}, $S$ is a $\left(1 - \frac{1}{\eta 2^{d + 1}}\right)$-balanced separator of $G$. The expected time to find $v$ and $r^*$ is $O(\eta^{O(1)}2^{O(d)} \kappa n)$ by \Cref{lm:anotherShortEdgesCut}. Since the time to find all endpoints in $\ball_X(v, r^*)$ of edges in $\tilde{E}$ is $O(|E|) = O(\eta^{O(1)}2^{O(d)}\kappa n)$, the total time complexity to construct $S$ is $O(\eta^{O(1)}2^{O(d)}\kappa n)$.
\end{proof}

Since  weakly $\tau$-lanky and $\kappa$-thin properties are closed under taking subgraphs, we have:

\begin{corollary}
	\label{cor:anotherSubGSeparator}
	Let $(X,\delta_X)$ is an $(\eta,d)$-packable metric space and $G = (V,E,w)$ is a graph in $(X,\delta_X)$ such that $G$ is weakly $\tau$-lanky and $\kappa$-thin.  Let $H$ be any subgraph of $G$ with $k$ vertices.  Then, $H$ has a $\left(1-\frac{1}{\eta2^{d+1}}\right)$-balanced separator $S$ such that $|S| = O(\eta^{O(1)}2^{O(d)}\kappa k^{1 - 1/d} + \tau)$ when $d\geq 2$ and $|S| = O(\eta^{O(1)}2^{O(d)}\kappa\log k + \tau)$ when $d= 1$. Furthermore, $S$  can be found in $O(\eta^{O(1)}2^{O(d)}\kappa k)$ expected time given $H$.
\end{corollary}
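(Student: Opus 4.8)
The statement is an immediate consequence of \Cref{thm:anotherSublinearSeparator} once we check that both the weakly $\tau$-lanky and the $\kappa$-thin properties are \emph{monotone}, i.e.\ inherited by every subgraph. So the plan is: (i) verify monotonicity of weak lankiness; (ii) verify monotonicity of $\kappa$-thinness; (iii) apply \Cref{thm:anotherSublinearSeparator} to $H$.

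For step (i), let $H$ be a subgraph of $G$ with $V(H)\subseteq V$. Fix any $r\ge 0$ and any ball $\ball_X(x,r)$ with $x\in V(H)$; since $V(H)\subseteq V$, this ball is also centered at a vertex of $G$, so \Cref{def:weakly-tau-lanky} applies to it for $G$. The set of edges of $H$ of length at least $r$ cut by $\ball_X(x,r)$ is a subset of the corresponding set of edges of $G$ (because $E(H)\subseteq E$ and edge lengths and the cutting relation depend only on the metric, not on the graph). Consequently the set of endpoints of such edges lying inside $\ball_X(x,r)$ can only shrink when we pass from $G$ to $H$, so it has size at most $\tau$. Hence $H$ is weakly $\tau$-lanky.

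For step (ii), take any $1$-separated pair $(A,B)$ with $A,B\subseteq V(H)$. Then $A,B\subseteq V$, and since the separation condition in the definition of a $1$-separated pair depends only on the metric, $(A,B)$ is also a $1$-separated pair of $V$; by \Cref{def:kappa-separated-connect} applied to $G$ there are at most $\kappa$ edges of $G$ between $A$ and $B$, and every edge of $H$ between $A$ and $B$ is one of these, so $H$ has at most $\kappa$ such edges. Hence $H$ is $\kappa$-thin.

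For step (iii), the $k$-vertex graph $H$ is a weakly $\tau$-lanky and $\kappa$-thin graph in the $(\eta,d)$-packable metric $(X,\delta_X)$, so \Cref{thm:anotherSublinearSeparator} gives a $\bigl(1-\tfrac{1}{\eta 2^{d+1}}\bigr)$-balanced separator of $H$ of size $O(\eta^{O(1)}2^{O(d)}\kappa k^{1-1/d}+\tau)$ when $d\ge 2$ and $O(\eta^{O(1)}2^{O(d)}\kappa\log k+\tau)$ when $d=1$, constructible in $O((\eta^3 8^d+\kappa)k)$ expected time (the stated bound). I do not expect any genuine obstacle: the only subtlety is the somewhat awkward wording of \Cref{def:weakly-tau-lanky}, where one must note that deleting edges can only decrease the witnessing set of ``inside'' vertices — but this is clear from $E(H)\subseteq E(G)$.
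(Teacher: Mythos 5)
Your proposal is correct and matches the paper's own (one-line) argument: the paper simply observes that weak $\tau$-lankiness and $\kappa$-thinness are closed under taking subgraphs and then invokes \Cref{thm:anotherSublinearSeparator}; you spell out that monotonicity check explicitly and apply the theorem. The only cosmetic mismatch is the expected-time bound, where you quote the form $O((\eta^3 8^d+\kappa)k)$ from the theorem statement while the corollary writes $O(\eta^{O(1)}2^{O(d)}\kappa k)$, but these are consistent up to the $\eta^{O(1)}2^{O(d)}$ bookkeeping, so there is no substantive issue.
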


\section{Separators of Greedy Spanners for Graphs in Euclidean Spaces}\label{sec:Eucliden}

In this section, we prove \Cref{thm:EuclideanUBGsSfractal-Sep}. Specifically, in \Cref{subsec:Euclidean}, we focus on the greedy spanners of point sets in Euclidean spaces. In \Cref{subsec:fractal}, we focus on the greedy spanners of point sets that have a low fractal dimension. In \Cref{subsec:UBG}, we focus on the greedy spanners of unit ball graphs (UBGs).

\subsection{Euclidean Spaces}\label{subsec:Euclidean}

First, we focus on proving \Cref{thm:strong_euclidean_separator} below, which implies  Item (1) of \Cref{thm:EuclideanUBGsSfractal-Sep}.

\begin{restatable}[Separators for Subgraphs of Greedy Spanners]{theorem}{StrongEuclideanSeparator}
	\label{thm:strong_euclidean_separator}
	Let $P$ be a given set of points in the $d$-dimensional Euclidean space and $G$ be the greedy $(1+\epsilon)$-spanner of $P$ for some $\eps \in (0,1/2]$. Then any $k$-vertex subgraph $H$ of $G$ has a $\left(1 - \frac{1}{2^{O(d)}}\right)$-balanced separator $S$ of size $O(2^{O(d)}\eps^{1 - 2d}k^{1-1/d})$. Furthermore, $S$ can be found in expected $O((2^{O(d)} + \eps^{1 - 2d})k)$ time given $H$.
\end{restatable}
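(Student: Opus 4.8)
By \Cref{cor:subgraphSepLanky}, which already converts lankiness into a balanced separator of the claimed shape, it suffices to show that the greedy $(1+\eps)$-spanner $G$ of a point set $P\subseteq\mathbb{R}^d$ is $\tau$-lanky for $\tau=O(2^{O(d)}\eps^{1-2d})$; the monotonicity of the lanky property then transfers the bound to every $k$-vertex subgraph $H$, and the running-time bound follows from the algorithmic part of \Cref{cor:subgraphSepLanky} with $\eta=2^{O(d)}$ (the packing constant of $\mathbb{R}^d$). So the plan is: (i) fix a vertex $x\in V$ and a radius $r\ge 0$, look at the ball $B=\ball(x,r)$, and let $E_r$ be the set of edges of $G$ of length at least $r$ that are cut by $B$; (ii) show $|E_r|=O(2^{O(d)}\eps^{1-2d})$.

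To bound $|E_r|$, the key structural fact is the \emph{leapfrog / gap property} of greedy spanners: if $e=(u,v)$ is an edge of a greedy $(1+\eps)$-spanner and $e$ was added after another edge $e'=(u',v')$ with $w(e')\le w(e)$, then $e$ and $e'$ cannot both be ``short'' in a precise quantitative sense — essentially, $\max(|uu'|,|vv'|) > c\,\eps\cdot w(e)$ for an appropriate absolute constant $c$ (this is immediate from \Cref{fact:path-greedy}: if the endpoints of $e$ were both very close to the endpoints of $e'$, then $e'$ plus two tiny detours would give a path of length $< (1+\eps)w(e)$, contradicting that $e$ was added). I would use this to argue that the ``far'' endpoints of the edges in $E_r$ — say, group each edge $e=(u,v)\in E_r$ by assigning it its endpoint $u_e\in B$ (the one inside the ball) — cannot cluster too tightly. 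Concretely, partition $E_r$ into $O(\log(\diam/ r))$-free scale classes is not needed here; instead, I would pair up edges of $E_r$ whose inside-endpoints are within distance $\le c\eps r$ of each other and whose lengths are within a factor $2$, and use the leapfrog property together with the fact that each such edge has length $\ge r$ to derive a contradiction unless the number of edges in each ``cluster at each scale'' is $O(1)$. Since the inside-endpoints all lie in $B$ (radius $r$) and, by the cut condition and length $\ge r$, the outside-endpoints lie in a shell of radius $O(\,\text{length}\,)$, a packing argument (\Cref{lm:packing}) over $O(1/\eps)^{O(d)}$ directions and $O(1)$ length scales that actually matter bounds the total count by $2^{O(d)}\eps^{-O(d)}$.

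More carefully, here is the cleaner route I expect to take. For each edge $e=(a_e,b_e)\in E_r$ with $a_e\in B$, $b_e\notin B$, consider the direction $\hat e=(b_e-a_e)/|b_e-a_e|$ and the length $\ell_e=w(e)\ge r$. I would first handle edges of length in $[r,2r]$: their inside-endpoints lie in $B=\ball(x,r)$, which by \Cref{lm:packing} contains at most $2^{O(d)}(1/\eps)^{d}$ points that are pairwise $\ge c\eps r$-apart; and for two edges sharing (approximately) an inside-endpoint, the leapfrog property forces their directions to differ by an angle $\gtrsim \eps$, so by a second packing bound on the unit sphere there are $O(1/\eps)^{d-1}$ of them — giving $O(2^{O(d)}\eps^{-(2d-1)})$ edges in this length band. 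For longer edges, length $\ge 2r$, I would use a charging/packing argument against the endpoints: since such an edge is cut by $\ball(x,r)$ and has length $\ge 2r$, \Cref{fact:path-greedy} combined with the triangle inequality shows its MST-like ``private'' neighborhood near $x$ has radius $\Omega(\eps r)$, and these neighborhoods are disjoint, so \Cref{lm:packing} inside $\ball(x, O(r))$ again caps the number by $2^{O(d)}\eps^{-O(d)}$; summing over the $O(\log)$ scales telescopes because a ball of radius $r$ only cuts edges of length $\ge r$ (no extra $\log$ factor survives). Collecting the bounds gives $\tau = O(2^{O(d)}\eps^{1-2d})$.

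The main obstacle I anticipate is getting the $\eps$-dependence exactly right, i.e.\ pinning down that the exponent is $1-2d$ and not something worse, which requires being careful about two nested packing arguments (one in the ball of radius $O(r)$ at scale $\eps r$, giving $\eps^{-d}$, and one on directions at angular scale $\eps$, giving $\eps^{-(d-1)}$) and checking that the leapfrog/gap property yields precisely the $\Omega(\eps)$ separation in both the spatial and angular sense for edges of comparable length cut by the same ball. A secondary subtlety is making sure the argument is genuinely ``per ball of radius $r$'' with no summation over scales leaking in — this is exactly where the definition's restriction to edges of length $\ge r$ does the work, so I would foreground that. Once lankiness with $\tau=O(2^{O(d)}\eps^{1-2d})$ is established, \Cref{cor:subgraphSepLanky} (with $\eta=2^{O(d)}$, so $\eta 8^d=2^{O(d)}$) delivers the separator size $O(2^{O(d)}\eps^{1-2d}k^{1-1/d})$ and the $O((2^{O(d)}+\eps^{1-2d})k)$ expected running time stated in \Cref{thm:strong_euclidean_separator}.
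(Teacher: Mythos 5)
Your overall architecture is exactly the paper's: prove the greedy spanner is $\tau$-lanky for $\tau=O(2^{O(d)}\eps^{1-2d})$, invoke \Cref{cor:subgraphSepLanky} with $\eta=2^{O(d)}$, and read off the separator size and running time. The intuition you give for the lankiness bound — cover $\ball(x,r)$ by $\eps^{-d}$ cells of radius $\Theta(\eps r)$ and, within each cell, show that distinct long cut edges must leave in directions separated by angle $\Omega(\eps)$, contributing another $\eps^{-(d-1)}$ factor — is also precisely the paper's accounting (\Cref{lm:EuiclideanGreedyLanky}: $|\mathcal B|\cdot O(\eps^{1-d})$ where $\mathcal B$ is an $\eps r/48$-cover of the ball).

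Where the proposal has a genuine gap is in the claim that the required ``one edge per (cell, cone)'' statement is ``immediate from \Cref{fact:path-greedy}: if the endpoints of $e$ were both very close to the endpoints of $e'$, then $e'$ plus two tiny detours would give a path of length $<(1+\eps)w(e)$.'' That only covers the case where \emph{both} pairs of endpoints are close. But two cut edges emanating from the same $\eps r$-cell into the same $\eps$-cone can have far endpoints at wildly different distances from the ball — say one at distance $r$ and one at distance $1000r$ — so their far endpoints are not close and the ``two tiny detours'' path does not exist. This is exactly why you find yourself forced to stratify by length bands $[2^i r, 2^{i+1}r]$ and then wave at a ``telescoping'' argument (and at ``MST-like private neighborhoods'') to suppress the resulting $O(\log)$ factor; none of that is substantiated, and as written the length-band accounting would leak a $\log(\mathrm{spread})$. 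The paper avoids the issue entirely: \Cref{lm:XtoYEdges} proves, via the geometric inequality $|xy|\ge|yq|+(1-\eps/4)|xq|$ (\Cref{clm:xy-length}), that a cluster $X$ of diameter $\le\eps R/12$ sends at most one greedy edge into each $\eps/8$-cone of any far set $Y$ — with no assumption that the edges from $X$ to $Y$ have comparable lengths. Once you have that cone lemma, all edges of length $\ge r$ cut by $\ball(x,r)$ are handled uniformly in a single pass over the $\eps^{-d}$ cells, giving $O(\eps^{1-2d})$ with no scale summation. So: your plan and your target bound are right, but you need to replace the informal ``leapfrog is immediate'' step with the actual radial cone inequality; with that in hand, the length-band split and the telescoping claim become unnecessary and should be dropped.
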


Our goal in the proof of \Cref{thm:strong_euclidean_separator} is to show that the greedy $(1+\epsilon)$-spanner $G$ is $\tau$-lanky for some $\tau$ that depends on $\eps$ and $d$ only.  To this end, we rely on the following  lemma, which bounds the number of edges in the greedy $(1+\epsilon)$-spanner between subsets of points  $X$ and  $Y$, where $X$ has a small diameter and $Y$ is sufficiently far from $X$.

\begin{lemma}
	\label{lm:XtoYEdges}
	Let $P$ be a set of $n$ points in $\mathbb{R}^d$ for a constant $d$ and $G$ be a greedy $(1+\epsilon)$-spanner of $P$ for $\eps \in (0,1/2]$. Let $X$ and $Y$ be two subsets of $P$ such that $\diam(X) \leq \frac{\eps R}{12}$ and $|X,Y| \geq R$. Then $G$ has $O(\epsilon^{1-d})$ edges between $X$ and $Y$.
\end{lemma}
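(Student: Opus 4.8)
The plan is to show that $G$ behaves like a bounded-degree graph ``as seen from $X$ towards far-away regions'': fix a representative vertex $x_0\in X$, cover the set of directions around $x_0$ by a family $\mathcal{C}$ of $O(\eps^{1-d})$ cones, each apexed at $x_0$ with half-angle $c\eps$ for a small absolute constant $c$, and prove that each cone of $\mathcal{C}$ contains the far endpoint of at most a constant number of $X$--$Y$ edges of $G$. Since $|\mathcal{C}| = O((1/\eps)^{d-1})$ by the packing bound (\Cref{lm:packing}) applied to a $(c\eps)$-net of the unit sphere, and since every $X$--$Y$ edge $(a,b)$ has $b$ at distance $\geq R-\diam(X)\geq R/2$ from $x_0$ (so its direction from $x_0$ is well defined and lies in some cone of $\mathcal{C}$), this yields the claimed bound $O(\eps^{1-d})$.

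For the per-cone bound I would argue by contradiction using \Cref{fact:path-greedy}. Suppose $e_1=(a_1,b_1)$ and $e_2=(a_2,b_2)$ are two $X$--$Y$ edges of $G$ whose far endpoints lie in the same cone, with $|a_1b_1|\le |a_2b_2|=:L$; note $L\geq R$ and, since greedy processes edges in non-decreasing length order, $e_1$ is present in the partial spanner at the moment $e_2$ is considered. First, because $\diam(X)\le \eps R/12 \ll L$ and $G$ is a $(1+\eps)$-spanner, there is a path $a_2\leadsto a_1$ in $G$ of length at most $(1+\eps)\diam(X)$, all of whose edges have length $\le (1+\eps)\eps R/12 < R \le L$ and are therefore already present when $e_2$ is considered; the same edge cannot appear on this subpath. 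In the \emph{comparable-length} case $|a_1b_1|\ge \eps L$, the two far endpoints $b_1,b_2$ lie near a common ray through $x_0$ at comparable distances, so $|b_1b_2|$ is small enough that the spanner path $b_1\leadsto b_2$ in $G$ has length at most $(1+\eps)|b_1b_2|$ with every edge shorter than $L$; concatenating $a_2\leadsto a_1 \to b_1 \leadsto b_2$ gives an $e_2$-avoiding path of total length $<(1+\eps)L$, contradicting \Cref{fact:path-greedy}. In the \emph{disparate-length} case $|a_1b_1| < \eps L$, the endpoint $b_1$ is so close to $x_0$ that $X\cup\{b_1\}$ all sits in a region of diameter $O(\eps L)$; since $b_2$ is at distance $\approx L$ from this region and $G$ is a $(1+\eps)$-spanner, the shortest-path structure of $G$ already contains a chain of edges, each of length $< L$, running from this region out to $b_2$, and this chain together with $a_2\leadsto a_1$ again produces a forbidden $e_2$-avoiding detour of length $<(1+\eps)L$.

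The main obstacle I anticipate is the disparate-length case: controlling the \emph{lengths} of the edges on the spanner path from the neighborhood of $b_1$ to $b_2$, which forces one to invoke the spanner guarantee of $G$ at an intermediate distance scale rather than only at scale $L$, and to track how these intermediate edges interact with greedy's processing order (including ties). The constants in the hypothesis ($\diam(X)\le \eps R/12$, $|X,Y|\geq R$) and the cone half-angle $\Theta(\eps)$ should be tuned precisely so that in every case the detour length is strictly below the $(1+\eps)L$ threshold of \Cref{fact:path-greedy}; the bookkeeping is routine once the case split and the ``collapse $X$ to the point $x_0$'' idea are in place. Finally, combining $|\mathcal{C}|=O(\eps^{1-d})$ cones with the $O(1)$ per-cone bound, and absorbing the dependence on $d$ into the constants (each of $c$, the net size, and the per-cone count depends only on $d$ and $\eps$), gives $O(\eps^{1-d})$ edges between $X$ and $Y$, as claimed.
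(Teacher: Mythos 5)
Your proposal follows essentially the same route as the paper's proof: fix a representative $x\in X$, cover directions around $x$ by $O(\eps^{1-d})$ narrow cones, and argue by contradiction via \Cref{fact:path-greedy} that each cone contributes $O(1)$ edges between $X$ and $Y$ by exhibiting a detour of the form $t\leadsto s \to (s,q) \to q\leadsto y$ of length below $(1+\eps)|ty|$. The notable difference is that the paper does \emph{not} case-split on whether $|xq|$ and $|xy|$ are comparable. Instead it proves a single geometric near-collinearity inequality, \Cref{clm:xy-length} ($|xy|\geq |yq|+(1-\eps/4)|xq|$), and then runs one uniform chain of inequalities that bounds the detour length regardless of the ratio $|xq|/|xy|$; this is cleaner and gives exactly one edge per cone, not merely a constant.

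That said, your flagged ``disparate-length'' obstacle is exactly where the subtlety lies, and your sketch of its resolution is not a proof. You invoke ``a chain of edges, each of length $< L$, running from this region out to $b_2$'' without identifying which spanner guarantee produces such a chain, nor verifying that it avoids $e_2$; since $|b_1 b_2|$ can be close to $L$ in that regime, the naive bound $(1+\eps)|b_1 b_2|$ on the spanner path does not rule out $e_2$ lying on it, which is precisely the thing that must be ruled out to apply \Cref{fact:path-greedy}. The paper handles the analogous step by asserting $\max\{|yq|,|st|\}\leq\diam(X)$ and hence $(1+\eps)\max\{|yq|,|st|\}<|ty|$; note that for $|yq|$ (a distance between two points of $Y$) this bound is itself not immediate from the hypotheses and really needs \Cref{clm:xy-length} combined with $|sq|\geq R$ to be justified. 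So your instinct that this is the delicate part of the argument is correct; the missing content in your proposal is a quantitative, case-free version of the near-collinearity inequality, which both supplies the detour length bound and simultaneously shows that the $q\leadsto y$ portion of the detour is too short to contain the edge $(t,y)$.
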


Our proof of \Cref{lm:XtoYEdges} uses the following lemma.
\begin{lemma}
	\label{lm:dist_in_small_cone}
	Given a point $x$ in $\mathbb{R}^d$ and an arbitrarily small number $\eps \in (0, 1/2]$. Let $C$ be a cone with apex $x$ and angle at most $\eps / 8$. Then, for any two points $y$ and $z$ in $C$ such that $|xy| \geq |xq|$, $|xy| \geq |yq|  + (1 - \eps / 4)|xq|$.
\end{lemma}

\begin{proof} Consider the triangle $xyq$ and let $h$ be the projection of $q$ on $xy$. Since $|xy| \geq |xq|$, $h$ is in the segment between $x$ and $y$.  Let $\gamma = \angle yxq$; see \Cref{fig:cone}(b). Observe that $\gamma \leq \theta = \epsilon/8 < 1$ when $\eps\in (0,1/2)$. We have:
	\begin{equation*}
		\begin{split}
			|xy| - |yq| &\geq |xy| - |yh| - |hq| = |xh| - |hq| = (\cos{\gamma} - \sin{\gamma})|xq|\\
			&\geq   (\cos{\theta} - \sin{\theta})|xq| \qquad \mbox{($\cos(x) - \sin(x)$ is monotonically decreasing for $x\in (0,1)$)}\\
			&\geq  (1 - 2\theta)|xq| = (1-\eps/4)|xq|~.
		\end{split}    
	\end{equation*}
	The penultimate  inequality is due to that $\cos(x) - \sin(x) \geq 1 - 2x$ every $x\in [0,1]$. Thus, we conclude that $|xy| \geq |yq| + (1-\eps/4)|xq|$ as claimed.	
\end{proof}	

We are now ready to prove \Cref{lm:XtoYEdges}.

\begin{proof}[Proof of \Cref{lm:XtoYEdges}]
	Let $x$ be an (arbitrary) point in $X$. We cover the space  $\mathbb{R}^d$ by a minimum number of cones with apex $x$, each with an angle $\theta = \epsilon/8$;  the number of cones in the cover is $O(d^{(d + 1)/2}(\pi / \theta)^{d - 1}) = O(\epsilon^{1-d})$ \cite{NS07}.  We show below that for each cone, there is at most \emph{one edge} in $G$ between $X$ and the subset of points of $Y$ in the cone. It then follows that the number of edges in $G$ between $X$ and $Y$ is $O(\epsilon^{1-d})$.
	
	\begin{figure}[htb]
		\center{\includegraphics[width=0.9\textwidth]{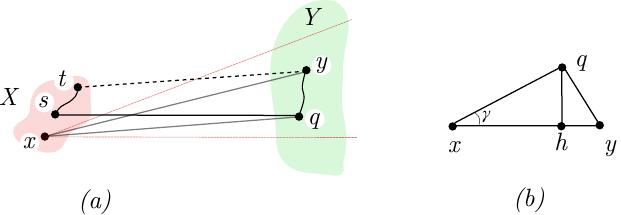}}
		\caption{(a) Points $y$ and $q$ of $Y$ are in the same cone bounded by two red dashed lines at apex $x$ and (b) the triangle $xqy$ with $\angle yxq = \gamma$.}
		\label{fig:cone}
	\end{figure}

	Suppose for a contradiction that there two edges $(s,q)$ and $(t,y)$ in $G$, where $s,t \in X$ and $q,y \in Y$. W.l.o.g, we assume $|xy| \geq |xq|$; see \Cref{fig:cone}(a). Our goal is to show that:
	\begin{equation}\label{eq:ty-contradiction}
		(1 + \eps)|ty| >  \nd_{G-(t,y)}(t,y)
	\end{equation} 
	which contradicts \Cref{fact:path-greedy}. From \Cref{lm:dist_in_small_cone}, we obtain $|xy| \geq |yq| + (1 - \eps / 4)|xq|$. Observe that:
	\begin{equation*}
		\begin{split} (1+\eps) |ty| &\geq (1+\eps)(|xy| - |tx|) \qquad \mbox{(by the triangle inequality)}\\
			& \geq (1+\eps)(|yq| + (1-\eps/4)|xq|) - (1+\eps) |tx| \qquad \mbox{(by \Cref{lm:dist_in_small_cone})}\\
			& > (1+\eps)|yq| + (1+\eps/2)|xq| - (1+\eps)|tx|  \qquad \mbox{(since $0 < \eps < 1$)}\\
			&\geq (1+\eps)|yq| + (1+\eps/2)(|sq| - |xs|)  - (1+\eps)|tx|  \qquad \mbox{(by the triangle inequality)}\\
			&\geq (1+\eps)|yq| + |sq| + \eps R/2 - (1+\eps/2) |xs|   - (1+\eps)|tx| \qquad \mbox{(since $|sq| \geq |X,Y| \geq R$)}\\
			&\geq (1+\eps)|yq| + |sq| + (1+\eps)|ts| + \eps R/2 - \underbrace{((1+\eps)|ts| +  (1+\eps/2) |xs|   + (1+\eps)|tx|)}_{\leq~ 3(1+\eps) \eps R/12 ~=~ \eps(1+\eps) R/4 \text{ since }\dm(X)\leq ~\eps R/12}\\
			&\geq  (1+\eps)|yq| + |sq| + (1+\eps)|ts| + \eps R/2 - (1+\eps)\eps R/4 \\
			&>  (1+\eps)|yq| + |sq| + (1+\eps)|ts| \qquad \mbox{ (since $\eps \leq 1/2$)}\\
		\end{split}
	\end{equation*}
	Since $G$ is a $(1+\eps)$-spanner, $(1+\eps)|yq| \geq \nd_G(y,q)$ and $(1+\eps) |ts| \geq \nd_G(t,s)$. Thus, we have:
	\begin{equation*}
		(1+\eps) |ty| > \nd_G(y,q) + \nd_G(s,q)  + \nd_G(t,s)
	\end{equation*}  
	Observe that $\max\{|yq|, |st|\} \leq \dm(X) \leq \eps R/12 ~\leq~ \eps |X,Y|/12 \leq \eps |ty|/12$. This implies $(1+\eps)\max\{|yq|, |st|\} < |ty|$ when $\eps \in (0,1/2]$; that is, edge $(t,y)$ cannot be in the shortest paths from $y$ to $q$ and from $s$ to $t$ in $G$. Thus, $\nd_G(y,q) = \nd_{G-(t,y)}(y,q)$ and $\nd_G(t,s) = \nd_{G-(t,y)}(t,s)$. Furthermore, since $(s,q)$ is an edge in $G$, $\nd_G(s,q) = \nd_{G-(t,y)}(s,q)$.  Thus, it holds that:
	
	\begin{equation*}
		(1+\eps) |ty|  >  \nd_{G-(t,y)}(y,q) + \nd_{G-(t,y)}(s,q)  + \nd_{G-(t,y)}(t,s) = \nd_{G-(t,y)}(t,y)~.
	\end{equation*}
	That is, \Cref{eq:ty-contradiction} holds as claimed. 
\end{proof}

\begin{lemma}\label{lm:EuiclideanGreedyLanky}
	Let $P$ be a set of $n$ points in $\mathbb{R}^d$ for a constant $d$ and $G$ be a greedy $(1+\epsilon)$-spanner of $P$ for $\eps \in (0,1/2]$. Then $G$ is $O(\eps^{1-2d})$-lanky.  
\end{lemma}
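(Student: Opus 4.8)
The plan is to deduce lankiness directly from \Cref{lm:XtoYEdges}. Fix a ball $\ball(x,r)$ centered at a vertex $x \in V$, and let $F$ be the set of edges of $G$ of length at least $r$ that are cut by this ball. Each such edge has exactly one endpoint inside $\ball(x,r)$; denote by $U \subseteq \ball(x,r) \cap P$ the set of these ``inner'' endpoints and by $W$ the set of ``outer'' endpoints. Every edge in $F$ thus goes between $U$ and $W$, and every such edge has length at least $r$. Since $U \subseteq \ball(x,r)$, the set $U$ has diameter at most $2r$. The issue is that \Cref{lm:XtoYEdges} requires the ``small'' side to have diameter at most $\frac{\eps R}{12}$ while the ``far'' side is at distance at least $R$; here $U$ has diameter $\Theta(r)$, not $O(\eps r)$, so we cannot apply the lemma to $(U,W)$ directly.

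To fix this, I would cover $U$ by a small collection of \emph{tiny} balls. Specifically, let $R$ play the role of $r$ here: choose $N$ to be an $\frac{\eps r}{12}$-net of $\ball(x,r)$. By the packing bound (\Cref{lm:packing}, or \Cref{obs:packingPackable}), $|N| = 2^{O(d)}\left(\frac{r}{\eps r/12}\right)^d = 2^{O(d)}\eps^{-d}$. Let $\{X_1,\dots,X_m\}$ be the corresponding partition (or cover) of $U$ into pieces, where $X_j$ consists of the points of $U$ closest to the $j$-th net point; then $m \le |N| = 2^{O(d)}\eps^{-d}$ and each $X_j$ has diameter at most $\frac{\eps r}{6} \le \frac{\eps r}{12}\cdot 2$... hmm, I need to be slightly careful with the constant: taking the net at scale $\frac{\eps r}{12}$ gives pieces of diameter at most $\frac{\eps r}{6}$, so I would instead take an $\frac{\eps r}{24}$-net so each piece $X_j$ has $\diam(X_j) \le \frac{\eps r}{12}$, which still gives $|N| = 2^{O(d)}\eps^{-d}$. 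Now for each piece $X_j$, all of $W$ lies at distance at least $r$ from $X_j$ (since every edge in $F$ has length at least $r$, and an edge from $X_j$ to $W$ witnesses $|X_j, W| \le$ its length... wait, that is the wrong direction). Let me instead argue: for a fixed $X_j$, let $W_j \subseteq W$ be the outer endpoints of edges of $F$ incident to $X_j$. Every edge of $F$ between $X_j$ and $W_j$ has length $\ge r$, so $|X_j, W_j| \ge$ — no, the distance could be realized by a pair not joined by an edge of $F$. This is the genuinely delicate point.

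Here is the resolution I expect to use: partition further so that the ``far'' side is also localized. Apply the same net construction to $\ball(x, 3r) \setminus \ball(x,r)$ (which contains all outer endpoints of cut edges of length at most, say, we also need to handle long edges — but edges of length $> 2r$ with one endpoint in $\ball(x,r)$ have their outer endpoint possibly far away, yet \emph{these are even easier}: cover all of space by $O(\eps^{1-d})$ cones at $x$ as in the proof of \Cref{lm:XtoYEdges} and argue one long edge per cone, or simply note such edges are already counted by \Cref{lm:XtoYEdges} applied to $(\{x\text{-net-ball}\}, Y)$ with $Y$ = outer endpoints). Concretely: for an edge $(u,w) \in F$ with $u \in X_j$ and $|uw| \ge r$, we have $\diam(X_j) \le \frac{\eps r}{12} \le \frac{\eps |uw|}{12}$, so setting $R_{uw} := |uw|$ we get $\diam(X_j) \le \frac{\eps R_{uw}}{12}$ and $|X_j, \{w\}| \le |uw| = R_{uw}$ — still the wrong inequality direction for the hypothesis $|X,Y| \ge R$. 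So the clean move is: \textbf{apply \Cref{lm:XtoYEdges} with $X = X_j$ and $Y = \{w : (u,w)\in F, u \in X_j\}$ and $R = r$}; the hypotheses demand $\diam(X_j) \le \frac{\eps r}{12}$ (true by choice of net) and $|X_j, Y| \ge r$. The latter holds because every point of $Y$ is the outer endpoint of an edge of length $\ge r$ originating in $X_j \subseteq \ball(x,r)$; but a short path — no. I will argue it via: any $w \in Y$ and any $u' \in X_j$ satisfy $|u'w| \ge |uw| - \diam(X_j) \ge r - \frac{\eps r}{12} \ge \frac{r}{2}$ where $(u,w)$ is the edge witnessing $w \in Y$; so $|X_j, Y| \ge r/2$. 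Then apply \Cref{lm:XtoYEdges} with $R = r/2$: we need $\diam(X_j) \le \frac{\eps (r/2)}{12} = \frac{\eps r}{24}$, so I take the net at scale $\frac{\eps r}{48}$, giving $|N| = 2^{O(d)}\eps^{-d}$ and each piece diameter $\le \frac{\eps r}{24}$. \Cref{lm:XtoYEdges} then yields $O(\eps^{1-d})$ edges of $F$ incident to $X_j$. Summing over the $\le 2^{O(d)}\eps^{-d}$ pieces gives $|F| \le 2^{O(d)}\eps^{-d} \cdot O(\eps^{1-d}) = O(\eps^{1-2d})$ up to a $2^{O(d)}$ factor, so $G$ is $O(2^{O(d)}\eps^{1-2d})$-lanky, matching the claimed bound $O(\eps^{1-2d})$ with the dimension-dependent constant absorbed.

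The main obstacle, as the discussion above shows, is \emph{not} the counting arithmetic but getting the geometric hypotheses of \Cref{lm:XtoYEdges} to line up: the ball has radius $r$ while the lemma wants the small side to shrink by a factor $\eps$, and one must verify that after covering the inner endpoints by an $\Theta(\eps r)$-net, each net-piece is genuinely $R'$-separated from its set of outer neighbors for an $R'$ comparable to $r$ — this is exactly where the bound ``length $\ge r$'' on cut edges is used, combined with the triangle inequality to absorb the $O(\eps r)$ diameter of a net-piece. Once that separation ($|X_j,Y_j| \ge r/2$, say) is established, \Cref{lm:XtoYEdges} applies verbatim to each of the $2^{O(d)}\eps^{-d}$ pieces and the result follows by summation. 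I would present this as: (i) reduce to counting edges of $F$ via a net cover of $\ball(x,r)\cap V$; (ii) bound the net size by packing; (iii) for each net-piece verify the two hypotheses of \Cref{lm:XtoYEdges}; (iv) sum.
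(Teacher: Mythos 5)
Your proposal is correct and takes essentially the same approach as the paper: cover the inner endpoints at scale $\Theta(\eps r)$ (the paper uses a cover of $\ball(x,r)$ by balls of radius $\frac{\eps r}{48}$, you use a $\frac{\eps r}{48}$-net partition of the inner endpoints), note by the triangle inequality that each small piece is at distance at least $r/2$ from the outer endpoints of incident cut edges, and then apply Lemma~\ref{lm:XtoYEdges} with $R = r/2$ to each piece and sum over the $O(\eps^{-d})$ pieces. The back-and-forth over constants in your write-up resolves to the same final choice as the paper.
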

\begin{proof}
	Let $\ball(x,r)$ be a ball of radius $r$ centered at a vertex $x\in P$. Let $\tilde{E}$ be the set of edges of length at least $r$ that are cut by $\ball(x,r)$.   Let $\mathcal{B}$ be a minimum collection of balls of radius $\frac{\eps r}{48}$ that covers $\ball(x,r)$. By \Cref{lm:packing}, $|\mathcal{B}| = O(\eps^{-d})$.   We observe that, since every edge in $\tilde{E}$ has length at least the diameter of every ball in $\mathcal{B}$, each edge in $\tilde{E}$ must be cut by some ball in $\mathcal{B}$. 
	
	Consider a ball $B \in \mathcal{B}$, and $e = (u,v)\in \tilde{E}$ be an edge that is cut by $B$. W.l.o.g., we assume that $u\in B$. Then by the triangle inequality, $|v, B|\geq w(e) - \diam(B) \geq r - \frac{\eps r}{24}\geq r/2$ when $\eps \leq 1/2$. Since $\diam(B) = \frac{r\eps}{24}$, by \Cref{lm:XtoYEdges}, there are at most $O(\eps^{1-d})$ such edges $e$ in $\tilde{E}$ that cut $B$.  Thus, $|\tilde{E}|\leq |\mathcal{B}|O(\eps^{1-d}) = O(\eps^{1-2d})$; the lemma follows.
\end{proof}

\begin{proof}[Proof of \Cref{thm:strong_euclidean_separator}] When $d = 1$, the theorem holds trivially. Thus, we assume that $d\geq 2$. By \Cref{lm:EuiclideanGreedyLanky}, $G$ is $\tau$-lanky with $\tau = O(\eps^{1-2d})$. Since the Euclidean space of dimension $d$ is $(\eta, d)$-packable with $\eta = 2^{O(d)}$ by \Cref{lm:packing}, by \Cref{cor:subgraphSepLanky}, $H$ has a separator $S$ of size $O(\eps^{1-2d}2^{O(d)}k^{1-1/d})$. Also by \Cref{cor:subgraphSepLanky}, $S$ can be found in $O((2^{O(d)}8^d + \eps^{1-2d})k)$ time.
\end{proof}

\subsection{Point Sets of Low Fractal Dimensions}\label{subsec:fractal}

We now consider a point set $P$ in $\mathbb{R}^d$ that has a small fractal dimension. Since $P$ is still a set of points in  $\mathbb{R}^d$, \Cref{lm:EuiclideanGreedyLanky} remains true for the greedy $(1+\eps)$-spanner $G$ of $P$. The only difference is that, the metric induced by $P$, by \Cref{def:fractal}, is an $(O(1), d_f)$-packable metric. This implies a separator of smaller size. 

\begin{restatable}{theorem}{StrongFractalSeparator}
	\label{thm:strong_factal_separator}
	Let $P$ be a given set of points in $\mathbb{R}^d$ that has fractal dimension $d_f$, and $G$ be the greedy $(1+\epsilon)$-spanner of $P$. Then any $k$-vertex subgraph $H$ of $G$ has a $\left(1 - \frac{1}{2^{O(d_f)}}\right)$-balanced separator $S$ of size $O(2^{O(d_f)}\eps^{1 - 2d}k^{1-1/d_f})$. Furthermore, $S$ can be found in expected $O((2^{O(d_f)} + \eps^{1 - 2d})k)$ time given $H$.
\end{restatable}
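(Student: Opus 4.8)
The plan is to reuse the machinery already assembled for the Euclidean case almost verbatim, changing only the packable metric in which we apply the separator construction. The key observation is that \Cref{lm:XtoYEdges} and \Cref{lm:EuiclideanGreedyLanky} are statements purely about the point set $P \subseteq \mathbb{R}^d$ and the greedy $(1+\eps)$-spanner on it; their proofs use only the ambient Euclidean geometry of $\mathbb{R}^d$ (the cone-covering argument, the triangle inequality, and \Cref{fact:path-greedy}) and never invoke the fractal dimension. Hence \Cref{lm:EuiclideanGreedyLanky} applies unchanged: the greedy $(1+\eps)$-spanner $G$ of $P$ is $\tau$-lanky with $\tau = O(\eps^{1-2d})$, where the exponent still carries the \emph{ambient} dimension $d$, not $d_f$.

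First I would record that the metric induced by $P$ is $(O(1), d_f)$-packable. This is essentially a restatement of \Cref{def:fractal}: given $r \in (0,1]$ and an $r$-separated set $Q \subseteq P$ inside a unit ball, $Q$ is in particular an $r$-net of $Q$ itself (it is $r$-separated and trivially $r$-covering), so taking $R = 1$ in \Cref{def:fractal} — which requires $R \geq 2r$, i.e.\ $r \leq 1/2$, and one handles $r \in (1/2,1]$ by a constant-factor loss absorbed into $\eta$ — gives $|Q| = |Q \cap \ball(p,1)| = O((1/r)^{d_f})$. Thus the packing constant is $\eta = O(1)$ and the packing dimension is $d_f$.

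Next I would simply feed $G$ into \Cref{cor:subgraphSepLanky} with the $(O(1), d_f)$-packable metric in place of the $(2^{O(d)}, d)$-packable Euclidean metric. Since $G$ is $\tau$-lanky with $\tau = O(\eps^{1-2d})$, any $k$-vertex subgraph $H$ of $G$ (which is also $\tau$-lanky, by monotonicity) has a $\bigl(1 - \frac{1}{O(1)\cdot 2^{d_f+1}}\bigr)$-balanced separator of size $O(\tau \cdot O(1) \cdot 8^{d_f} k^{1-1/d_f}) = O(2^{O(d_f)} \eps^{1-2d} k^{1-1/d_f})$ when $d_f \geq 2$, found in $O((2^{O(d_f)} + \eps^{1-2d})k)$ expected time; the hypothesis $d_f \geq 2$ in the theorem statement ensures we are in the $n^{1-1/d_f}$ regime of \Cref{cor:subgraphSepLanky} rather than the logarithmic one. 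This matches the claimed bounds.

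There is essentially no hard part here — the work is all in \Cref{lm:XtoYEdges} and in the separator construction of Section~3, both already done. The only point requiring a moment's care is the one noted above: the lankiness parameter $\tau$ depends on the ambient dimension $d$ (because the cone cover in \Cref{lm:EuiclideanGreedyLanky} lives in $\mathbb{R}^d$), whereas the packing/separator bound improves to $d_f$ (because the net-size bound that drives \Cref{lm:shortEdgesCut} now uses \Cref{def:fractal}). Conflating the two would give the wrong exponent on $\eps$, so I would state explicitly that \Cref{lm:EuiclideanGreedyLanky} is applied with dimension $d$ and \Cref{cor:subgraphSepLanky} with packing dimension $d_f$.
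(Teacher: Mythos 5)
Your proposal is correct and matches the paper's own proof essentially verbatim: observe that the metric induced by $P$ is $(O(1),d_f)$-packable, invoke \Cref{lm:EuiclideanGreedyLanky} to get $O(\eps^{1-2d})$-lankiness (exponent in the ambient dimension $d$), and plug into \Cref{cor:subgraphSepLanky} with packing dimension $d_f$. Your added remarks — that an $r$-separated set is its own $r$-net, that $r\in(1/2,1]$ needs a constant absorbed into $\eta$, and the explicit flag that $\tau$ carries $d$ while the separator exponent carries $d_f$ — are more careful than the paper's one-line justification, and all accurate.
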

\begin{proof}
	By \Cref{def:fractal} and \Cref{def:packabeMetric}, the metric induced by $P$ is $(O(1), d_f)$-packable. By \Cref{lm:EuiclideanGreedyLanky}, $G$ is $O(\eps^{1 - 2d})$-lanky. By \Cref{cor:subgraphSepLanky}, every $k$-vertex subgraph $H$ of $G$ has a $\left(1 - \frac{1}{2^{O(d_f)}}\right)$-balanced separator of size $O(2^{O(d_f)}\eps^{1 - 2d}k^{1-1/d_f})$ that can be found in expected $O((2^{O(d_f)} + \eps^{1 - 2d})k)$ time.  	
\end{proof}

We note that \Cref{thm:strong_factal_separator} implies Item (3) in \Cref{thm:EuclideanUBGsSfractal-Sep}. 

\subsection{Unit Ball Graphs}\label{subsec:UBG}

In this section, we show that the greedy spanners of UBGs have sublinear separators, as described in the following theorem. We note that \Cref{thm:strong_unibal_separator} implies Item (2) of \Cref{thm:EuclideanUBGsSfractal-Sep}. 

\begin{restatable}{theorem}{StrongBallSeparator}
	\label{thm:strong_unibal_separator}
	Let $G$ be the greedy $(1+\epsilon)$-spanner of a unit ball graph in $\mathbb{R}^d$. Then any $k$-vertex subgraph $H$ of $G$ has a $\left(1 - \frac{1}{2^{O(d)}}\right)$-balanced separator $S$ of size $O(2^{O(d)}\eps^{1 - 2d}k^{1-1/d})$. Furthermore, $S$ can be found in expected $O((2^{O(d)} + \eps^{1 - 2d})k)$ time given $H$.
\end{restatable}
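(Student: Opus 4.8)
The plan is to reduce the unit ball graph case to the Euclidean case that has already been handled, by showing that the greedy $(1+\eps)$-spanner of a UBG is $O(\eps^{1-2d})$-lanky with respect to the Euclidean metric on the centers of the balls, and then invoke \Cref{cor:subgraphSepLanky} exactly as in the proof of \Cref{thm:strong_euclidean_separator}. Recall that a unit ball graph is the intersection graph of unit balls in $\R^d$: the vertices are identified with the centers, and two vertices $u,v$ are adjacent in the UBG iff $|uv| \le 2$ (so that the balls of radius $1$ intersect). The weight of such an edge is the Euclidean distance $|uv|$, which lies in $[0,2]$. Thus a UBG is a graph in the Euclidean metric $(\R^d,\|\cdot\|)$ in the sense of \Cref{sec:prelim}, except that its edge set is restricted to pairs at distance at most $2$; this bounded-length feature is what makes the argument work.

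First I would establish the UBG analogue of \Cref{lm:XtoYEdges}: if $X,Y$ are subsets of centers with $\diam(X) \le \eps R/12$ and $|X,Y| \ge R$, then the greedy spanner $G$ of the UBG has $O(\eps^{1-d})$ edges between $X$ and $Y$. The proof of \Cref{lm:XtoYEdges} is almost entirely metric: it covers $\R^d$ by $O(\eps^{1-d})$ cones of angle $\eps/8$ at an apex $x \in X$, and shows via \Cref{clm:xy-length} and a chain of triangle inequalities that at most one edge of $G$ between $X$ and $Y$ can fall in each cone; the only place spanner-specific facts enter is \Cref{fact:path-greedy} and the observation that $(1+\eps)|yq| \ge \nd_G(y,q)$, $(1+\eps)|ts| \ge \nd_G(t,s)$ because $G$ is a $(1+\eps)$-spanner. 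Both of these facts hold verbatim for the greedy spanner of a UBG: \Cref{fact:path-greedy} holds for the greedy spanner of \emph{any} edge-weighted graph, and the stretch bound holds for the pairs $(y,q)$ and $(t,s)$ provided those pairs are \emph{edges of the UBG} — which they are, since $|yq| \le \diam(X \cup Y)$-type bounds force them to be short enough; more precisely $|yq|, |st| \le \diam(X) \le \eps R/12 \le 2$ when $R$ is not too large, so $q,y$ are UBG-adjacent and $s,t$ are UBG-adjacent, hence the spanner stretches their \emph{graph} distance, which equals their Euclidean distance since it is the complete-graph/UBG distance. Here I should be a little careful about the regime of $R$: the lankiness definition quantifies over all radii $r$, but edges of $G$ have length at most $2$, so a ball of radius $r > 2$ cuts no edges of length $\ge r$ and lankiness is trivially satisfied; for $r \le 2$ all the relevant points lie in a bounded region and $\diam(X) \le \eps R/12 \le 1$, so the pairs I need to be UBG-edges indeed are.

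Next I would prove the UBG analogue of \Cref{lm:EuiclideanGreedyLanky}: the greedy $(1+\eps)$-spanner of a UBG in $\R^d$ is $O(\eps^{1-2d})$-lanky. This is the same argument as in \Cref{lm:EuiclideanGreedyLanky}: fix a ball $\ball(x,r)$ with $x$ a center; as just noted we may assume $r \le 2$. Cover $\ball(x,r)$ by $O(\eps^{-d})$ balls of radius $\eps r/48$ (possible by the Euclidean packing bound \Cref{lm:packing}); each cut edge of length $\ge r$ is cut by one of these small balls; applying the UBG version of \Cref{lm:XtoYEdges} with $X$ the small ball and $Y$ the far side gives $O(\eps^{1-d})$ cut edges per small ball; multiplying yields $O(\eps^{1-2d})$ total. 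Finally, since the Euclidean space of dimension $d$ is $(2^{O(d)},d)$-packable (\Cref{lm:packing}) and the UBG is a graph in this packable metric, \Cref{cor:subgraphSepLanky} with $\tau = O(\eps^{1-2d})$, $\eta = 2^{O(d)}$ gives that every $k$-vertex subgraph $H$ has a $(1-\tfrac{1}{2^{O(d)}})$-balanced separator of size $O(2^{O(d)}\eps^{1-2d}k^{1-1/d})$, found in $O((2^{O(d)}+\eps^{1-2d})k)$ expected time, which is exactly the claimed bound.

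\textbf{Main obstacle.} The one genuine subtlety — and the step I would write most carefully — is the verification inside the UBG analogue of \Cref{lm:XtoYEdges} that the auxiliary pairs whose stretch we invoke, namely $(y,q)$ and $(s,t)$, are actually edges of the underlying UBG (equivalently, are at Euclidean distance $\le 2$), so that "$G$ is a $(1+\eps)$-spanner" gives $\nd_G(y,q)\le(1+\eps)|yq|$ etc.; in the complete-Euclidean-graph setting this was automatic because every pair is an edge. This forces the explicit restriction to $r \le 2$ (for larger $r$ there is nothing to prove) and a short check that in that regime $\diam(X) \le \eps R/12 \le 1 < 2$ so the relevant near pairs are UBG-adjacent. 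Everything else is a transcription of \Cref{subsec:Euclidean} with the same constants.
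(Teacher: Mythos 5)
Your high-level plan is exactly the paper's: prove a UBG analogue of \Cref{lm:XtoYEdges} by the cone argument, deduce $O(\eps^{1-2d})$-lankiness as in \Cref{lm:EuiclideanGreedyLanky}, then invoke \Cref{cor:subgraphSepLanky}. But your handling of the one step you correctly flagged as the subtle one has a genuine error.

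You write that $|yq|, |st| \le \diam(X) \le \eps R/12 \le 2$, ``so $q,y$ are UBG-adjacent.'' The bound $|yq| \le \diam(X)$ is false: $y$ and $q$ both lie in $Y$, not $X$, and the hypothesis of the lemma bounds only $\diam(X)$, not $\diam(Y)$. In the lankiness application $Y$ is everything outside a small ball, so $\diam(Y)$ is completely uncontrolled and $|yq|$ can a priori be large. Verifying that $(y,q)$ is a UBG edge is precisely where the paper does nontrivial work: starting from \Cref{clm:xy-length} it derives, via a chain of triangle inequalities,
$$|yq| \;\le\; |ty| + \tfrac{\eps R}{12} + (1-\eps/4)\tfrac{\eps R}{12} - (1-\eps/4)R \;\le\; |ty| \;\le\; 2\mu,$$
using the fact that $(t,y)$ is already an edge of the spanner and hence of the UBG. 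The intra-$X$ pair $(s,t)$ is then handled by $|st| \le \diam(X) \le \eps R/12 \le R \le |ty| \le 2\mu$, again routed through the bound on $|ty|$ rather than a direct constant bound. Your ``short check that $\diam(X) \le 1 < 2$'' also implicitly sets the ball radius $\mu = 1$; the paper normalizes so the minimum inter-point distance is $1$ and the balls have radius $\mu$, so the right threshold is $2\mu$, which is another reason a direct numerical bound on $\diam(X)$ cannot substitute for the comparison with $|ty|$. Without the derivation of $|yq| \le |ty| \le 2\mu$, the claim that $\nd_{G_B}(y,q) = |yq|$ is unjustified, and the rest of the argument (which applies the stretch bound to $(y,q)$) does not go through. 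The remainder of your proposal — the restriction to $r \le 2\mu$, the covering by $O(\eps^{-d})$ small balls, and the application of \Cref{cor:subgraphSepLanky} with $\tau = O(\eps^{1-2d})$, $\eta = 2^{O(d)}$ — is correct and matches the paper.
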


Unit ball graphs are intersection graphs of unit balls. Since we scale the metric so that the minimum inter-point distance is $1$, we assume that the unit ball graphs are the intersection graphs of balls of radius $\mu$ for some positive $\mu$.

We consider the greedy spanners of unit ball graphs. We prove that a greedy spanner of a unit ball graph in a $d$-dimensional Euclidean space also admits $\tau$-lanky property. We prove a lemma which is analogous to \Cref{lm:XtoYEdges}.

\begin{lemma}
	Let $P$ be a set of $n$ points in $\mathbb{R}^d$ for a constant $d$. Let $G_B$ be the unit ball graphs with the set of centers $P$ and $G$ be a greedy $(1+\epsilon)$-spanner of $G_B$ for $\eps \in (0,1/2]$. Let $X$ and $Y$ be two subsets of $P$ such that $\diam(X) \leq \frac{\eps R}{12}$ and $|X,Y| \geq R$. Then $G$ has $O(\epsilon^{1-d})$ edges between $X$ and $Y$.
\end{lemma}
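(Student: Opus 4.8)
The plan is to adapt the proof of \Cref{lm:XtoYEdges} to the unit ball graph setting, where the only structural change is that distances along the host graph $G_B$ are no longer simply Euclidean distances between the centers. First I would recall how edge weights behave in a unit ball graph: if the centers are points in $P$ and balls have radius $\mu$, then two vertices $u,v$ are adjacent in $G_B$ with weight $\max\{0,|uv|-2\mu\}$ (or, under the convention in which distances are measured between the balls, the weight is comparable to $|uv|$ up to an additive $2\mu$ term). Since we have scaled so that the minimum inter-point distance is $1$ and the ball radius is $\mu$, the key observation is that $G_B$-distance and Euclidean distance differ only by a bounded additive amount on each edge; in particular, for a path of length $\ell$ its $G_B$-weight and its Euclidean length are within an $O(\mu \ell)$ factor, but more usefully, whenever $|uv| \geq R$ for $R$ not too small, $w_{G_B}(u,v)$ and $|uv|$ are within constant factors.

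The core argument then mirrors \Cref{lm:XtoYEdges} verbatim: fix $x\in X$, cover $\mathbb{R}^d$ by $O(\epsilon^{1-d})$ cones of angle $\theta=\epsilon/8$ at apex $x$, and suppose for contradiction that a single cone contains two $G$-edges $(s,q),(t,y)$ with $s,t\in X$, $q,y\in Y$, and $|xy|\geq|xq|$. I would reprove \Cref{clm:xy-length} unchanged — it is purely Euclidean geometry about the triangle $xyq$ and does not reference the metric on $G_B$. Then I would run the same chain of inequalities bounding $(1+\epsilon)w_{G_B}(t,y)$ from below by $\nd_{G-(t,y)}(t,y)$, being careful that every appearance of a Euclidean length $|ab|$ that serves as an edge weight must be replaced by the $G_B$-weight $w_{G_B}(a,b)$, and every appearance used only via the triangle inequality among points must stay Euclidean. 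The diameter hypothesis $\diam(X)\leq \epsilon R/12$ is in the Euclidean metric, and since $X$ lives inside one host edge's worth of points, the $G_B$-weights of edges inside $X$ are still $O(\epsilon R)$, so the "absorbed" term $\leq \epsilon(1+\epsilon)R/4$ survives with a slightly worse constant.

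The main obstacle I anticipate is handling the additive $2\mu$ discrepancy cleanly: the leapfrog-style contradiction compares $(1+\epsilon)\cdot(\text{edge weight})$ against a sum of shortest-path costs, and an additive slack of $O(\mu)$ per edge could, in principle, be large relative to $R$ if $R$ is comparable to $\mu$. The fix is to note that in the application (the analogue of \Cref{lm:EuiclideanGreedyLanky}) we only ever invoke this lemma with $R = \Theta(r)$ for a ball radius $r$, and edges of the greedy spanner that are "long" (length $\geq r$) automatically have $G_B$-weight bounded below, so one can assume $R \gg \mu$ or else there are $O(1)$ relevant edges by a direct packing count. Alternatively — and this is cleaner — one works throughout with the convention that $w_{G_B}(u,v) = |uv|$ when $u,v$ are adjacent (i.e.\ treat the unit ball graph simply as the subgraph of the complete Euclidean graph on $P$ induced by pairs within distance $2\mu$), in which case the edge weights \emph{are} Euclidean distances and the entire proof of \Cref{lm:XtoYEdges} goes through word for word, with the single extra point to check being that the shortest paths $\nd_G(y,q)$, $\nd_G(s,q)$, $\nd_G(t,s)$ used in the contradiction only traverse $G_B$-edges — but that is automatic since $G$ is a subgraph of $G_B$. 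So the plan is: (i) fix the edge-weight convention and restate the lemma; (ii) reduce to the case $R$ large by a packing argument; (iii) replay the cone cover, \Cref{clm:xy-length}, and the displayed inequality chain with Euclidean lengths reinterpreted as $G_B$-weights where they are edge weights; (iv) conclude the contradiction with \Cref{fact:path-greedy}, hence $O(\epsilon^{1-d})$ edges per cone and $O(\epsilon^{1-d})$ total.
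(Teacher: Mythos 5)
Your overall plan is correct, and you correctly identify the clean convention (treat $G_B$ as the subgraph of the complete Euclidean graph on pairs at distance at most $2\mu$, with Euclidean edge weights), which is exactly how the paper sets things up. But the "single extra point to check" that you name is not the real obstruction, and the step you omit is the actual content of the paper's adaptation.

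You say the only thing to verify is that the shortest paths $\nd_G(y,q)$, $\nd_G(s,q)$, $\nd_G(t,s)$ "only traverse $G_B$-edges," and that this is automatic since $G \subseteq G_B$. That is indeed automatic, but it is not the issue. The inequality chain in \Cref{lm:XtoYEdges} closes with the step ``since $G$ is a $(1+\eps)$-spanner, $(1+\eps)|yq| \geq \nd_G(y,q)$ and $(1+\eps)|ts| \geq \nd_G(t,s)$.'' In the unit ball setting, $G$ is a $(1+\eps)$-spanner of $G_B$, not of the complete Euclidean graph, so the spanner property gives $\nd_G(y,q) \leq (1+\eps)\nd_{G_B}(y,q)$, and one only has $\nd_{G_B}(y,q) = |yq|$ if $(y,q)$ happens to be an edge of $G_B$. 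If $y$ and $q$ are not adjacent in $G_B$ (i.e., $|yq| > 2\mu$), then $\nd_{G_B}(y,q)$ can be arbitrarily larger than $|yq|$ and the chain breaks. So the genuinely new work in the unit-ball case is to prove $(y,q) \in E(G_B)$ and $(t,s) \in E(G_B)$, which the paper does by deriving $|yq| \leq |ty|$ from \Cref{clm:xy-length} and the triangle inequality, then using $|ty| \leq 2\mu$ (which holds because $(t,y)$ is an edge of $G \subseteq G_B$), and by observing that $\diam(X) \leq \eps R/12 < R \leq |ty| \leq 2\mu$ so every pair in $X$ is $G_B$-adjacent. Your proposal, as written, would not let you close the contradiction. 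The other speculative parts of your writeup — the alternative ``$\max\{0,|uv|-2\mu\}$'' weight convention and the ``reduce to $R$ large by packing'' reduction — are unnecessary once the clean convention is adopted, and the paper does not use them.
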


\begin{proof}
	 Let $x$ be an arbitrary point in $X$. We partition the space $\mathbb{R}^d$ into $O(\eps^{1 - d})$ cones with apex $x$ and angle $\theta = \eps/8$. We prove that for each cone $C$, there is at most one edge in $G$ from $X$ to $Y \cap C$. We follow the same proof strategy of \Cref{lm:XtoYEdges}. Assume that there are two edges $(s, q)$ and $(t, y)$ in $G$ with $s, t \in X$ and $q, y \in Y$ (see \Cref{fig:cone}). W.o.l.g., we suppose that $|xy| \geq |xq|$. Our goal is to show that $(1 + \eps)d\delta_{G_B}(t, y) > \delta_{G - (t, y)}(t, y)$, which contradicts \Cref{fact:path-greedy}.  
	 
	 First, we prove that $(q, y) \in E(G_B)$. By \Cref{lm:dist_in_small_cone}, we obtain $|xy| - |yq| \geq (1 - \eps / 4)|xq|.$ Using the triangle inequality, we have that:
	 \begin{equation}
	 	\begin{split}
	 		|yq| &\leq \underbrace{|xy|}_{\leq |ty| + |xt|} - (1 - \eps/4)\underbrace{|xq|}_{\geq |sq| - |sx|} \leq |ty| + \underbrace{|xt|}_{\leq \frac{\eps R}{12}} + (1 - \eps/4)\underbrace{|sx|}_{\leq \frac{\eps R}{12}} - (1 - \eps/4)\underbrace{|sq|}_{\geq R}\\
	 		&\leq |ty| + \frac{\eps R}{12} + (1 - \eps/4)\frac{\eps R}{12} - (1 - \eps/4)R \leq |ty| + \frac{\eps R}{12} + \frac{\eps R}{12} - \frac{3R}{4} \leq |ty| \leq 2\mu \quad \mbox{(since $\eps \leq 1/2$)}.
	 	\end{split}
	 \end{equation}
	The bound $|yq| \leq 2\mu$	implies that $y$ is adjacent to $q$ in $G_B$. For any pair of vertices $(v_1, v_2)$ with $v_1, v_2 \in X, v_1 \neq v_2$, we observe that $|v_1v_2| \leq \frac{\eps R}{12} \leq R \leq |ty| \leq 2\mu$. Hence, $v_1$ is adjacent to $v_2$ in $G_B$. Then, we have $(t, s) \in E(G_B)$. In \Cref{lm:XtoYEdges}, we have prove that $(1 + \eps)|ty| > (1 + \eps)|yq| + |sq| + (1 + \eps)|ts|$. Since $G$ is a $(1 + \eps)$-spanner of $G_B$, $\nd_G(y, q) \leq (1 + \eps)\nd_{G_B}(y, q) = (1 + \eps)|yq|$ and $\nd_G(t, s) \leq (1 + \eps)\nd_{G_B}(t, s) = (1 + \eps)|ts|$. Then, we have:
	\begin{equation}
		(1 + \eps)|ty| > \nd_G(y, q) + \nd_G(q, s) + \nd_G(s, t)~.
	\end{equation}
	The edge $(t, y)$ cannot be in the shortest path from $s$ to $t$ in $G$ or the path from $y$ to $q$ because $|ty| > \max\{\nd_{G_B}(y, q), \nd_{G_B}(s, t)\}$. Since $(q, s) \in G$, we obtain $\nd_{G- (t, y)}(q, s) = \nd_G(q, s)$.  Hence, we have:
	\begin{equation}
		(1 + \eps)|ty| > \nd_{G - (t, y)}(y, q) + \nd_{G - (t, y)}(q, s) + \nd_{G - (t, y)}(s, t) \geq \nd_{G - (t, y)}(t, y)~.
	\end{equation}
	Thus, $(1 + \eps)\delta_{G_B}(t,y) > \nd_{G-(t, y)}(t, y)$, a contradiction.
\end{proof}

The proof of \Cref{lm:EuiclideanGreedyLanky} implies that $G$ is $O(\eps^{1 - 2d})$lanky. The same proof of \Cref{thm:strong_euclidean_separator} yields \Cref{thm:strong_unibal_separator}.

\subsection{Derandomization}\label{subsec:derandomize}
In this section, we give a deterministic algorithm for finding a sublinear separator of a set of points in $\mathbb{R}^d$ whose spread is bounded by a parameter $\Delta  > 0$. The same argument could be applied to point sets of low fractal dimension and unit ball graphs.  

\begin{theorem}
	\label{thm:deterministic_separator_greedy}
	Let $P$ be a given set of points in $\mathbb{R}^d$ with spread $\Delta$ and $G$ be the greedy $(1+\epsilon)$-spanner of $P$ for some $\eps \in (0,1/2]$. Then any $k$-vertex subgraph $H$ of $G$ has a $\left(1 - \frac{\eps^{O(d)}}{d}\right)$-balanced separator $S$ of size $O(\log{\Delta}^{1/d}\eps^{-O(1)}k^{1-1/d})$. Furthermore, $S$ can be found in $O(k)$ time given $H$.
\end{theorem}


Our algorithm uses a result of \cite{EMT95}. 

\begin{definition}
	A set of balls $\mathcal{B}$ in $\mathbb{R}^d$ is \emph{$\psi$-ply} if for any point $p \in \mathbb{R}^d$, there are at most $\psi$ balls in $\mathcal{B}$ containing $p$.  
\end{definition}

Given a set  $\psi$-ply set $\mathcal{B}$ of $n$ balls, we say that a ball $\ball(p, r)$ is an  \emph{$(\alpha,\beta)$-separating ball} if (a) the number of balls entirely contained in  $\ball(p, r)$ and the number of balls outside and disjoint from  $\ball(p, r)$ are at most $\alpha$ each and (b) the number of balls intersecting $\ball(p, r)$ is at most $\beta$.  Eppstein, Miller and Teng\cite{EMT95} gave a deterministic linear time algorithm that finds a balanced separating ball for any given  $\psi$-ply set of balls.

\begin{theorem}[Theorem 4.2 in \cite{EMT95}, adapted]
	\label{thm:deterministic_separator}
	Given a $\psi$-ply set $\mathcal{B}$ of $n$ balls in $\mathbb{R}^d$,  one can find a $\left(\frac{d + 1}{d + 2}n, O(\psi^{1/d}n^{1 - 1/d})\right)$-separating ball for  $\mathcal{B}$  in (deterministic) linear time. 
\end{theorem} 

Given a set of edges $E$, we define a set of balls $\mathcal{B}_E$ as follows. For each edge $e\in E$, we construct a ball $\ball_e = \ball(m_e,w(e)/2)$ with diameter $e$ where the center $m_e$ is the midpoint of $e$.  We then define $\mathcal{B}_E = \{\ball_e: e\in E\}$.  In \Cref{lm:ply} below, we show that, if a graph $G$ is a greedy spanner, then $\mathcal{B}_{E(G)}$ is $\left(\eps^{-O(d)}\log(\Delta)\right)$-ply. We later use this result and \Cref{thm:deterministic_separator} to find a separator of $G$ deterministically.

\begin{lemma}
	\label{lm:ply}
	Let $P$ be a given set of points in $\mathbb{R}^d$ with spread $\Delta$ and $G$ be the greedy $(1+\epsilon)$-spanner of $P$ for some $\eps \in (0,1/2]$.  
	Then $\mathcal{B}_{E(G)}$ is $(\eps ^{-O(d)}\log{\Delta})$-ply.
\end{lemma}

We now focus on proving \Cref{lm:ply}. 	Let $p$ be an arbitrary point in $\mathbb{R}^d$ and $\mathcal{B}_p$ be the set of balls in $\mathcal{B}_{E(G)}$ containing $p$. Our goal is to bound $\mathcal{B}_p$ by a function of $d$ and $\eps$. Observe that:

\begin{observation}\label{obs:me-vs-p} Given an edge $e = (u, v)$. If $p$ is in $\ball_e$, then $m_e\in \ball(p, |uv| / 2)$. 
\end{observation}
\begin{proof}
	Observe that $p$ is in $\ball_e$ if and only if $|pm_e| \leq \frac{|uv|}{2}$, which implies that $m_e$ is in $\ball(p, |uv| / 2)$.
\end{proof}
Let $d_{min}$ and $d_{max}$ respectively be the minimum and maximum distance between two points in $P$.  By \Cref{obs:me-vs-p}, every center of the balls in $\mathcal{B}_p$ must be inside $\ball(p, d_{max} / 2)$. Let $M_p = \{e\in E(G): \ball_e \in \mathcal{B}_p\}$. For each $i \in [1, \lfloor \log{\Delta} \rfloor + 1]$, let $M_p^i =\{e: e\in M_p, 2^{i - 1}d_{min} \leq w(e)  <   2^{i}d_{min} \}$. By definition, $M_p$ is the union of all $M_p^i$.  By \Cref{obs:me-vs-p}, we have:
\begin{observation}
	\label{obs:level_balls}
	The centers of the balls in $\mathcal{B}_{M_p^i}$ are in $\ball(p, 2^{i - 1}d_{min})$.
\end{observation}

Next, we cover $\ball(p, 2^{i - 1}d_{min})$ by a set of $O(\eps^{-d})$ balls of radius $(2^{i - 1}\eps d_{min}) / 24$, denoted by $\mathcal{N}_i$. Note that $\mathcal{N}_i$  exists due to \Cref{lm:packing}.  In the following claim, we show that each ball in $\mathcal{N}_i$ contains at most $\eps^{-O(d)}$ centers of  the balls in $\mathcal{B}_{M_p^i}$.
\begin{lemma}
	\label{clm:bounded_midpoints}
	For any point $x$ and a positive real number $D$, $\ball(x, \eps D/24)$ contains at most $\eps^{-O(d)}$ midpoints of  the edges in $E(G)$  whose lengths are between $D$ and $2D$. 
\end{lemma}
\begin{proof}
	Similar to the proof of \Cref{lm:XtoYEdges}, we cover the space $\mathbb{R}^d$ by a set of $O(\eps^{-d})$ cones with apex $x$ and angle $\eps/8$. Let $M_{x, D} = \{e \in E(G): D \leq w(e) \leq 2D,~m_e \in \ball(x, \eps D / 24)\}$ and $V(M_{x, D})$ is the set of endpoints of the edges in $M_{x, D}$. Our goal is to show that $|M_{x, D}| = \eps^{-O(d)}$. First, we show that each cone contains at most one point $V(M_{x, D})$ which is incident to exactly one edge in $M_{x, D}$. Assume that there is a cone containing two points $y$ and $q$ that are endpoints of two \emph{different} edges in $M_{x, D}$; $y$ and $q$ might be the same point. Let $t$ and $s$ be the points such that $(t, y), (s, q) \in M_{x, D}$. See \Cref{fig:cross_cone} for an illustration. W.l.o.g, we assume that $|xy| \geq |xq|$. By \Cref{lm:dist_in_small_cone}, we have $|xy| \geq |yq| + (1 - \eps/4)|xq|$. To obtain a contradiction, we show  below that $(1 + \eps)|ty| > \nd_{G-(t, y)}(t, y)$. 
	\begin{figure}[htb]
		\center{\includegraphics[width=0.5\textwidth]{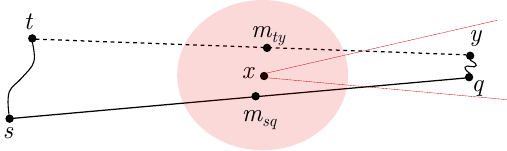}}
		\caption{A cone containing two points in $V(M_{x, D})$. We later prove that this case cannot happen.}
		\label{fig:cross_cone}
	\end{figure}

	Let $m_{ty}$ and $m_{sq}$ be the midpoints of $(t, y)$ and $(s, q)$, respectively. Because $m_{ty}$ and $m_{sq}$ are both in $\ball(x, \eps D / 24)$, we have $|xm_{ty}|, |xm_{sq}| \leq \eps D/24$.
	\begin{claim}
		\label{clm:mid_vector}
		$2\direct{m_{ty}m_{sq}} = \direct{ts} + \direct{yq}$.
	\end{claim}
	\begin{proof}
		Since $\direct{ts} + \direct{sy} + \direct{yq} + \direct{qt} = 0$, it holds that $\direct{ts} + \direct{yq} = \direct{ys} + \direct{tq}$. We have:
		\begin{equation}
		2\direct{m_{ty}m_{sq}} = \direct{m_{ty}s} + \direct{m_{ty}q} = \frac{\direct{ts} + \direct{ys} + \direct{tq} + \direct{yq}}{2} = \direct{ts} + \direct{yq} \qquad ,
		\end{equation}
		as claimed.
	\end{proof}
	Since $\direct{ts} = \direct{yq} - 2\direct{m_{ty}m_{sq}}$, it follows from the triangle inequality  that:
	\begin{equation}\label{eq:ts-vs-yq}
	|ts| \leq |yq| + 2|m_{ty}m_{sq}| \leq |yq| + \eps D/6.
	\end{equation} 
	Moreover, 
	\begin{equation}\label{eq:ts-vs-yqts}
	\begin{split}
	|ty| &= 2|ym_{ty}| \geq 2|xy| - 2|xm_{ty}| \geq 2|xy| - \eps D / 12  \qquad \mbox{(by the triangle inequality)}\\
	&\geq 2|yq| + 2(1 - \eps/4)|xq| - \eps D / 12 \qquad \mbox{(since $|xy| \geq |yq| + (1 - \eps/4)|xq|$ by \Cref{lm:dist_in_small_cone})}\\
	&\geq |yq| + |ts| - \eps D / 6 + 2(1 - \eps/4)|xq| - \eps D / 12  \qquad\mbox{(by \Cref{eq:ts-vs-yq})}\\
	&\geq |yq| + |ts| + 2(1 - \eps/4)(|m_{sq}q| - |m_{sq}x|) - \eps D/4  \qquad \mbox{(by the triangle inequality)}\\
	&\geq |yq| + |ts| + (1 - \eps / 4)|sq| - \eps D / 3  \qquad \mbox{(since $|m_{sq}x| \leq \eps D/24$ and $2|m_{sq}q| = |sq|$)}
	\end{split}
	\end{equation}
	Multiplying both sides of \Cref{eq:ts-vs-yqts} by $(1 + \eps)$, we obtain:
	\begin{equation}
	\label{eq:bound_ty}
	\begin{split}
	(1 + \eps)|ty| &\geq (1 + \eps)(|yq| + |ts| + (1 - \eps / 4)|sq| - \eps D / 3)\\
	&= (1 + \eps)|yq| + (1 + \eps)|ts| + |sq| + (3\eps/4 - \eps ^2/4)\underbrace{|sq|}_{\geq D} - (1 + \eps)\eps D/3\\ 
	&\geq (1 + \eps)|yq| + (1 + \eps)|ts| + |sq| + \eps D\underbrace{(3/4 - \eps /4 - (1 + \eps)/3)}_{\mbox{positive for all $\eps \in (0, 1/2]$}}\\
	&> (1 + \eps)|yq| + (1 + \eps)|ts| + \nd_{G - (t, y)}(s, q) \qquad \mbox{(since $G - (t, y)$ contains $(s, q)$)}
	\end{split}
	\end{equation}

	To bound the value of $(1 + \eps)|ty|$, we show in the following claim that the distances from   $y$ to $q$ and from $t$ to $s$ are unchanged after removing $(t, y)$ from $G$.
	\begin{claim}
		\label{clm:dist_after_remove_ty}
		$\nd_{G - (t, y)}(t, s) = \nd_G(t, s)$ and $\nd_{G - (t, y)}(y, q) = \nd_G(y, q)$.
	\end{claim}
	\begin{proof}
		We show that $|yq|,~|ts| < |ty|/(1 + \eps)$. By \Cref{lm:dist_in_small_cone}, we have:
		\begin{equation}
		\label{eq:bound_yq}
		\begin{split}
		|yq| &\leq |xy| - (1 - \eps/4)~|xq| \leq |xy| \leq |m_{ty}y| + |xm_{ty}| \qquad \mbox{(by the triangle inequality)}\\
		&\leq |ty|/2 + \underbrace{\eps D/24}_{\leq \eps |ty|/24} \leq (1/2 + \eps/24)~|ty|\\
		& < |ty|/(1 + \eps) \qquad \mbox{since $\eps \in (0, 1/2]$}
		\end{split}
		\end{equation}
		By \Cref{eq:ts-vs-yq}, $|ts| \leq |yq| + \eps D / 6$. Thus, by \Cref{eq:bound_yq}, it holds that:
		\begin{equation}\label{eq:bound_ts}
		|ts| \leq |ty| / 2 + \eps D/24 + \eps D /6 = |ty| / 2 + 5\eps D/24 \leq (1/2 + 5\eps / 24)~|ty| < |ty| / (1 + \eps)
		\end{equation}
		By \Cref{eq:bound_yq,eq:bound_ts},  $(1 + \eps)~|ts|, (1 + \eps)~|yq|$ are both less than $|ty|$. On the other hand, $\nd_{G}(t, s) \leq (1 + \eps)~|ts| < |ty|$ and $\nd_{G}(y, q) \leq (1 + \eps)~|yq| < |ty|$.  It follows that edge $(t, y)$ is not on the shortest path from $t$ to $s$ and from $y$ to $q$.  This means that $\nd_{G - (t, y)}(t, s) = \nd_G(t, s)$ and $\nd_{G - (t, y)}(y, q) = \nd_G(y, q)$ as desired. 
	\end{proof}
	Apply $\nd_{G - (t, y)}(y, q) = \nd_G(y, q) \leq (1 + \eps)~|yq|$ and $\nd_{G - (t, y)}(t, s) = \nd_G(t, s) \leq (1 + \eps)~|ts|$ to \Cref{eq:bound_ty}, we obtain:
	\begin{equation}
	(1 + \eps)~|ty| > \nd_{G - (t, y)}(y, q) + \nd_{G - (t, y)}(t, s) + \nd_{G - (t, y)}(s, q) \geq \nd_{G - (t, y)}(t, y).
	\end{equation}
	By \Cref{fact:path-greedy}, we obtain a contradiction. Thus, there is at most one point in $V(M_{x, D})$ in each cone which is incident to exactly one edge in $M_{x,D}$. Since there are $\eps^{-O(d)}$ cones, $|V(M_{x, D})| = \eps^{-O(d)}$. Therefore, $M_{x, D}$ contains at most $\eps ^{-O(d)}$ edges as claimed.
\end{proof}
We now finish the proof of \Cref{lm:ply}.
\begin{proof}[Proof of \Cref{lm:ply}]
	By \Cref{clm:bounded_midpoints}, each ball in $\mathcal{N}_i$ contains at most $\eps^{O(-d)}$ centers of the balls in $\mathcal{B}_{M_p^i}$. Because $\ball(p, 2^{i - 1}d_{min})$ is covered by $\mathcal{N}_i$, $\ball(p, 2^{i - 1}d_{min})$ contains at most $|\mathcal{N}_i|\eps^{-O(d)} = \eps^{-O(d)}$ centers of the balls in $\mathcal{B}_{M_p^i}$. By \Cref{obs:level_balls}, all centers of balls in $\mathcal{B}_{M_p^i}$ are in $\ball(p, 2^{i - 1}d_{min})$. Therefore, $|M_p^i| = |\mathcal{B}_{M_p^i}| = \eps^{-O(d)}$. Since $\mathcal{B}_p$ is the union of $\mathcal{B}_{M_p^i}$ for $i \in [1, \lfloor \log{\Delta} \rfloor + 1]$, $|\mathcal{B}_p| \leq  \sum_{i}|\mathcal{B}_{M_p^i}| = \eps^{-O(d)}\log{\Delta}$  as desired. 
\end{proof}
We now give a deterministic linear-time  algorithm to find a separator of a greedy spanner in Euclidean spaces as claimed in \Cref{thm:deterministic_separator_greedy}.

\begin{proof}[Proof of \Cref{thm:deterministic_separator_greedy}]
	We prove \Cref{thm:deterministic_separator_greedy} for the special case $H = G$. The argument for any subgraph $H$ follows the same line. Let $n$ be the number of vertices in $P$ and $E$ be the edge set of $G$. 
	
	If $n = O(d^{2d}\log{\Delta}~\eps^{-O(d)})$, then $\log{\Delta}^{1/d}\eps^{-O(1)}n^{1 - 1/d} = \Omega(n / d^2)$. Thus, if we let $S$ be any set of vertices of size $n/d^2$ of $G$, then $|S| = O(\log{\Delta}^{1/d}\eps^{-O(1)}n^{1 - 1/d})$. Furthermore, $S$ is a $\left(1 - \frac{1}{d^2}\right)$-balanced separator of $G$, \Cref{thm:deterministic_separator_greedy} follows from the fact that $1-1/d^2 \leq 1-\eps^d/d$ when $\eps \leq 1/2$. 
	
	Otherwise, $n = \Omega(d^{2d}\log{\Delta}~\eps^{-O(d)})$. By \Cref{lm:ply}, $\mathcal{B}_E$ is $(\eps^{-O(d)}\log{\Delta})$-ply. By  \Cref{thm:deterministic_separator}, we can find a $\left(\frac{d + 1}{d + 2}, O(\log{\Delta}^{1/d}\eps^{-O(1)}n^{1 - 1/d})\right)$-separating ball $\ball(p, r)$ of $\mathcal{B}_E$ in deterministic linear time. Let $\mathcal{B}_{in}, \mathcal{B}_{ext}$ and $\mathcal{B}_{cut}$ be the set of balls strictly contained  in the interior, exterior and cutting  $\ball(p, r)$. Similar to \Cref{thm:sublinearChar}, we construct a separator $S$ by picking all vertices in $\ball(p, r)$ that are incident to the edges cutting $\ball(p, r)$. By the construction of $\mathcal{B}_E$, we get $|S| \leq |\mathcal{B}_{cut}| = O(\log{\Delta}^{1/d}\eps^{-O(1)}n^{1 - 1/d})$. To complete the proof, we show that there are at least $\frac{\eps^{O(d)}n}{d}$ vertices outside $\ball(p, r)$. The bound for the number of inside vertices follows by a symmetric argument. The number of balls in the exterior of $\ball(p, r)$ is $|\mathcal{B}_{ext}| = |\mathcal{B}_E| - |\mathcal{B}_{in}| - |\mathcal{B}_{cut}| \geq |\mathcal{B}_E| - \frac{d + 1}{d + 2}|\mathcal{B}_E| - |\mathcal{B}_{cut}| = \frac{|\mathcal{B}_E|}{d + 2} - |\mathcal{B}_{cut}| $. Because $|\mathcal{B}_E| = |E| \geq n - 1$ and $n = \Omega(d^{2d}\log{\Delta}~\eps^{-O(d)})$, $|\mathcal{B}_{cut}| = O(\log{\Delta}^{1/d}\eps^{-O(1)}n^{1 - 1/d}) \leq \frac{|\mathcal{B}_E|}{(d + 2)(d + 3)}$. It follows that $|\mathcal{B}_{ext}| \geq \frac{|\mathcal{B}_E| }{d + 2}- \frac{|\mathcal{B}_E|}{(d + 2)(d + 3)} \geq \frac{|\mathcal{B}_E|}{d + 3}$.  That is, there are at least $\frac{|E|}{d + 3}$ edges in the exterior of $\ball(p, r)$. By \Cref{lm:EuiclideanGreedyLanky}, each vertex in $G$ has degree at most $\eps^{1 - 2d}$. Therefore, there are at least $\frac{|E|}{(d + 3)\eps^{1 - 2d}}$ vertices in the exterior of $\ball(p, r)$. Using the fact that $|E| \geq n - 1$, we conclude the number of points outside $\ball(p, r)$ is at least $\frac{n\eps^{O(d)}}{d}$. By the same argument, the number of points inside $\ball(p, r)$ is at least $\frac{n\eps^{O(d)}}{d}$. Therefore, $S$ is a $\left(1 - \frac{\eps^{O(d)}}{d}\right)$-balanced separator of $G$.  
\end{proof}
\section{Bounded Degree Spanners with Sublinear Separators in Doubling Metrics}

Chan, Gupta, Maggs and Zhou~\cite{CGMZ05}, hereafter CGMZ, constructed a $(1+\eps)$-spanner with a maximum degree of $\eps^{-O(d)}$ for points in doubling metrics of dimension $d$. We show in this section that their spanner is $\tau$-lanky for $\tau = \eps^{-O(d)}$. This implies that CGMZ's spanner has sublinear separator, thereby resolving the question asked by Abam and Hal-Peled in ~\cite{AH10}.

We assume doubling metric $(X,\nd_X)$ has dimension $d$, minimum pairwise distance  $1$ and spread $\Delta$.  CGMZ's construction relies on a net-tree that we define below.

\begin{definition}[Net Tree]\label{def:nettree}  Let $r_0 = 1/4$ and $N_0 = X$. For each integer $i \in [1, \lceil \log(\Delta)\rceil + 2]$, we define $r_i = 2^{i}r_0$ and $N_i \subseteq N_{i-1}$ be the $r_i$-net of $N_{i-1}$.  The hierarchy of nets $N_0 \supseteq N_1 \supseteq \ldots\supseteq N_{\lceil\log(\Delta) \rceil + 2} $ induces a rooted tree $T$ where:
	\begin{itemize}[noitemsep]
		\item[(1)] $T$ has $\lceil\log(\Delta) \rceil + 3$ levels in which level $i$ corresponds to the set of points $N_i$ for every $i \in [0,\lceil \log(\Delta)\rceil + 2]$.
		\item[(2)] The parent of a vertex $v$ at level $i \in [0,\lceil \log(\Delta)\rceil + 1]$ is the closest vertex in $N_{i+1}$; ties are broken arbitrarily.
	\end{itemize}
\end{definition}
Note that since $\Delta$ is the maximum pairwise distance, $ N_{\lceil\log(\Delta) \rceil + 2}$ contains a single point. Additionally, the starting radius $r_0= 1/4$ is an artifact introduced by  CGMZ~\cite{CGMZ05} to handle some edge cases gracefully; one could construct the net tree with the starting radius $r_0 = 1$. 	

Since a point $x \in X$ may appear in many different nets, we sometimes write $(x,i)$ to explicitly indicate the copy of $x$ in the net $N_i$. We denote by $p(x,i)$ the parent of the point $(x,i)$, which is a point in $N_{i+1}$ when $(x,i)$ is not the root of the tree $T$. For each point $x\in X$, we denote by $i^*(x)$ the maximum level that $x$ appears in the net tree $T$.
	
\paragraph{CGMZ Spanner Construction.~} The spanner construction of CGMZ has two steps. In the first step, they construct a $(1+\eps)$-spanner $G_1$ with $O(n)$ edges using the \emph{cross edge rule}. Edges of $G_1$ are then oriented by the maximum level of the endpoints: an edge $(u,v)$ is oriented as $(u\rightarrow v)$ if $i^*(u) < i^*(v)$; if both endpoints have the same maximum level, then the edge is oriented arbitrarily. CGMZ showed that the out-degree of every vertex of $G_1$ after the orientation is a constant. In the second step, they reduced the in-degree of every vertex by rerouting to its neighbors following a \emph{rerouting rule}. The graph after rerouting edges has bounded degree (for constant $\eps$ and $d$).  In the following, we formally describe two steps. 
 
 \begin{tcolorbox}
 	\hypertarget{CGMZ}{}
 	\textbf{CGMZ Algorithm~\cite{CGMZ05}:} The construction has two steps.
 		\begin{itemize}
 			\item \textbf{Step 1:~} Let $\gamma = 4 + \frac{32}{\eps}$. For each net $N_i$ at level $i$ of the net tree $T$, let $E_i$ be the set of all pairs $(u, v)$ of vertices in $N_i$ such that $\nd_X(u, v) \leq \gamma r_i$ and $(u, v)$ has not appeared in $\bigcup_{j=0}^{i-1}E_{j}$. We call $E_i$ the set of \emph{cross edges} at level $i$. The edge set of $G_1$ is the union of all $E_i$'s:  $E(G_1) = \bigcup_{i=0}^{\lceil \log(\Delta)\rceil + 2} E_i$. Let $\direct{G_1}$ be obtained from $G_1$ by orienting every edge $(u,v)$ as $(u\rightarrow v)$ if $i^*(u) < i^*(v)$, as $(v\rightarrow u)$ if $i^*(u) > i^*(v)$, and arbitrarily if $i^*(u) = i^*(v)$. Let $\direct{E}_i$ be the set of oriented edges corresponding to $E_i$. 
 			\item \textbf{Step 2:~} Let $l = \lceil (1/\eps) \rceil + 1$. For each point $w \in N_i$, $i \in [0,\lceil \log(\Delta)\rceil + 2]$, $N^{in}_i(w) = \{v \in N_i: (v\rightarrow w) \in E(\direct{G_1})\}$ be the set of \emph{in-neighbors at level $i$ of $w$}.  We construct another directed graph $\direct{G_2}$ with the same vertex set $X$ as follows. Initially, $\direct{G_2}$ has no edge. Let  $I_w = \{i_1, i_2, \ldots, i_{m_w}\}$ be the list of levels sorted by increasing order such that $w$ has \emph{at least one} in-neighbor at each level in the list; $m_w \leq i^*(w)$. For each index $j \leq [1,m_w]$, if $j \leq l$, we add every directed edge in $\direct{E}_{i_j}$ to $\direct{G_2}$; otherwise, $j > l$,  we choose a point $u \in N_{j-l}$ and for every edge $(v \rightarrow w)\in \direct{E}_{i_j}$, we add an edge $(v \rightarrow u)$ to $\direct{G}_2$. We say that $(v\rightarrow w)$  is \emph{rerouted} to $u$. Let $G_2$ be obtained from  $\direct{G_2}$ by ignoring the directions of the edges; we return $G_2$  as the output spanner.
 		\end{itemize}
 \end{tcolorbox}

Our goal is to show that $G_2$ is $\tau$-lankly for $\tau = \eps^{-O(d)}$. To this end, we show several properties of $G_1$ and $G_2$.  CGMZ~\cite{CGMZ16} showed the following lemmas.

\begin{lemma}\label{lm:G1Prop} $G_1$ and $\direct{G}_1$ have following properties: 
	\begin{itemize}[noitemsep]
		\item[(1)] For every point $u\in N_i$, $i \in [0,\lceil \log(\Delta)\rceil + 2]$,  $|N_i^{in}(u)| \leq (4\gamma)^d$; Claim 5.9~\cite{CGMZ16}.
		\item[(2)] The out-degree of a vertex $u$ in $\direct{G_1}$ is $\eps^{-O(d)}$; see Lemma 5.10~\cite{CGMZ16}.
	\end{itemize}
\end{lemma}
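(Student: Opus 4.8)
Both items are structural facts about the cross-edge graph $G_1$, and although the statement cites Claim 5.9 and Lemma 5.10 of~\cite{CGMZ16}, the plan is to reprove them from a single ingredient: the doubling packing bound (\Cref{lm:packing}) applied at one net level. The common template is to identify a ball of radius proportional to some $r_j$ that contains all the relevant endpoints, to observe that those endpoints form an $r_j$-separated set because they lie in the net $N_j$, and to conclude that their number is at most a constant depending only on $d$ and $\gamma$.

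For item (1), I would fix $u\in N_i$ and take any $v\in N_i^{in}(u)$. By the cross-edge rule of Step~1, $v$ is joined to $u$ by a level-$i$ cross edge, so $\nd_X(u,v)\le\gamma r_i$; hence $N_i^{in}(u)\subseteq\ball_X(u,\gamma r_i)$. Since $N_i^{in}(u)$ is a subset of the $r_i$-separated net $N_i$ and $(X,\nd_X)$ has doubling dimension $d$, the ball $\ball_X(u,\gamma r_i)$ is covered by at most $2^{d\lceil\log_2(2\gamma)\rceil}\le(4\gamma)^d$ balls of radius $r_i/2$, each of which contains at most one point of an $r_i$-separated set; this yields $|N_i^{in}(u)|\le(4\gamma)^d$. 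The only care needed is to track the covering constant so that the exponent base comes out exactly $4\gamma$.

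For item (2), I would fix a vertex $u$ and let $i^*=i^*(u)$ be its top level. For every out-edge $(u\to w)\in E(\direct{G_1})$, the underlying cross edge was added at some level $k$, so $u,w\in N_k$ and $\nd_X(u,w)\le\gamma r_k$; since $u\in N_k$ forces $k\le i^*$, this gives $\nd_X(u,w)\le\gamma r_{i^*}$. The orientation rule ($u\to w$ only when $i^*(u)\le i^*(w)$) gives $i^*(w)\ge i^*$, hence $w\in N_{i^*}$. Therefore all out-neighbours of $u$ lie in $\ball_X(u,\gamma r_{i^*})$ and form a subset of the $r_{i^*}$-separated set $N_{i^*}$, and the same covering estimate bounds the out-degree by $(4\gamma)^d$. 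Since $\gamma=4+32/\eps=\Theta(1/\eps)$, this is $\eps^{-O(d)}$.

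The step I expect to be the crux is the reduction in item (2) to the single net level $N_{i^*(u)}$: one must invoke the orientation convention to certify that every out-neighbour of $u$ already occurs at level $i^*(u)$, so that the out-neighbours are simultaneously pairwise $r_{i^*(u)}$-separated and all within distance $\gamma r_{i^*(u)}$ of $u$. Without this observation, summing the per-level packing bounds over all levels at which $u$ has out-edges would introduce a spurious $\Theta(\log\Delta)$ factor. Once this is in place, both items follow directly from \Cref{lm:packing}, and I would finish by cross-checking the constants against Claim 5.9 and Lemma 5.10 of~\cite{CGMZ16}.
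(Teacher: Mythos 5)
Your proof of both items is correct. Item~(1) coincides with what the paper cites: $N_i^{\mathrm{in}}(u)$ is an $r_i$-separated subset of $N_i$ contained in $\ball_X(u,\gamma r_i)$, so the packing bound finishes. For item~(2), however, your route is genuinely different from the one the paper attributes to CGMZ. The paper says CGMZ first observe that $u$ can have out-edges at only $O(\log(1/\eps))$ levels (because any out-neighbour $w$ of $u$ satisfies $r_{i^*(u)}\le\nd_X(u,w)\le\gamma r_k$ for the level $k$ at which the cross edge was added, forcing $k\in[i^*(u)-\log_2\gamma,\,i^*(u)]$), and then apply the per-level packing bound from item~(1), yielding $O(\log(1/\eps))\cdot(4\gamma)^d$. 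You instead collapse everything to the single net $N_{i^*(u)}$: the orientation rule forces $i^*(w)\ge i^*(u)$, hence $w\in N_{i^*(u)}$, and the level constraint $k\le i^*(u)$ forces $\nd_X(u,w)\le\gamma r_{i^*(u)}$, so all out-neighbours are simultaneously $r_{i^*(u)}$-separated and within $\ball_X(u,\gamma r_{i^*(u)})$. One packing application then gives out-degree at most $(4\gamma)^d$, with no $\log(1/\eps)$ overhead. This is cleaner and slightly sharper; both bounds are of course $\eps^{-O(d)}$, which is all the paper uses. (One small caveat: the paper's literal definition of $N_i^{\mathrm{in}}(w)$ reads as ``$v\in N_i$ with $(v\to w)\in E(\direct{G_1})$,'' which would make the sets nested across levels; your argument tacitly reads it, as CGMZ intend and as Step~2 of the algorithm requires, as the in-neighbours arising from the level-$i$ cross edges $\direct{E}_i$. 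Under that intended reading your step ``$v$ is joined to $u$ by a level-$i$ cross edge'' is exactly right.)
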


The proof of Item (1) in \Cref{lm:G1Prop} is by using the packing bound (\Cref{lm:packing}). For Item (2) in \Cref{lm:G1Prop}, CGMZ observed that that for every vertex $u$, there are  $O(\log(1/\eps))$ levels where $u$  could have edges oriented out from $u$; the bound on the out-degree then follows from Item (1).

\begin{lemma}[Lemma 5.12~\cite{CGMZ16}]
	\label{lm:boundedDeg} $G_2$ is a $(1+4\epsilon)$-spanner of $(X,\nd_X)$ with maximum degree $\eps^{-O(d)}$.
\end{lemma}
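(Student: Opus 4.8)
This is CGMZ's lemma, and the plan is to reconstruct their argument, proving the two clauses — bounded degree and stretch $1+4\eps$ — separately. Throughout I would work with the oriented graph $\direct{G_2}$ and bound out-degrees and in-degrees, since the unoriented degree of a vertex is their sum. The degree bound will rest on \Cref{lm:G1Prop} together with the packing bound (\Cref{lm:packing}); the stretch bound will rest on the fact that $G_1$ already has stretch $1+\eps$ together with the observation that each rerouting introduces only a geometrically small detour.

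For out-degrees, every edge of $\direct{G_2}$ comes from an edge $(u\to v)$ of $\direct{G_1}$ that is either kept or rerouted to $(u\to v')$, so the tail is unchanged and each vertex's out-degree in $\direct{G_2}$ is at most its out-degree in $\direct{G_1}$, which is $\eps^{-O(d)}$ by \Cref{lm:G1Prop}(2). For the in-degree of a vertex $w$ I would split its incoming edges into \emph{kept} edges — those at the $l=\lceil1/\eps\rceil+1$ lowest active levels of $w$ — and \emph{received} edges rerouted to $w$ from elsewhere. Each active level contributes at most $(4\gamma)^d$ in-neighbors by \Cref{lm:G1Prop}(1), so there are at most $l(4\gamma)^d=\eps^{-O(d)}$ kept edges (recall $\gamma=4+32/\eps$). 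For the received edges, the key point is that $w$ can serve as a rerouting target for a vertex $w'$ only when $w\to w'$ is an outgoing cross edge of $w$ in $\direct{G_1}$, and — as observed after \Cref{lm:G1Prop} — $w$ has outgoing cross edges at only $O(\log(1/\eps))$ levels; for each such level $\ell$ the candidates $w'$ lie within $\gamma r_\ell$ of $w$ inside $N_\ell$, so the packing bound caps them at $\gamma^{O(d)}$, each contributing at most $(4\gamma)^d$ rerouted edges. Thus $w$ receives at most $O(\log(1/\eps))\,\gamma^{O(d)}=\eps^{-O(d)}$ edges, and summing the three contributions gives maximum degree $\eps^{-O(d)}$.

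For the stretch it suffices to exhibit, for every pair $x,y$, a walk from $x$ to $y$ in $G_2$ of length at most $(1+4\eps)\nd_X(x,y)$, and I do not expect to need an outer induction. I would take a $G_1$-path $P$ from $x$ to $y$ of length at most $(1+\eps)\nd_X(x,y)$ and realize each of its edges in $G_2$: a kept edge is already there, while an edge $(v\to w)$ at level $i_j$ that was rerouted to a point $u$ (an in-neighbor of $w$ at the earlier active level $i_{j-l}$) is replaced by the edge $(v\to u)$ followed by a $G_2$-walk from $u$ to $w$ obtained by unwinding $w$'s chain of reroutings, which drops the active-level index by $l$ at each step and hence terminates at a kept edge. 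Since $w$'s active levels are distinct integers, $r_{i_{j-kl}}\le r_{i_j}2^{-kl}$, so that walk has length at most $2\gamma\sum_{k\ge1}r_{i_{j-kl}}=O(\gamma2^{-l})\,r_{i_j}$; together with $|vu|\le|vw|+\gamma r_{i_{j-l}}$ and $r_{i_j}\le|vw|$ (as $v,w\in N_{i_j}$, which is $r_{i_j}$-separated), this shows that realizing $(v\to w)$ costs at most $|vw|\bigl(1+O(\gamma2^{-l})\bigr)$. The choice $l=\lceil1/\eps\rceil+1$ makes $\gamma2^{-l}$ a sufficiently small multiple of $\eps$, so summing over $P$ yields a walk of length at most $(1+\eps)(1+O(\eps))\nd_X(x,y)$, which constant-chasing brings to $(1+4\eps)\nd_X(x,y)$.

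The out-degree bound and the packing arithmetic are routine; the hard parts will be the received-edge count and the stretch. For the former, the crucial — and easily overlooked — point is restricting $w$'s role as a rerouting target to the $O(\log(1/\eps))$ levels where it has outgoing cross edges; without it the count degrades to $\polylog(\Delta)$. For the latter, the subtlety is that the detour walks from $u$ to $w$ themselves traverse rerouted edges, so detours cascade, and one must argue that the cascade is governed by the geometric factor $2^{-l}$ — using that active levels are distinct integers — rather than by the number of net levels, and then check that the constants are tight enough to land at precisely $1+4\eps$ given that $G_1$ has already spent a $(1+\eps)$ factor.
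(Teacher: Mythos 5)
The paper does not give its own proof of this lemma: it is imported verbatim from CGMZ, labeled ``Lemma~5.12~\cite{CGMZ16}'', and the only fragment of CGMZ's argument the paper reproduces is the subsidiary \Cref{clm:reroutedLength} (included ``for completeness'' because it is reused later in \Cref{lm:GuptaWSP} and \Cref{lm:G2Lanky}). So there is no internal proof to compare your attempt against; you are reconstructing CGMZ's argument from the algorithm description.

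As a reconstruction, the degree bound is sound: out-degrees are inherited from $\direct{G_1}$ since rerouting preserves tails, kept in-edges number at most $l(4\gamma)^d$, and --- as you rightly stress --- $w$ can be a rerouting target only for those $w'$ with $(w\to w')\in E(\direct{G_1})$, which confines received edges to the $O(\log(1/\eps))$ levels where $w$ has out-edges. There is, however, a quantitative gap in the stretch argument. You bound the detour by $O(\gamma 2^{-l})\,r_{i_j}$ and compare to $|vw|$ using only $r_{i_j}\le|vw|$, then assert that $\gamma 2^{-l}$ is ``a sufficiently small multiple of $\eps$''. It is not: with $\gamma=4+32/\eps$ and $l=\lceil1/\eps\rceil+1$, one has $\gamma 2^{-l}\approx 17\eps$ already at $\eps=1/2$, so this estimate does not by itself land under $1+4\eps$ once the $(1+\eps)$ factor from the $G_1$-path is also spent. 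The factor $\gamma$ must be cancelled by the stronger lower bound $|vw|>\gamma r_{i_j-1}=\gamma r_{i_j}/2$, which holds because $(v,w)\in E_{i_j}$ forces $(v,w)\notin E_{i_j-1}$ while $v,w\in N_{i_j-1}$ (nets are nested); with this, the detour is $O(2^{-l})|vw|$, and $2^{-l}\le\eps/2$ for all $\eps\le1/2$, so the accounting closes. You flagged the constant chase as the delicate part, and this lower bound on $|vw|$ --- rather than mere $r_{i_j}$-separation of $N_{i_j}$ --- is the ingredient it needs.
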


CGMZ used the following claim to show the stretch bound $(1+4\eps)$ of $G_2$ (Equation (4)~\cite{CGMZ16}). Since we also use this claim here, we include its proof for completeness.

\begin{claim} \label{clm:reroutedLength}
	If $(v \rightarrow w)$ is rerouted to $(v \rightarrow u)$, then  $\nd_X(u, w) \leq \eps \nd_X(v, w)$ and $\nd_X(v, w) \geq \nd_X(v, u) / (1 + \eps)$.
\end{claim}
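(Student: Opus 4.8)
The plan is to unwind the rerouting rule of \hyperlink{CGMZ}{Step~2} and combine two estimates: a lower bound on $\nd_X(v,w)$ coming from the fact that the cross edge $(v,w)$ is created only at level $i_j$ and not earlier, and an upper bound on $\nd_X(u,w)$ coming from the fact that $u$ is picked from a net $N_{j-l}$ that is far finer than $N_{i_j}$.

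First I would fix notation. Since $(v\rightarrow w)$ is rerouted, there is a level $i_j\in I_w$ with $j>l$ such that $(v,w)\in E_{i_j}$, and $u$ is the point of $N_{j-l}$ picked by the rerouting rule, which lies within distance $O(r_{j-l})$ of $w$ (the absolute constant being fixed by how the rule selects $u$ among the net points near $w$). In particular $v,w\in N_{i_j}$, so by \Cref{def:nettree} we have $\nd_X(v,w)\le\gamma r_{i_j}$; more importantly, we get a lower bound. Since $j>l\ge 2$, we have $i_j\ge j-1\ge 2$, so level $i_j-1$ exists, and since the nets are nested, $v,w\in N_{i_j}\subseteq N_{i_j-1}$. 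Were $\nd_X(v,w)\le\gamma r_{i_j-1}$, the pair $(v,w)$ would already belong to $\bigcup_{k\le i_j-1}E_k$ by the cross edge rule, contradicting $(v,w)\in E_{i_j}$. Hence
\[ \nd_X(v,w) > \gamma r_{i_j-1} = \tfrac{\gamma}{2}\,r_{i_j} = \Bigl(2+\tfrac{16}{\eps}\Bigr) r_{i_j} > \tfrac{16}{\eps}\, r_{i_j}, \]
i.e.\ $r_{i_j} < \tfrac{\eps}{16}\,\nd_X(v,w)$.

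Next I would bound $\nd_X(u,w)$. Because the elements of $I_w$ are distinct nonnegative integers, $i_j\ge j-1$, so $r_{j-l} = 2^{\,(j-l)-i_j}\,r_{i_j} \le 2^{\,1-l}\,r_{i_j}$; with $l=\lceil 1/\eps\rceil+1\ge 2$ this gives $2^{1-l}\le\tfrac12$. Plugging $\nd_X(u,w)=O(r_{j-l})$ and the bound on $r_{i_j}$ above into each other yields $\nd_X(u,w) = O\bigl(2^{1-l}\cdot\tfrac{\eps}{16}\bigr)\,\nd_X(v,w)$, and since the hidden constant is comfortably below $32$ this is at most $\eps\,\nd_X(v,w)$, the first assertion. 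The second assertion is then immediate from the triangle inequality: $\nd_X(v,u)\le\nd_X(v,w)+\nd_X(w,u)\le(1+\eps)\,\nd_X(v,w)$.

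The only delicate point is bookkeeping the constants so that the first inequality comes out with coefficient exactly $\eps$ rather than a larger multiple of $\eps$: one must track the precise constant absorbed in $\nd_X(u,w)=O(r_{j-l})$ (determined by the exact CGMZ selection of $u$) and check it survives multiplication by $2^{1-l}\cdot\eps/16$. This is essentially why $\gamma$ was set to $4+32/\eps$ rather than something smaller --- the factor $16/\eps$ it contributes supplies exactly the slack needed. Everything else is a short chain of inequalities using no property of the net tree beyond nestedness of the nets and $r_i=2^i r_0$.
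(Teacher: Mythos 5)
There is a genuine gap in your bound on $\nd_X(u,w)$. You write ``$u$ is the point of $N_{j-l}$ picked by the rerouting rule, which lies within distance $O(r_{j-l})$ of $w$,'' and then assume the hidden constant is ``comfortably below $32$.'' Neither part of this is justified. The rerouting rule does not select an arbitrary nearby net point of $N_{j-l}$; it selects $u$ among the \emph{in-neighbors of $w$ at level $i_{j-l}$}, i.e.\ $(u,w)\in E_{i_{j-l}}$ (the ``$u\in N_{j-l}$'' in the algorithm statement is shorthand for this, as the paper's own use of it in the proof makes clear when it writes $(u,w)\in E_j$). So the correct upper bound is $\nd_X(u,w)\le\gamma\, r_{i_{j-l}}$, which differs from your $O(r_{j-l})$ in two ways. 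First, the relevant scale is $r_{i_{j-l}}$, the radius at the \emph{$(j-l)$-th level in $I_w$}, not $r_{j-l}$; since $I_w$ can be sparse, $i_{j-l}$ can be much larger than $j-l$, so $r_{i_{j-l}}$ is \emph{not} $O(r_{j-l})$. Second, the multiplicative constant is $\gamma=4+32/\eps=\Theta(1/\eps)$, not an absolute constant below $32$; for $\eps=1/2$ already $\gamma=68$, and with that constant plugged into your chain the estimate $C\cdot 2^{1-l}\cdot\eps/16\le\eps$ fails.

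The clean way out --- and what the paper does --- is to keep the $\gamma$ on \emph{both} sides and let it cancel: $\nd_X(v,w)>\gamma r_{i-1}$ (where $(v,w)\in E_i$ at level $i=i_{j'}$) and $\nd_X(u,w)\le\gamma r_j$ (where $(u,w)\in E_j$ at level $j=i_{j'-l}$), together with the fact that the levels in $I_w$ are distinct integers so $i\ge j+l$, give $\nd_X(u,w)/\nd_X(v,w)\le r_j/r_{i-1}\le 2^{1-l}\le\eps$. By separating the two bounds and replacing $\gamma/2$ by $16/\eps$ on one side while treating $\gamma$ as an $O(1)$ constant on the other, you lose precisely this cancellation, and the slack you invoke (``$\gamma$ was set to $4+32/\eps$ to supply the $16/\eps$'') is a misreading of the role $\gamma$ plays here: $\gamma$ drops out of this claim entirely. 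Your first paragraph (the lower bound $\nd_X(v,w)>\gamma r_{i_j-1}$) and the final triangle-inequality step are both correct.
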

\begin{proof}
	First, we show that $\nd_X(u, w) \leq \eps\nd_X(v, w)$.  Let $i$ and $j$ be two levels such that $(v, w) \in E_i$ and $(u, w) \in E_j$. It follows that $\nd_X(v, w)\geq \gamma r_{i - 1}$ since $(v,w)$ does not appear in $E_{i-1}$, and that  $\nd_X(u, w) \leq \gamma r_{j}$ by the definition of $E_j$. By the construction in Step 2, $i\geq j + l$. We have that $\nd_X(v, w) \geq \gamma r_{i - 1} \geq 2^l\gamma r_{j} \geq \frac{\nd_X(u, w)}{\eps}$. Thus, $\nd_X(u, w)\leq \eps \nd_X(v,w)$ as claimed. 
	
	By the triangle inequality, $\nd_X(v, u) \leq \nd_X(v, w) + \nd_X(w, u) < (1 + \eps)\nd_X(v, w)$; the second claim follows.
\end{proof}

Next, we show that any ball of radius $r$ only contains exactly one point that is incident to a long edge in $G_1$. Recall that $\gamma = 4 + \frac{32}{\eps}$ is the parameter defined in \hyperlink{CGMZ}{CGMZ Algorithm}.

\begin{lemma}\label{lm:longEdgeEndpts} Let $\ball(p,r)$ be any ball of radius $r$ centered at some point $p$.  There is at most one point in $\ball(p, r)$ that is incident to an edge in $G_1$ with length at least $4\gamma r$. \\
	More generally, for any parameter $\beta > 0$, there are at most $2^{O(d)}\left(\frac{\gamma}{\beta}\right)^d$ points in $\ball(p, r)$ that are incident to edges in $G_1$ of length at least $\beta r$. 
\end{lemma}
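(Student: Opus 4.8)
The plan is to exploit the structure of the cross-edge rule. The key point is that an edge $(u,v)\in E_i$ of $G_1$ requires \emph{both} endpoints to lie in the net $N_i$, and consecutive points of $N_i$ are $r_i$-separated. So an edge of length $\ell$ in $G_1$ lies in some level $i$ with $\gamma r_{i-1} < \ell \le \gamma r_i$ (the lower bound because the edge did not already appear at level $i-1$, the upper bound by definition of $E_i$), and both endpoints are in $N_i$, hence any two such endpoints are at distance at least $r_i \ge \ell/(2\gamma)$ apart. I would first make this precise as a claim: \emph{if $e=(u,v)$ is an edge of $G_1$ of length $\ell$, then $e\in E_i$ for the unique $i$ with $\gamma r_{i-1} < \ell \le \gamma r_i$, and $u,v\in N_i$.} This is immediate from the cross-edge rule and Definition~\ref{def:nettree}.

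Next I would fix the ball $\ball(p,r)$ and a threshold $\beta r$, and consider the set $W$ of points of $\ball(p,r)$ incident to some $G_1$-edge of length at least $\beta r$. Take any $w\in W$, witnessed by an edge of length $\ell \ge \beta r$; by the claim this edge sits at level $i$ with $r_i \ge \ell/(2\gamma) \ge \beta r/(2\gamma)$, so in particular $w\in N_i$ and $N_i$ is $(\beta r/(2\gamma))$-separated. The subtlety is that different points of $W$ may be certified by edges at \emph{different} levels $i$, so they need not all lie in a common net. To handle this, I would note that for each $w\in W$ the certifying level $i$ satisfies $r_i\ge \beta r/(2\gamma)$, and consider the smallest such level $i_0$ (equivalently $r_{i_0}$ is the smallest power-of-two radius that is $\ge \beta r/(2\gamma)$, so $r_{i_0} < \beta r/\gamma$); since the nets are nested, every $w\in W$ has a copy in $N_{i_0}$, and $N_{i_0}$ is $r_{i_0}$-separated with $r_{i_0}\ge \beta r/(2\gamma)$. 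Therefore $W$ is a $(\beta r/(2\gamma))$-separated set contained in $\ball(p,r)$, and the packing bound (Lemma~\ref{lm:packing}, applied with $R=r$ and separation $\beta r/(2\gamma)$) gives $|W| = 2^{O(d)}(2\gamma/\beta)^d = 2^{O(d)}(\gamma/\beta)^d$, as claimed. For the first, sharper statement ($\beta = 4\gamma$, at most one point), the same argument yields $r_{i_0}\ge 2r$, so $W$ is a $2r$-separated subset of a ball of radius $r$, which can contain at most one point.

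The one place that needs a little care — and which I expect to be the main (minor) obstacle — is the nestedness step: verifying that even though a point $w\in W$ is only \emph{guaranteed} to lie in the net $N_{i(w)}$ of its own certifying level, it still lies in every lower net $N_{i_0}$ with $i_0 \le i(w)$, which is exactly the containment $N_0\supseteq N_1\supseteq\cdots$ from Definition~\ref{def:nettree}, so $N_{i_0}\supseteq N_{i(w)}\ni w$. Once this is observed the argument is a clean one-line packing bound, and I would also double-check the constant bookkeeping relating $\ell$, $r_i$, and $\gamma$ (namely $\ell\le\gamma r_i$ forces $r_i\ge \ell/\gamma \ge \beta r/\gamma$, which is even a touch stronger than the $\beta r/(2\gamma)$ I used above, so the stated bound is safe). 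No other ingredient is needed beyond the cross-edge rule and Lemma~\ref{lm:packing}.
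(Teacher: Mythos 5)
Your proof is correct, and for the general bound it takes a genuinely different route from the paper's. The paper first proves the "at most one point when $\ell \ge 4\gamma r$" case (essentially as you do), and then deduces the $\beta$-version by covering $\ball(p,r)$ with $2^{O(d)}(\gamma/\beta)^d$ balls of radius $\beta r/(4\gamma)$ and applying the one-point case inside each small ball. You instead use net-nestedness directly: every certifying level $i(w)$ has $r_{i(w)} \ge \beta r/\gamma$, so letting $i_0$ be the smallest level with $r_{i_0}$ above this threshold, nestedness $N_{i_0} \supseteq N_{i(w)}$ puts all of $W$ into a single $(\beta r/\gamma)$-separated net, and one application of the packing bound finishes. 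The two approaches are comparable in length; yours avoids the intermediate ball-covering step and derives the one-point case as a corollary rather than a lemma, which is arguably cleaner. Your careful flag about the $\ell/(2\gamma)$ vs.\ $\ell/\gamma$ bookkeeping is right to note (only $\ell \le \gamma r_i$ is used, so $r_i \ge \ell/\gamma$ is the sharp bound, and it ensures strict $4r$-separation in the $\beta = 4\gamma$ case, cleanly giving "at most one"); and the nestedness observation you single out as the "subtlety" is indeed the one point where the paper's terse phrasing ("two endpoints \ldots must be in $r_i$-net $N_i$ such that $r_i \ge 4r$") implicitly relies on $N_0 \supseteq N_1 \supseteq \cdots$ to let all edges live in a common net.
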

\begin{proof}
	By the construction of $G_1$, two endpoints of an edge with length at least $4\gamma r$ must be in $r_i$-net $N_i$ such that $r_i \geq 4r$. Since any two points in $N_i$ has distance at least $r_i\geq 4r$, $|\ball(p,r)\cap N_i|\leq 1$; this implies the first claim.
	
	We now show the second claim. Let $\mathcal{B}$ be a set of balls obtained by taking balls of radius $\frac{\beta r}{4\gamma}$ centered at points in a $\left(\frac{\beta r}{4\gamma}\right)$-net of $\ball(p, r)$. By \Cref{lm:packing}, $|\mathcal{B}| = 2^{O(d)}\left(\frac{\gamma}{\beta}\right)^{d}$. Let  $E_\text{long}$ be the set of edges in $G_1$ of length at least $\beta r$. For each ball $B \in \mathcal{B}$, there is at most one point in $B \cap \ball(p, r)$ that is incident to an edge in $E_\text{long}$ by the first claim. It follows that the total number of points in $\ball(p, r)$ incident to an edge in $E_\text{long}$ is at most $|\mathcal{B}| = 2^{O(d)}\left(\frac{\gamma}{\beta}\right)^{d}$.
\end{proof}

Next, we show that there is a small number of edges in $G_2$ between any well-separated pair. For any two sets $A$ and $B$, we define $r_{AB} = \max\{\diam(A),\diam(B)\}$.

\begin{lemma}
	\label{lm:GuptaWSP}
	Let $(A, B)$ be a $\gamma$-separated pair in $(X,\nd_X)$. Then, there are at most $\eps^{-O(d)}$ edges in $G_2$ between $A$ and $B$.
\end{lemma}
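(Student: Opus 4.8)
The plan is to classify the edges of $G_2$ that have one endpoint in $A$ and the other in $B$ by how Step~2 of the CGMZ construction produced them: such an edge is either a cross edge of $G_1$ that survived Step~2 unchanged, or a rerouted edge. Write $D = \nd_X(A,B)$, so $\gamma$-separation gives $D \ge \gamma\, r_{AB}$. If $r_{AB} = 0$ then $A$ and $B$ are single points and at most one edge lies between them, so assume $r_{AB} > 0$ (whence also $A \cap B = \emptyset$); then $A$ is contained in $\ball(a_0, r_{AB})$ for any $a_0 \in A$, and similarly $B \subseteq \ball(b_0, r_{AB})$. I will bound each of the two classes by $2^{O(d)}$, which is $\eps^{-O(d)}$.

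For surviving cross edges: every edge between $A$ and $B$ has length at least $D \ge \gamma\, r_{AB}$. Applying the general form of \Cref{lm:longEdgeEndpts} to $\ball(a_0, r_{AB})$ with parameter $\beta = \gamma$ shows that at most $2^{O(d)}(\gamma/\gamma)^d = 2^{O(d)}$ vertices of $A$ are incident to a $G_1$-edge of length $\ge \gamma\, r_{AB}$, and likewise for $B$; hence $G_1$, and so $G_2$, has at most $2^{O(d)}$ surviving cross edges between $A$ and $B$.

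For rerouted edges the real work is locating the endpoints. A rerouted edge of $\direct{G_2}$ has the form $(v \to u)$, arising from a $\direct{G_1}$-edge $(v \to w)$ that Step~2 rerouted to $u$; \Cref{clm:reroutedLength} gives $\nd_X(u,w) \le \eps\,\nd_X(v,w)$ and $\nd_X(v,w) \ge \nd_X(v,u)/(1+\eps)$. If the undirected edge $\{v,u\}$ lies between $A$ and $B$, then — up to interchanging $A$ and $B$ — the source satisfies $v \in A$ and $u \in B$, so $D \le \nd_X(v,u) \le \diam(A) + D + \diam(B) \le D(1 + 2/\gamma)$. Plugging this into \Cref{clm:reroutedLength} and using $\eps \le 1/2$ and $\gamma \ge 4$ yields $\tfrac{2}{3}D \le \nd_X(v,w) \le 3D$, hence $\nd_X(w,u) \le 3\eps D$. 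Thus $w$ lies in the fattened set $B^{+} = \{x : \nd_X(x,B) \le 3\eps D\}$, which has diameter at most $r_{AB} + 6\eps D \le 7\eps D$ since $\gamma = 4 + 32/\eps$ forces $r_{AB} \le D/\gamma \le \eps D/32$. So every rerouted edge between $A$ and $B$ with source in $A$ ``hides'' a $G_1$-edge from $A$ to $B^{+}$ of length at least $\tfrac{2}{3}D$. I would count these by applying \Cref{lm:longEdgeEndpts} twice — to $\ball(a_0, r_{AB})$ with $\beta = \tfrac{2\gamma}{3}$, and to a ball of radius $7\eps D$ covering $B^{+}$ with $\beta = \tfrac{2}{21\eps}$ — each giving $2^{O(d)}$ because $\gamma\eps = 4\eps + 32 = O(1)$; so there are only $2^{O(d)}$ such $G_1$-edges. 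Finally, since each $\direct{G_1}$-edge is processed exactly once in Step~2 and produces exactly one $\direct{G_2}$-edge, assigning to each rerouted edge between $A$ and $B$ with source in $A$ a $G_1$-edge from $A$ to $B^{+}$ that generated it is injective, bounding this class by $2^{O(d)}$; the edges whose source lies in $B$ are handled symmetrically via $A^{+} = \{x : \nd_X(x,A) \le 3\eps D\}$. Adding the two classes gives the claim.

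The step I expect to be the main obstacle is the bookkeeping around rerouted edges: keeping straight that one rerouted edge $\{v,u\}$ may be generated by several $\direct{G_1}$-edges $(v\to w)$ funneled to the same target $u$, that which of $v,u$ is the reroute source determines whether the hidden endpoint $w$ sits near $A$ or near $B$, and — crucially — that the estimates of \Cref{clm:reroutedLength} are strong enough to confine $w$ to a set of diameter $O(\eps D)$ rather than $O(D)$, which is precisely what makes the second application of \Cref{lm:longEdgeEndpts} yield $2^{O(d)}$ (the identity $\gamma\eps = \Theta(1)$ is what rescues the bound). The covering and packing estimates themselves are routine once \Cref{lm:longEdgeEndpts} is available.
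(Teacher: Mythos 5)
Your proof is correct, and while it leans on the same two workhorses as the paper's argument (\Cref{clm:reroutedLength} to pass from a $G_2$-edge back to the $G_1$-edge it came from, and \Cref{lm:longEdgeEndpts} to pack endpoints of long $G_1$-edges), the final counting step is genuinely different. The paper bounds the number $|U_A|$ of possible source endpoints in $A$ by $2^{O(d)}$ via \Cref{lm:longEdgeEndpts} and then simply multiplies by the maximum degree $\eps^{-O(d)}$ of $G_2$ from \Cref{lm:boundedDeg}, giving $\eps^{-O(d)}$. You instead chase the \emph{other} endpoint $w$ of the generating $\direct{G_1}$-edge, showing via \Cref{clm:reroutedLength} and the triangle inequality that $w$ lands in a fattened set $B^{+}$ of diameter $O(\eps D)$ — the identity $\gamma\eps=\Theta(1)$ is exactly what keeps the resulting packing bound at $2^{O(d)}$ — and then count by pairs, since each $\direct{G_1}$-edge produces at most one $\direct{G_2}$-edge. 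This buys two things: it avoids invoking the degree bound \Cref{lm:boundedDeg} altogether, and it yields the sharper count $2^{O(d)}$ rather than $\eps^{-O(d)}$ (the lemma as stated needs only the weaker bound). The trade-off is the extra bookkeeping needed to confine $w$ to $B^{+}$ and to verify that the assignment of a generating $\direct{G_1}$-edge to each rerouted $\direct{G_2}$-edge is injective — both of which you handle correctly, since Step~2 of the construction sends each $\direct{G_1}$-edge to a unique $\direct{G_2}$-edge. Your edge-case handling of $r_{AB}=0$ and the split into surviving versus rerouted edges are also sound (the surviving case could alternatively be absorbed into the rerouted case with $w=u\in B\subseteq B^{+}$, slightly shortening the write-up).
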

\begin{proof}
  	We count the number of directed edges with length at least $\gamma r_{AB}$ betweem $A$ and $B$ in $\direct{G_2}$. (Recall that $\gamma = 4 + \frac{32}{\eps}$.) We consider the edges that are directed from $A$ to $B$; those directed from $B$ to $A$ can be counted by the same argument. By \Cref{clm:reroutedLength}, an edge, say $(v\rightarrow u)$, of length at least $\gamma r_{AB}$ from a point $v$ in $A$ must be rerouted from an edge $(v \rightarrow w) \in E(\direct{G_1})$ of length at least $\frac{\gamma r_{AB}}{1 + \eps}\geq \frac{\gamma\diam(A)}{1 + \eps}$. It follows that the set of the \emph{endpoints} of the directed edges of length at least $\gamma r_{AB}$  from $A$ to $B$  is a subset of the set of the endpoints in $A$, denoted by $U_A$, of edges of length at least  $\frac{\gamma r_{AB}}{1+\eps}$ in $G_1$. 
	
	By  applying \Cref{lm:longEdgeEndpts} to a minimum ball enclosing $A$ (of radius at most $\diam(A)$), we have that $|U_A|  = 2^{O(d)}\left(\frac{\gamma}{\gamma/(1+\eps)}\right)^d = 2^{O(d)}(1+\eps)^d = 2^{O(d)}$.   By \Cref{lm:boundedDeg}, there are at  most $|U_A|\eps^{-O(d)} = 2^{O(d)}\eps^{-O(d)} = \eps^{-O(d)}$ edges in $G_2$ incident to points in $U_A$; here we use the fact that $\eps \leq 1/2$.  It follows that the number of directed edges from $A$ to $B$ in $\direct{G_2}$ of length at least $\gamma r_{AB}$ is  $\eps^{-O(d)}$. By the same argument,  there are $\eps^{-O(d)}$ edges from $B$ to $A$ in $\direct{G_2}$ of length at least $\gamma r_{AB}$. The lemma now follows.
\end{proof}

We obtain the following generalization of \Cref{lm:GuptaWSP}. 

\begin{corollary}
	\label{col:GuptaWSP}
	Let $(A, B)$ be a $\beta$-separated pair in $(X,\nd_X)$. Then there are at most $\left(\frac{\gamma}{\beta}\right)^{2d}\eps^{-O(d)}$ edges in $G_2$ between $A$ and $B$.
\end{corollary}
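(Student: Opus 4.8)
The plan is to reduce \Cref{col:GuptaWSP} to \Cref{lm:GuptaWSP} by a covering argument: partition $A$ and $B$ into small sub-pieces, each cross-pair of which is $\gamma$-separated, bound the edges between each sub-pair by \Cref{lm:GuptaWSP}, and sum over all pairs of pieces.

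Concretely, I would write $r_{AB} = \max\{\diam(A),\diam(B)\}$, so that being a $\beta$-separated pair means $\nd_X(A,B)\ge \beta r_{AB}$. We may assume $\beta\le\gamma$, since otherwise $(A,B)$ is already $\gamma$-separated and \Cref{lm:GuptaWSP} gives the bound directly. Set $\rho = \frac{\beta r_{AB}}{2\gamma}$ and pick a $\rho$-net $N_A$ of $A$ and a $\rho$-net $N_B$ of $B$. Since $A$ is contained in a ball of radius $r_{AB}$ (centered at any point of $A$) and $N_A$ is $\rho$-separated, \Cref{obs:packingPackable} gives $|N_A|\le \eta\,(r_{AB}/\rho)^d = \eta\,(2\gamma/\beta)^d = 2^{O(d)}(\gamma/\beta)^d$, and symmetrically $|N_B| = 2^{O(d)}(\gamma/\beta)^d$.

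Next, for each $c\in N_A$ let $A_c = A\cap\ball(c,\rho)$, and for each $c'\in N_B$ let $B_{c'} = B\cap\ball(c',\rho)$; by the defining property of a net these cover $A$ and $B$ respectively, and each has diameter at most $2\rho = \beta r_{AB}/\gamma$. For any nonempty pieces $A_c$ and $B_{c'}$,
\[
\nd_X(A_c,B_{c'}) \ge \nd_X(A,B) \ge \beta r_{AB} = \gamma\cdot\frac{\beta r_{AB}}{\gamma} \ge \gamma\,\max\{\diam(A_c),\diam(B_{c'})\},
\]
so $(A_c,B_{c'})$ is a $\gamma$-separated pair, and \Cref{lm:GuptaWSP} bounds the number of edges of $G_2$ between $A_c$ and $B_{c'}$ by $\eps^{-O(d)}$. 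Since every edge of $G_2$ between $A$ and $B$ joins some $A_c$ to some $B_{c'}$, the total number of such edges is at most $|N_A|\cdot|N_B|\cdot\eps^{-O(d)} = 2^{O(d)}(\gamma/\beta)^{2d}\cdot\eps^{-O(d)} = (\gamma/\beta)^{2d}\eps^{-O(d)}$, where the last step uses $\eps\le 1/2$ to absorb the factor $2^{O(d)}$ into $\eps^{-O(d)}$. This is all that is needed. There is no substantive obstacle; the only delicate point is to put the extra factor $1/\gamma$ into the net radius $\rho$, so that $\gamma$-separation is inherited by every pair of sub-pieces, together with the trivial separate treatment of the case $\beta>\gamma$.
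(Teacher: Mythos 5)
Your proposal is correct and follows essentially the same route as the paper's own proof: both cover $A$ and $B$ by balls of radius $\frac{\beta r_{AB}}{2\gamma}$, observe that each induced cross-pair is $\gamma$-separated, apply \Cref{lm:GuptaWSP} to each, and multiply the packing bounds. Your explicit handling of the case $\beta>\gamma$ is a harmless extra precaution that the paper omits.
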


\begin{proof}
 	We have that $\nd_X(A, B) \geq \beta r_{AB}$. Let $\mathcal{A}$ and $\mathcal{B}$ be minimum collections of balls with radius $\frac{\beta r_{AB}}{2\gamma}$ that cover $A$ and $B$, respectively.  By \Cref{lm:packing}, we have that $|\mathcal{A}|= 2^{O(d)}\left(\frac{2\diam(A)}{\beta r_{AB} / (2\gamma)}\right)^d \leq 2^{O(d)}\left(\frac{2r_{AB}}{\beta r_{AB} / (2\gamma)}\right)^d = 2^{O(d)}\left(\frac{\gamma}{\beta}\right)^d$. By the same argument, $|\mathcal{B}| = 2^{O(d)}\left(\frac{\gamma}{\beta}\right)^d$.
   	Let $(A', B')\in \mathcal{A}\times\mathcal{B}$ be any pair of balls. Observe that  $(A' \cap A, B' \cap B)$ is a $\gamma$-separated pair. By \Cref{lm:GuptaWSP}, there are  $\eps^{-O(d)}$ edges between $A' \cap A$ and $B' \cap B$. It follows that the number of edges between $A$ and $B$ in $G_2$ is $|\mathcal{A}| |\mathcal{B}|\eps^{-O(d)} =2^{O(d)} \left(\frac{\gamma}{\beta}\right)^{2d}\eps^{-O(d)} = \left(\frac{\gamma}{\beta}\right)^{2d}\eps^{-O(d)}$ since $\eps \leq 1/2$.
\end{proof}

Next, we show that that the number of short edges, those of length at least $r$ and at most $\gamma r$, cut by a ball with radius $r$ is small. This reduce the task of showing lankiness of $G_2$ to bounding the number of long edges of $G$. 

\begin{lemma}
	\label{lm:boundedShortEdges}
	 Let $\ball(p,r)$ be any ball of radius $r$ centered at some point $p$. The number of edges with length from $r$ to $\gamma r$ cut by $\ball(p,r)$ is $\eps^{-O(d)}$.
\end{lemma}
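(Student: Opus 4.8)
The plan is to reduce the count of short edges cut by $\ball(p,r)$ to a bounded number of well-separated pair instances, and then apply \Cref{col:GuptaWSP} to each. First I would observe that any edge $(u,v)$ of length in $[r,\gamma r]$ that is cut by $\ball(p,r)$ has both endpoints contained in the larger ball $\ball(p,(\gamma+1)r)$: the endpoint inside $\ball(p,r)$ is at distance $<r$ from $p$, and the other endpoint is at distance at most $r + \gamma r = (\gamma+1)r$ from $p$ by the triangle inequality. So it suffices to bound the number of edges of length at least $r$ whose two endpoints both lie in $\ball(p,(\gamma+1)r)$.

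Next I would cover $\ball(p,(\gamma+1)r)$ by a minimum collection $\mathcal{B}$ of balls of radius $\rho := \frac{r}{2\gamma}$ centered at a $\rho$-net of $\ball(p,(\gamma+1)r)$. By \Cref{lm:packing}, $|\mathcal{B}| = 2^{O(d)}\left(\frac{(\gamma+1)r}{\rho}\right)^d = 2^{O(d)}\gamma^{2d}\left(\text{since }\rho = r/(2\gamma)\right) = \eps^{-O(d)}$, recalling $\gamma = 4 + \frac{32}{\eps}$ so $\gamma^{2d} = \eps^{-O(d)}$. Every short edge of length at least $r$ has both endpoints in some pair $(B_1,B_2)\in\mathcal{B}\times\mathcal{B}$ (possibly $B_1 = B_2$, but then the edge would have length at most $\diam(B_1) = r/\gamma < r$, a contradiction, so $B_1\neq B_2$). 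Now for a pair $(B_1,B_2)$ that actually carries such an edge $(u,v)$ with $u\in B_1$, $v\in B_2$, the triangle inequality gives $\nd_X(B_1,B_2)\geq \nd_X(u,v) - \diam(B_1) - \diam(B_2) \geq r - \frac{r}{\gamma} - \frac{r}{\gamma} = r\left(1 - \frac{2}{\gamma}\right)$. Since $\gamma > 4$, this is at least $r/2$, while $\max\{\diam(B_1),\diam(B_2)\} = r/\gamma$; hence $(B_1\cap P, B_2\cap P)$ is a $\beta$-separated pair with $\beta = \frac{\nd_X(B_1,B_2)}{r/\gamma} \geq \frac{\gamma}{2}$, which is at least $1$ — in fact I would just use $\beta = \Omega(\gamma)$.

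By \Cref{col:GuptaWSP}, the number of edges in $G_2$ between $B_1\cap P$ and $B_2\cap P$ is $\left(\frac{\gamma}{\beta}\right)^{2d}\eps^{-O(d)} = 2^{O(d)}\eps^{-O(d)} = \eps^{-O(d)}$ (using $\beta = \Omega(\gamma)$). Summing over all at most $|\mathcal{B}|^2 = \eps^{-O(d)}$ ordered pairs, the total number of short edges of length at least $r$ with both endpoints in $\ball(p,(\gamma+1)r)$ — and in particular all short edges cut by $\ball(p,r)$ — is $\eps^{-O(d)}\cdot\eps^{-O(d)} = \eps^{-O(d)}$, as claimed. I do not anticipate a genuine obstacle here; the only point requiring minor care is checking that the covering radius $\rho$ is chosen small enough that each relevant pair of cover balls is genuinely $\Omega(\gamma)$-separated (so that \Cref{col:GuptaWSP} yields a bound independent of $n$), while not so small that $|\mathcal{B}|$ blows up beyond $\eps^{-O(d)}$ — the choice $\rho = r/(2\gamma)$ balances both, but any $\rho = \Theta(r/\gamma)$ works.
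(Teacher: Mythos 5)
Your proof is correct and follows essentially the same route as the paper's: both cover the enclosing ball $\ball(p,(\gamma+1)r)$ by $\eps^{-O(d)}$ small balls of radius $\Theta(\eps r)$ (you use $r/(2\gamma)$, the paper uses $\eps r$, but since $\gamma = \Theta(1/\eps)$ these are the same up to constants), verify that any pair of cover balls carrying a short cut edge is $\Omega(\gamma)$-separated, and apply \Cref{col:GuptaWSP} to each pair. The only cosmetic difference is that you use a single cover of the large ball, whereas the paper covers $\ball(p,r)$ and $\ball(p,(\gamma+1)r)$ separately into $\mathcal{B}_\text{in}$ and $\mathcal{B}_\text{out}$; this does not change the bound.
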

\begin{proof}
	Let $E_\text{short} \subseteq E(G_2)$ be the set of edges of length in $[r, \gamma r]$ cut by $\ball(p, r)$. Observe by the triangle inequality that for every edge $(u, v) \in E_\text{short}$, both $u$ and $v$ are in $\ball(p, (\gamma + 1)r)$. Let $\mathcal{B}_\text{in}$ ($\mathcal{B}_\text{out}$, resp.) be the minimum collection of balls with radius $\eps r$ that covers $\ball(p, r)$ ($\ball(p, (\gamma + 1)r)$, resp.). By the packing lemma (\Cref{lm:packing}), $|\mathcal{B}_\text{in}^i| = 2^{O(d)}\left(\frac{r}{\eps r}\right)^d = \eps^{-O(d)}$, and $|\mathcal{B}_\text{out}| = 2^{O(d)}\left(\frac{\gamma + 1}{\eps}\right)^d = \eps^{-O(d)}$.
	
	Let $(B_{\text{in}}, B_{\text{out}}) \in \mathcal{B}_\text{in} \times \mathcal{B}_\text{out}$ be any pair in of ball such that there is an edge $e = (u, v)$ between $B_{\text{in}}$ and $B_{\text{out}}$. By the triangle inequality, we have that $\nd_X(B_{\text{in}}, B_{\text{out}}) \geq \nd_X(u, v) - \diam(B_{\text{in}}) - \diam(B_{\text{out}}) \geq r - 4\eps r$. Thus,  $\frac{\nd_X(B_{\text{in}}, B_{\text{out}})}{r_{B_{\text{in}}, B_{\text{out}}}} \geq \frac{r - 4\eps r}{2\eps r} = \frac{1 - 4\eps}{2\eps}$. (Recall that $r_{B_{\text{in}}, B_{\text{out}}}$ by definition is the maximum diameter of $B_{\text{in}}$ and $B_{\text{out}}$.) This implies that $(B_{\text{in}}, B_{\text{out}})$ is a $\left(\frac{1 - 4\eps}{2\eps}\right)$-separated pair. By applying \Cref{col:GuptaWSP} with $\beta = \frac{1 - 4\eps}{2\eps}$, it follows that there are  $\left(\frac{2\gamma\eps}{1 - 4\eps}\right)^{2d}\eps^{-O(d)} = 2^{O(d)}\eps^{-O(d)} = \eps^{-O(d)}$ edges in $E_\text{short}$ between $B_{\text{in}}$ and $B_{\text{out}}$. Thus, $|E_\text{short}| = \eps^{-O(d)}|\mathcal{B}_\text{in} \times \mathcal{B}_\text{out}| = \eps^{-O(d)}$, as desired.
\end{proof}

\begin{figure}[htb]
	\center{\includegraphics[width=0.5\textwidth]{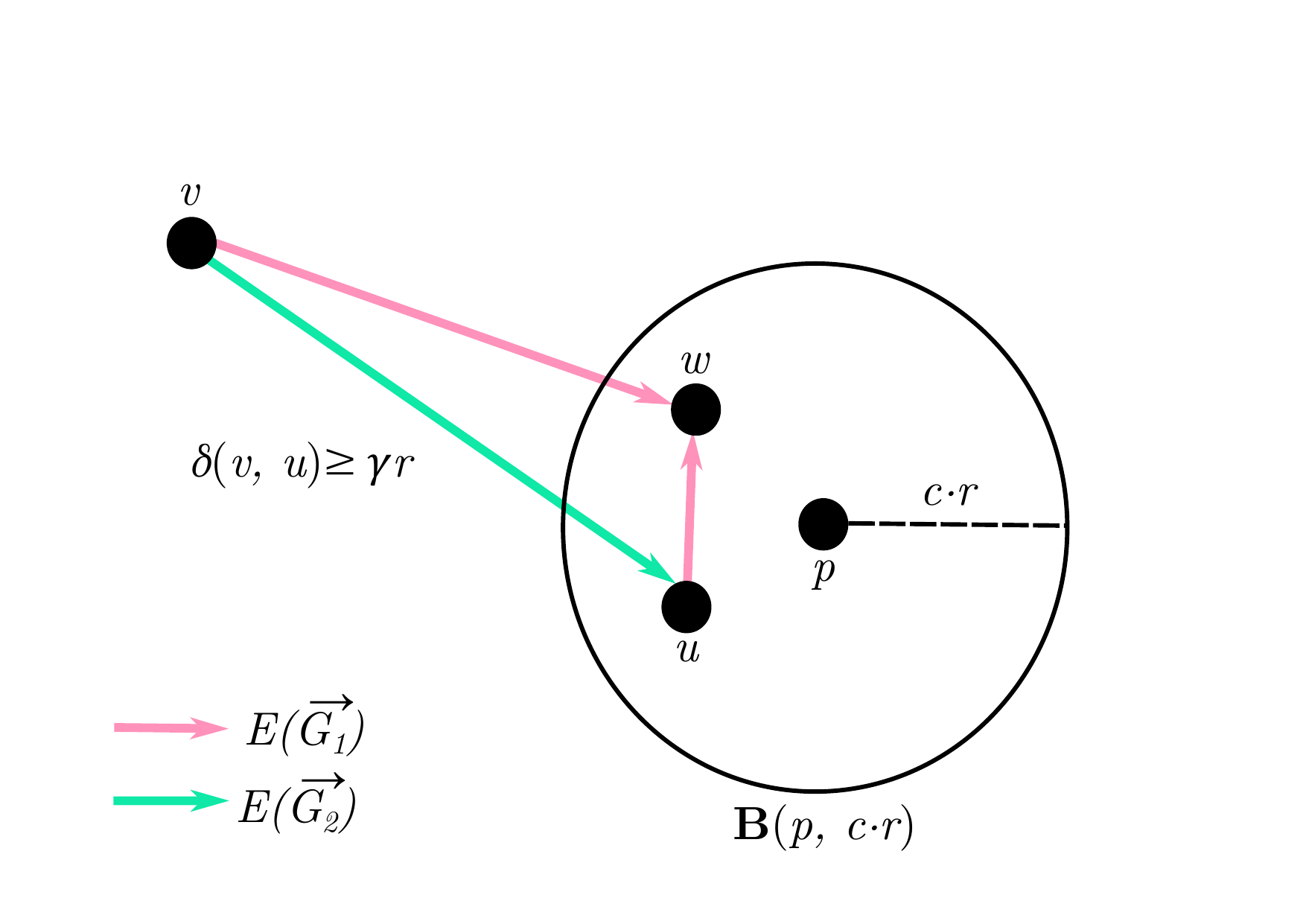}}
	\caption{An edge $(v \rightarrow w) \in E(\protect\direct{G_1})$ incident to a vertex $w\in \ball(p,c\cdot r)$ is rerouted to some point $u\in \ball(p,c\cdot r)$. That is, $(v \rightarrow u) \in E(\protect\direct{G_2})$.}
	\label{fig:reroute_to_long_edge}
\end{figure}

By \Cref{lm:boundedShortEdges}, to show that $G_2$ is lanky, it suffices to focus on edges of length at least $\gamma r$.  To that end, we show the following structure lemma, which bounds the number of directed edges in $\direct{G_1}$ of length at least $\gamma r$ that are incident the same vertex $w$ in a ball of radius $c\cdot r$ for some parameter $c\geq 1$ and rerouted (from $w$) to other points in the same ball in the construction of $\direct{G_2}$. See \Cref{fig:reroute_to_long_edge} for an illustration.

\begin{lemma}
	\label{lm:reroutedInBallLargeEdges}
 Let $\ball(p,c\cdot r)$ be any ball centered at some point $p$ that has radius $c \cdot r > 0$ for some parameter $c \geq 1$. For every point $w \in \ball(p, c\cdot r)$, there are $\eps^{-O(d)}(\lceil \log{c} \rceil + 1)$ edges $(v\rightarrow w)$ in $\direct{G_1}$ such that $(v\rightarrow w)$ is rerouted to a point $u$ in $\ball(p,c\cdot r)$ and $\nd_X(v, u) \geq \gamma r$.
\end{lemma}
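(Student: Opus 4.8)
The plan is to fix the vertex $w\in\ball(p,c\cdot r)$ and group the rerouted in-edges of $w$ by the level of $\direct{G_1}$ at which they appear. Recall from Step~2 of the CGMZ construction that if $i_j$ denotes the $j$-th smallest level in the sorted list $I_w$ of levels at which $w$ has an in-neighbor, and $j>l$, then \emph{all} in-edges $(v\rightarrow w)\in\direct{E}_{i_j}$ are rerouted to one common point $u=u(w,i_j)$, which is an in-neighbor of $w$ at level $i_{j-l}$; in particular $(u,w)\in E_{i_{j-l}}$ and $i_j\geq i_{j-l}+l$. By \Cref{lm:G1Prop}(1), $w$ has at most $(4\gamma)^d=\eps^{-O(d)}$ in-edges at any fixed level, so it suffices to bound the number of \emph{good} levels---levels $i_j$ with $j>l$, with $u(w,i_j)\in\ball(p,c\cdot r)$, and such that some level-$i_j$ in-edge $(v\rightarrow w)$ has $\nd_X(v,u(w,i_j))\geq\gamma r$---by $O(\tfrac1\eps(\lceil\log c\rceil+1))$.

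First I would convert the two defining conditions of a good level into inequalities on net-tree level indices. From \Cref{clm:reroutedLength}, $\nd_X(v,u)\leq(1+\eps)\nd_X(v,w)\leq(1+\eps)\gamma r_{i_j}$, so $\nd_X(v,u)\geq\gamma r$ forces $r_{i_j}\geq r/(1+\eps)\geq\tfrac{2r}{3}$, hence $i_j\geq a:=\log_2(\tfrac{2r}{3r_0})$. For the ball condition, $u(w,i_j)$ and $w$ both lie in $\ball(p,c\cdot r)$, so $\nd_X(u(w,i_j),w)\leq 2cr$; since $(u(w,i_j),w)\in E_{i_{j-l}}$ but not in $E_{i_{j-l}-1}$ while both endpoints lie in $N_{i_{j-l}-1}$, we also get $\nd_X(u(w,i_j),w)>\gamma r_{i_{j-l}-1}=\gamma r_{i_{j-l}}/2$; combining these gives $r_{i_{j-l}}<4cr/\gamma$, hence $i_{j-l}<b:=\log_2(\tfrac{4cr}{\gamma r_0})$. (The degenerate subcase $i_{j-l}=0$ forces the index $j$ to equal $l+1$ and so occurs for at most one good level; I would carry it as an additive constant.)

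The key point is that the offsets $a$ and $b$ differ by a quantity independent of $r$: $b-a=\log_2(\tfrac{6c}{\gamma})\leq\lceil\log_2 c\rceil$, using $\gamma=4+32/\eps\geq 6$. Since $I_w$ is strictly increasing, $i_j\geq a$ gives $j\geq 1+|I_w\cap[0,a)|$ and $i_{j-l}<b$ gives $j\leq l+|I_w\cap[0,b)|$, so the number of good levels is at most $l+|I_w\cap[a,b)|\leq l+\lceil\log_2 c\rceil+O(1)$. Multiplying by the at most $(4\gamma)^d$ in-edges of $w$ at each good level and using $l=\lceil1/\eps\rceil+1$ and $(4\gamma)^d=\eps^{-O(d)}$ gives exactly the bound $\eps^{-O(d)}(\lceil\log c\rceil+1)$.

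The hard part will be the counting step itself---getting from $i_j\geq a$ and $i_{j-l}<b$ to $O(\lceil\log c\rceil+1)$ good levels. The temptation is to bound $i_j$ directly, but this is hopeless: $w$ can have in-neighbors at arbitrarily sparse levels, so a single good level can sit at an arbitrarily large height, and $i_j$ admits no $r$-free upper bound. What makes the argument work is that the ball condition controls the \emph{index} $j-l$ (equivalently $i_{j-l}$) rather than $i_j$ itself, and that the $\log r$ contributions to the thresholds for $i_j$ and $i_{j-l}$ cancel in the difference $b-a$, leaving only $\log c$. The other inputs---that each level carries at most $(4\gamma)^d$ in-edges of $w$, and that level-$i_j$ in-edges are rerouted to a single point at level $i_{j-l}$---are just \Cref{lm:G1Prop}(1) and the definition of Step~2.
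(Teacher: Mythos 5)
Your proposal is correct and follows essentially the same approach as the paper's proof: both partition the rerouted in-edges of $w$ by the net-tree level $i_j$ at which they appear, invoke \Cref{lm:G1Prop}(1) to charge $(4\gamma)^d$ edges per level, derive a lower bound on $r_{i_j}$ from the length condition $\nd_X(v,u)\geq\gamma r$ via \Cref{clm:reroutedLength}, derive an upper bound on $r_{i_{j-l}}$ from the ball condition via the triangle inequality, and exploit the fact that $\log r$ cancels between the two thresholds so only $O(\log c)$ levels in $I_w$ can contribute. The only differences are cosmetic: the paper anchors at the \emph{smallest} good level $i_j$ and argues by contradiction with a hypothetical offset $h\geq l+\lceil\log c\rceil+3$, whereas you use fixed thresholds $a,b$ and count $I_w\cap[a,b)$ directly; and the paper obtains the lower bound $\nd_X(w,u_2)\geq r_{i_{j+h-l}}$ from $r_{i_{j+h-l}}$-net separation (which has no boundary case), whereas you use the sharper cross-edge lower bound $\nd_X(u,w)>\gamma r_{i_{j-l}-1}$ and therefore must carve out the degenerate case $i_{j-l}=0$, which you do correctly.
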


\begin{proof}
	Let $I_w = \{i_1, i_2, \ldots\ i_{m_w}\}$ be the increasing sequence of indices such that $N^{in}_{i_t}(w) \not=\emptyset$ for every $1 \leq t \leq m_w$. Let $P_ \text{out}(w)$ be the set of vertices $v$ such that $(v \rightarrow w) \in E(\direct{G_1})$, $(v \rightarrow w)$ is rerouted to a point, say $u$, in $\ball(p,c \cdot r)$ and $\nd_X(v, u) \geq \gamma r$. Observe that $P_\text{out}(w)$ is a subset of $\bigcup_{t = 1}^{m_w}N^{in}_{i_{t}}(w)$. Let $i_j$ be the smallest index in $I_w$ such that $N^{in}_{i_j}(w) \cap P_\text{out}(w) \neq \emptyset$. Our goal is to show that $P_\text{out}(w) \subseteq \bigcup_{t = 0}^{l + \lceil \log{c} \rceil + 2}N^{in}_{i_{j + t}}(w)$. Recall that $l = \lceil (1/\eps) \rceil + 1$, which is defined in Step 2 of \hyperlink{CGMZ}{CGMZ Algorithm}.
	
	Suppose that there exists an integer $h \geq l + \lceil  \log{c} \rceil + 3$ such that $N^{in}_{i_{j + h}}(w) \cap P_\text{out}(w) \neq \emptyset$. Let $v_1\not= v_2$ be any two points in $P_{\text{out}}(w)$ such that $v_1 \in N^{in}_{i_j}(w)$ and $v_2 \in N^{in}_{i_{j + h}}(w)$. 
	Let $u_a$ be the point such that $(v_a  \rightarrow w)$ is rerouted to, $a\in\{1,2\}$. (See \Cref{fig:net_tree_rerouted_to_long_edge} for an illustration.) By the definition of $P_\text{out}(w)$, we have that $u_1, u_2 \in \ball(p, c\cdot r)$ and that $\nd_X(v_1, u_1), \nd_X(v_2, u_2) \geq \gamma r$. By \Cref{clm:reroutedLength}, $\nd_X(v_1, w) \geq \frac{\nd_X(v_1, u_1)}{1 + \eps} \geq \frac{\gamma r}{1 + \eps}$. Hence, $r_{i_j} \geq \frac{r}{1 + \eps}$ by the construction in Step 1 of \hyperlink{CGMZ}{CGMZ Algorithm}. Since $v_2 \in N^{in}_{i_{j + h}}(w)$, by the construction in Step 2 of \hyperlink{CGMZ}{CGMZ Algorithm}, $u_2 \in N^{in}_{i_{j + h-l}}(w)$. It follows that:
	\begin{equation}\label{eq:dist-wu2}
		\nd_X(w, u_2) \geq r_{i_{j + h - l}} \geq r_{i_j + h - l} \geq 2^{\lceil \log{c} \rceil  + 3}r_{i_j} \geq \frac{8c\cdot r}{1 + \eps} > 2c\cdot r,
	\end{equation}
	 since $\eps \leq 1$ and $c\cdot r> 0$. However, by the triangle inequality,  $\nd_X(w, u_2) \leq \nd_X(w, p) + \nd_X(p, u_2) \leq 2cr$, which contradicts \Cref{eq:dist-wu2}. Thus, $P_\text{out}(w)$ only contains points in $\bigcup_{t = 0}^{l + \log{c} + 2}N^{in}_{i_{j + t}}(w)$. 
	 
	 By  Item (1) of \Cref{lm:G1Prop}, there are at most $(4\gamma)^d$ points in each net that are adjacent to $w$ in $G_1$. Hence, $|P_{out}(w)| \leq (l + \lceil \log{c} \rceil + 3)O(4\gamma)^d = \eps^{-O(d)}(\lceil \log{c} \rceil + 1)$; this implies the lemma.
\end{proof}

\begin{figure}[htb]
	\center{\includegraphics[width=1\textwidth]{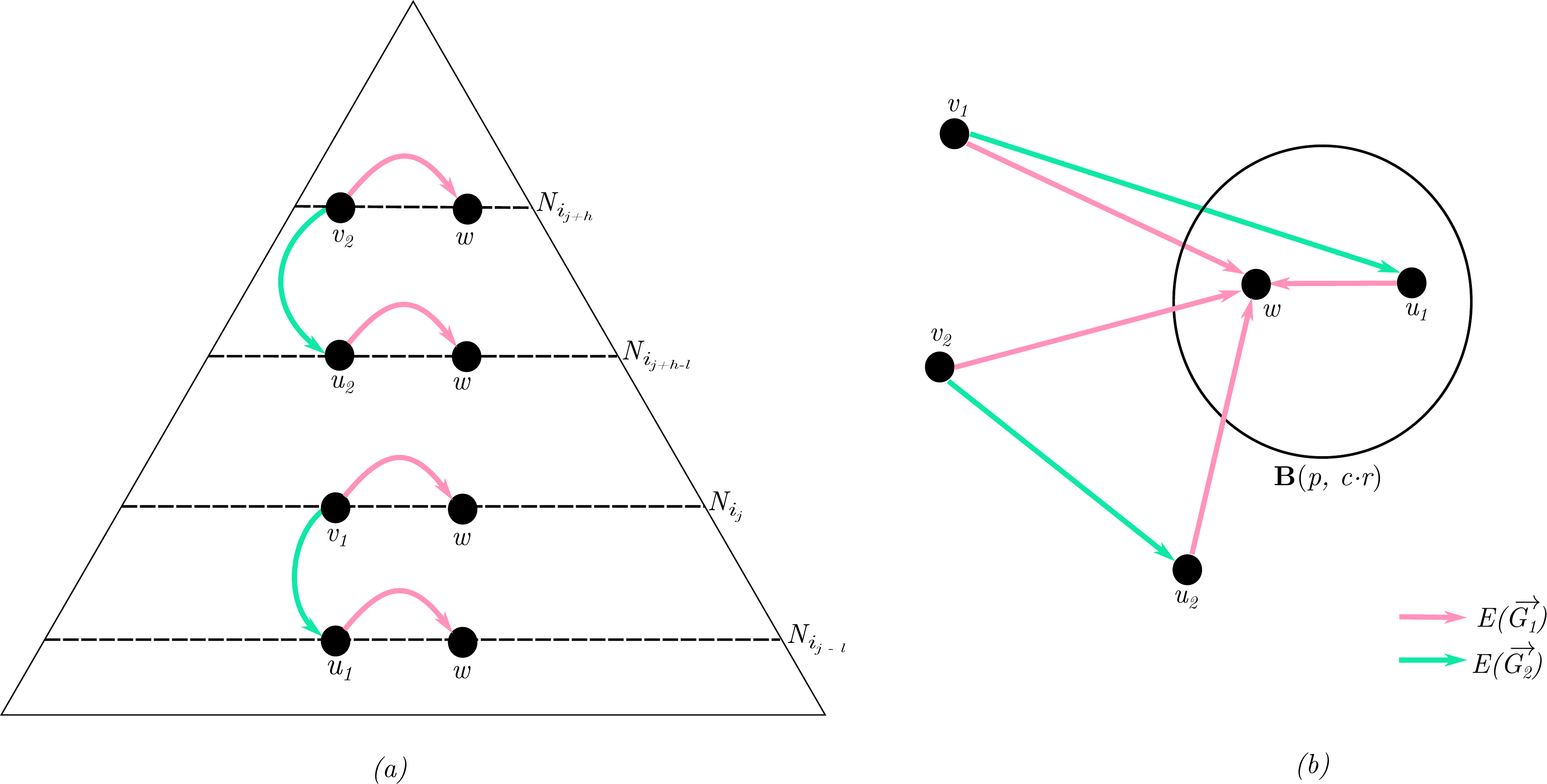}}
	\caption{$(a)$: The positions of $u_1, u_2, v_1, v_2$ and $w$ in the net tree and $(b)$ the edges $(v_1 \rightarrow w)$ and $(v_2 \rightarrow w)$ are rerouted to $u_1$ and $u_2$, respectively. We show in \Cref{lm:reroutedInBallLargeEdges} that $u_2$ must be outside $\ball(p, c \cdot r)$.}
	\label{fig:net_tree_rerouted_to_long_edge}
\end{figure}

 We now have all necessary tools to show that $G_2$ is lanky.

\begin{lemma}\label{lm:G2Lanky} $G_2$ is $\eps^{-O(d)}$-lanky.
\end{lemma}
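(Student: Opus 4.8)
The goal is to show that for any ball $\ball(p,r)$ centered at a vertex $p$, the number of edges of $G_2$ of length at least $r$ that are cut by $\ball(p,r)$ is $\eps^{-O(d)}$. By \Cref{lm:boundedShortEdges}, only $\eps^{-O(d)}$ cut edges have length in $[r,\gamma r]$, so it suffices to bound the number of cut edges of length at least $\gamma r$. The plan is to trace each such long edge of $G_2$ back through the rerouting step to an edge of $\direct{G_1}$, and then exploit \Cref{lm:longEdgeEndpts} (few $G_1$-long-edge endpoints in a small ball) and \Cref{lm:reroutedInBallLargeEdges} (few long $\direct{G_1}$-edges incident to a fixed vertex that get rerouted within a ball).

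\textbf{Step 1: Cover and classify.} Let $\tilde E$ be the set of edges of $G_2$ of length at least $\gamma r$ cut by $\ball(p,r)$; orient each such edge in $\direct{G_2}$, say as $(v\rightarrow u)$. Since each edge of $\tilde E$ has length at least $\gamma r > \diam(\ball(p,r))$, exactly one endpoint lies in $\ball(p,r)$. Each such edge $(v\rightarrow u)$ of $\direct{G_2}$ either is an original edge of $\direct{G_1}$ (when its level index $j\le l$) or arose by rerouting some $(v\rightarrow w)\in E(\direct{G_1})$ to $u$. I would split $\tilde E$ according to which endpoint ($v$ or $u$) lies in $\ball(p,r)$, handling both symmetrically, and within each case split by whether the $G_2$-edge is an original $\direct{G_1}$-edge or a rerouted one.

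\textbf{Step 2: Handle original edges and the ``tail in the ball'' case via \Cref{lm:longEdgeEndpts}.} For an edge of $\direct{G_2}$ that is also in $\direct{G_1}$, it is a long $G_1$-edge with one endpoint in $\ball(p,r)$; by \Cref{lm:longEdgeEndpts} (applied with $\beta=\gamma$), there are only $2^{O(d)}$ points of $\ball(p,r)$ incident to $G_1$-edges of length at least $\gamma r$, and by \Cref{lm:boundedDeg} each such point has degree $\eps^{-O(d)}$ in $G_2$; so these contribute $\eps^{-O(d)}$ edges. Similarly, when a rerouted edge $(v\rightarrow u)$ has its ``source'' endpoint $v$ in $\ball(p,r)$: by \Cref{clm:reroutedLength} the underlying $\direct{G_1}$-edge $(v\rightarrow w)$ has length $\geq \gamma r/(1+\eps)$, so $v$ is one of only $2^{O(d)}$ endpoints in $\ball(p,r)$ of $G_1$-edges of length $\geq\gamma r/(1+\eps)$ (again \Cref{lm:longEdgeEndpts}), and each contributes $\eps^{-O(d)}$ $G_2$-edges by \Cref{lm:boundedDeg}; total $\eps^{-O(d)}$.

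\textbf{Step 3: The main case — rerouted edges with the rerouted target $u$ in $\ball(p,r)$.} This is where \Cref{lm:reroutedInBallLargeEdges} enters, and I expect it to be the crux. Here $(v\rightarrow u)\in E(\direct{G_2})$ with $u\in\ball(p,r)$, arising from $(v\rightarrow w)\in E(\direct{G_1})$ rerouted to $u$; note $|vu|\geq\gamma r$, while $w$ need \emph{not} be in $\ball(p,r)$ — it could be far. The idea is: enlarge to $\ball(p,c\cdot r)$ for a suitable constant $c$ (some constant like $c=\gamma$ or a small multiple thereof) so that $u\in\ball(p,c\cdot r)$, and show each such $w$ also lies in $\ball(p,c'\cdot r)$ for a related constant $c'$ — this follows because $|uw|=\nd_X(u,w)\le\eps\,\nd_X(v,w)$ is \emph{not} automatically bounded, so instead I would argue that $w$ is within distance $O(r)$ of $u$ using that the reroute goes down only $l$ levels while the edge $(v\to w)$ at level $i_j$ satisfies $r_{i_j}=\Theta(r)$ (from $\nd_X(v,w)\ge\gamma r/(1+\eps)$ and $\nd_X(v,w)\le\gamma r_{i_j}$), hence $\nd_X(u,w)\le\gamma r_{i_j-l}\le\gamma r_{i_j}=O(r)$. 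Thus $w\in\ball(p,O(r))$. Cover $\ball(p,O(r))$ by $2^{O(d)}$ balls of radius $\Theta(r)$ — actually it is cleaner to cover $\ball(p,O(r))$ by $\eps^{-O(d)}$ balls of radius $\Theta(\eps r)$ so the relevant $w$'s are grouped; but the simplest route is: the set of such vertices $w$ lies in $\ball(p,O(r))$, and each such $w$ has, by \Cref{lm:reroutedInBallLargeEdges} (applied with the enclosing ball of radius $c\cdot r=O(r)$ and the constant $c=O(1)$, so $\lceil\log c\rceil+1=O(1)$), at most $\eps^{-O(d)}$ incoming $\direct{G_1}$-edges $(v\rightarrow w)$ that get rerouted to points of that ball. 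Summing over the $w$'s would be too many, so instead I would first bound the number of distinct such $w$ that can contribute: each contributing $w$ is incident in $G_1$ to an edge $(v\to w)$ of length $\geq\gamma r/(1+\eps)$, and by \Cref{lm:longEdgeEndpts} there are only $2^{O(d)}$ such endpoints inside $\ball(p,O(r))$. Hence at most $2^{O(d)}$ choices of $w$, each with $\eps^{-O(d)}$ rerouted incoming long edges, giving $\eps^{-O(d)}$ edges in this case.

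\textbf{Conclusion.} Combining Steps 2 and 3 with \Cref{lm:boundedShortEdges}, the total number of edges of $G_2$ of length at least $r$ cut by $\ball(p,r)$ is $\eps^{-O(d)}$, so $G_2$ is $\eps^{-O(d)}$-lanky. The main obstacle is Step 3: one must carefully justify that the ``hidden'' vertices $w$ (to which the rerouted edges were originally incident) stay within an $O(r)$-ball of $p$ despite not being cut by $\ball(p,r)$ themselves — this uses the level-arithmetic of the net tree together with \Cref{clm:reroutedLength} — and then one must avoid overcounting by first bounding the number of such $w$ (via \Cref{lm:longEdgeEndpts}) before applying \Cref{lm:reroutedInBallLargeEdges} to each.
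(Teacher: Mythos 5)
Your Steps 1 and 2 closely match the paper's decomposition: short edges via \Cref{lm:boundedShortEdges}, then long edges split into original $\direct{G_1}$-edges (the paper's $E^1_{\text{long}}$, bounded by \Cref{lm:longEdgeEndpts} and \Cref{lm:boundedDeg}) and rerouted edges; and you correctly split the rerouted ones by which endpoint lies in the ball, your ``source in the ball'' case being the paper's $\direct{E}^2_\text{out}$ and handled exactly as in the paper (Claim~\ref{clm:E2out}). However, Step 3 has a genuine gap.

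The gap is the claim that the hidden vertex $w$ (to which the rerouted edge was originally incident) must lie in $\ball(p,O(r))$. You try to derive $r_{i_j}=\Theta(r)$, but from $\nd_X(v,w)\geq\gamma r/(1+\eps)$ and $\nd_X(v,w)\leq\gamma r_{i_j}$ you get only the \emph{lower} bound $r_{i_j}\geq r/(1+\eps)$; there is no upper bound on $r_{i_j}$, because the rerouted edge $(v\to u)$ of $\direct{G_2}$ may be far longer than $\gamma r$, hence $(v,w)$ may live at a very high level. Concretely, if the cross edge $(u,w)$ is at a level $j'$ with $r_{j'}>2r$, then $\nd_X(u,w)\leq\gamma r_{j'}$ can be arbitrarily large and $w$ can sit far outside any $O(r)$-ball around $p$. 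Your plan to bound $|W|$ via \Cref{lm:longEdgeEndpts} and then apply \Cref{lm:reroutedInBallLargeEdges} therefore breaks down precisely in this regime, since both lemmas require the relevant points or rerouting targets to stay inside a ball of radius $O(r)$ around $p$. The paper resolves this with a further case split inside $\direct{E}^2_\text{in}$: let $j_0$ be the level with $2r<r_{j_0}\leq 4r$, and let $\direct{M}_1$ be the rerouted edges whose underlying cross edge $(u,w)$ lives at level $\geq j_0$. For these, $u$ is an $r_{j_0}$-net point, and since $r_{j_0}>\diam(\ball(p,r))$, at most one such $u$ can lie in $\ball(p,r)$, so $|\direct{M}_1|$ is bounded by $u$'s degree, i.e.\ $\eps^{-O(d)}$. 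Only for the remaining edges $\direct{M}_2$ (cross edge at level $<j_0$) does one get $\nd_X(u,w)\leq\gamma r_{j_0}\leq 4\gamma r$, so $w\in\ball(p,(4\gamma+1)r)$, and only there do the packing bound on $W$ and \Cref{lm:reroutedInBallLargeEdges} apply. Without the $\direct{M}_1$/$\direct{M}_2$ split (or an equivalent observation that the rerouting targets in the problematic regime are high-level net points, hence unique in $\ball(p,r)$), the argument in Step 3 is incomplete.
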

\begin{proof}
	Let $\ball(p, r)$ be a ball of radius $r$. We assume that $r \geq 1/2$ since otherwise, we could enlarge $r$ to $1/2$ without changing the number of edges that cut $\ball(p,r)$. We show that the number of edges of length at least $r$ cut by $\ball(p, r)$ is bounded by $\eps^{-O(d)}$, which implies the lemma by the definition (\Cref{def:remotelyBoundedDeg}). 
	
	Let $E_\text{short}$ be the set of edges of length in $[r,\gamma r]$  cut by $\ball(p, r)$. Observe by  \Cref{lm:boundedShortEdges} that $|E_{\text{short}}| = \eps^{-O(d)}$. Thus, it remains to focus on the set of edges in $G_2$ of length at least $\gamma r$ cut by $\ball(p,r)$, which we denote by $E_\text{long}$. Let $\direct{E}_\text{long}$ be the set of directed edges corresponding to $E_\text{long}$ in $\direct{G_2}$.
	
	Recall that in the construction of Step 2 of \hyperlink{CGMZ}{CGMZ Algorithm}, some edges in $\direct{G_1}$ are copied over $\direct{G_2}$ (the case where $j\geq l$) while other edges are rerouted to different endpoints (the case where $j > l$). Let $\direct{E}^{1}_{\text{long}}$ be the set of edges in $\direct{G_2}$ that are copied from $\direct{G_1}$, and $\direct{E}^{2}_{\text{long}} = \direct{E}_{\text{long}}\setminus \direct{E}^{1}_{\text{long}}$. Let  $E^{1}_{\text{long}}$ and $E^{2}_{\text{long}}$ be the undirected counterparts of $\direct{E}^{1}_{\text{long}}$ and $\direct{E}^{2}_{\text{long}}$, respectively.
	
	We observe by the definition that if a point is incident to an edge in $E^{1}_{\text{long}}$, it is also incident to an edge with length at least $\gamma r$ in $G_1$. By applying  \Cref{lm:longEdgeEndpts} with $\beta = 1$, there are $2^{O(d)}$ points in $\ball(p, r)$ incident to an edge with length at least $\gamma r$ in $E(G_1)$. By \Cref{lm:boundedDeg}, each of these points is incident to at most $\eps^{-O(d)}$ edges in $E_\text{long}$ since $E_\text{long} \subseteq E(\direct{G}_2)$. Thus, \hypertarget{E1Long}{}$|E^{1}_{\text{long}}| = 2^{O(d)}\eps^{-O(d)} = \eps^{-O(d)}$ since $\eps \leq 1/2$. 
	
	 To bound the size of $E^{2}_{\text{long}}$, we partition $\direct{E}^{2}_{\text{long}}$ into two sets: $\direct{E}^{2}_\text{in} = \{(v \rightarrow u) \in \direct{E}^{2}_{\text{long}} ~|~  u \in \ball(p, r)\} $ is the set of edges directed into $\ball(p,r)$ and $\direct{E}^{2}_\text{out} = \{(u \rightarrow y)  \in \direct{E}^{2}_{\text{long}} ~|~  u \in \ball(p, r)\}$ is the set of edges directed out of $\ball(p,r)$. We bound the number of edges of $\direct{E}^{2}_\text{in}$ and $\direct{E}^{2}_\text{out}$ in \Cref{clm:E2in} and \Cref{clm:E2out}, respectively.
	
	\begin{claim}\label{clm:E2in}
		$|\direct{E}^{2}_\text{in}| = \eps^{-O(d)}$.
	\end{claim}
	\begin{proof}
		Let $j$ be the index that $4r \geq r_j > 2r$. Recall that for each $i \in [0,\lceil \log(\Delta)\rceil + 2]$, $E_i$ is the set of cross edge at level $i$ defined in Step 1. Let $\direct{M}_1$ be the set of edges $(v\rightarrow u)\in \direct{E}^{2}_\text{in}$ such that (a) $(v \rightarrow u)$ is formed by rerouting $(v \rightarrow w)$ to $u$ in Step 2 and (b) $(u,w)\in E_{j'}$ for some $j'\geq j$. Let $\direct{{M}_2} = \direct{E}^{2}_\text{in}\setminus \direct{M}_1$. 
		
		We first bound $|\direct{M}_1|$ by showing that every edge in $\direct{M}_1$ is directed towards the same endpoint, say $u$. If so,  $|\direct{M}_1| = \eps^{-O(d)}$ since $u$ has a degree  at most $\eps^{-O(d)}$ by \Cref{lm:boundedDeg}. Suppose for contradiction that there is another edge $(v' \rightarrow u')\in \direct{M}_1$ such that $u'\not= u$. We observe by the definition of cross edges in Step 1 that $u$ and $u'$ belongs to nets of level at least $j$. That means they both belong to the net $N_j$ at level $j$ since nets are nested. Thus, $\nd_X(u, u') \geq r_j > 2r$ by the definition of $j$. It follows that $u$ and $u'$ could not both be in $\ball(p, r)$; a contradiction.

		It remains to bound $|\direct{M}_2|$. To this end, we define $W = \{w ~|~ \exists (v \rightarrow u) \in \direct{M}_2 \text{ s.t. }  (v \rightarrow w) \text{ is rerouted to } u\}$. We claim that $W \subseteq \ball(p, (4\gamma + 1)r)$. Let $w$ be an arbitrary point in $W$. Let $u, v$ be the points that $(v \rightarrow w)$ is rerouted to $u$. Let $j'$ be the index such that $(u, w) \in E_{j'}$. By the definition of $\direct{M}_2$, $j' < j$. Since $(u, w) \in E_{j'}$, $\nd_X(u, w) \leq \gamma r_j \leq 4\gamma r$. By the triangle inequality, $\nd_X(p, w) \leq \nd_X(p, u) + \nd_X(u, w) \leq (4\gamma + 1)r$. Thus, $w \in \ball(p, (4\gamma + 1)r)$, as claimed. 
		
		By \Cref{clm:reroutedLength}, $\nd_X(v, w) \geq \frac{\nd_X(v, u)}{1 + \eps} \geq \frac{\gamma r}{1 + \eps}$. It follows that $w \in N_{h}$ where $h$ is the minimum index such that $r_h \geq \frac{r}{1 + \eps}$. Since nets are nested,  $W\subseteq N_h \cap \ball(p, (4\gamma + 1)r)$. By the packing lemma (\Cref{lm:packing}), $|N_h \cap \ball(p, (4\gamma + 1)r)| = 2^{O(d)}\left(\frac{(4\gamma + 1)r}{r/(1+\eps)}\right)^d = \eps^{-O(d)}$. By \Cref{lm:reroutedInBallLargeEdges}, for each point $w \in W$, there are at most $\eps^{-O(d)}(\lceil \log(4\gamma +1) \rceil+ 1) = \eps^{-O(d)}$ edges in $\direct{M}_2$. It follows that  $|\direct{M}_2| = |W|\cdot \eps^{-O(d)} = \eps^{-O(d)}$. The lemma now follows from the fact that both $\direct{M}_1$ and $\direct{M}_2$ have $\eps^{-O(d)}$ edges. 
	\end{proof}

	\begin{claim}\label{clm:E2out}
			$|\direct{E}^{2}_\text{out}| = \eps^{-O(d)}$.
	\end{claim}
	\begin{proof}
		Let $U$ be the set of endpoints in $\ball(p,r)$ of edges in $\direct{E}^{2}_\text{out}$. By \Cref{lm:boundedDeg}, $|\direct{E}^{2}_\text{out}| = O(|U|\eps^{-O(d)})$. Let $u$ be an arbitrary point in $U$. Let $x, y$ be two points that there exists an edge $(u\rightarrow x) \in E(\direct{G_1})$ that is rerouted to $y$. That is, $(u\rightarrow y)\in \direct{E}^{2}_\text{out}$. Since $\direct{E}^{2}_\text{out}\subseteq \direct{E}^{2}_\text{long}$,  $\nd_X(u, y) \geq \gamma r$. By \Cref{clm:reroutedLength}, we have $\nd_X(u, x) \geq \frac{\nd_X(u, y)}{1 + \eps} \geq \frac{\gamma r}{1 + \eps}$. Hence, $u \in N_i$ where $i$ is the minimum index such that $r_i \geq \frac{r}{1 + \eps}$. It follow that $u \in  N_i \cap \ball(p, r)$ and hence, $U\subseteq  N_i \cap \ball(p, r)$. By the packing lemma (\Cref{lm:longEdgeEndpts}),  $|N_i \cap \ball(p, r)| = 2^{O(d)}\left(\frac{r}{r/(1+\eps)}\right)^d  = 2^{O(d)}$. We conclude that the total number of edges in $\direct{M}_2$ is bounded by $|U|\eps^{-O(d)} \leq |N_i \cap \ball(p, r)|\eps^{-O(d)} = \eps^{-O(d)}$ as claimed.
	\end{proof}
	
	Since $\direct{E}^2_\text{long} = \direct{E}^{2}_\text{in} \cup \direct{E}^{2}_\text{out}$, by \Cref{clm:E2in} and \Cref{clm:E2out}, $|E^2_\text{long}| = \eps^{-O(d)}$. As we showed \hyperlink{E1Long}{above},  $|E^1_\text{long}| = \eps^{-O(d)}$. Thus, $|E_\text{long}| = |E^1_\text{long}| + |E^2_\text{long}| = \eps^{-O(d)}$. Since $|E_{\text{short}}| = \eps^{-O(d)}$, the number of edges of length at least $r$ cut by $\ball(p,r)$ is $|E_{\text{short}}|+ |E_{\text{long}}| = \eps^{-O(d)}$, as desired. 
\end{proof}

\section{Separators of Greedy Spanners for Doubling Metrics}\label{sec:doublinggreedy}

In this section, we prove \Cref{thm:strong_doubling_separator}. We begin with the following lemma which bounds the number of edges in the greedy spanner between two well separated pair.

\begin{lemma}
	\label{lm:WSPDSpEdges}
	Let $(A, B)$ be a $(4/\eps)$-separated pair in $(X,\nd_X)$. There is at most one edge in $G$ between $A$ and $B$. 
\end{lemma}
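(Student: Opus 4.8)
The plan is to argue by contradiction, closely mirroring the cone-based argument of \Cref{lm:XtoYEdges}, but now exploiting the stronger $(4/\eps)$-separation to dispense with the cone construction entirely. Suppose $(s,q)$ and $(t,y)$ are two distinct edges of the greedy spanner $G$ with $s,t \in A$ and $q,y \in B$; here we allow $s=t$ or $q=y$ but not both. Set $R = \nd_X(A,B)$ and $\rho = r_{AB} = \max\{\diam(A),\diam(B)\}$, so that $R \geq (4/\eps)\rho$. Without loss of generality assume $\nd_X(t,y) \geq \nd_X(s,q)$; I will derive $(1+\eps)\,\nd_X(t,y) > \nd_{G-(t,y)}(t,y)$, contradicting \Cref{fact:path-greedy}.

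First I would bound the relevant distances in terms of $R$ and $\rho$. Since both endpoints of each edge straddle the pair, the triangle inequality gives $\nd_X(t,y), \nd_X(s,q) \in [R, R+2\rho]$, while $\nd_X(t,s) \leq \diam(A) \leq \rho$ and $\nd_X(q,y) \leq \diam(B) \leq \rho$. Now route $t \to s \to q \to y$ in $G$: since $G$ is a $(1+\eps)$-spanner of the underlying complete-graph metric, $\nd_G(t,s) \leq (1+\eps)\nd_X(t,s) \leq (1+\eps)\rho$ and $\nd_G(q,y) \leq (1+\eps)\rho$, and $\nd_G(s,q) \leq \nd_X(s,q) \leq \nd_X(t,y)$ because $(s,q)$ is itself an edge. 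Adding up,
\begin{equation*}
\nd_{G}(t,s) + \nd_G(s,q) + \nd_G(q,y) \leq \nd_X(t,y) + 2(1+\eps)\rho.
\end{equation*}
The key numerical point: $2(1+\eps)\rho \leq 2(1+\eps)\cdot \tfrac{\eps}{4} R \leq 2(1+\eps)\cdot\tfrac{\eps}{4}\,\nd_X(t,y)$, and for $\eps \leq 1/2$ this is at most $\tfrac{3}{8}\eps\,\nd_X(t,y) < \eps\,\nd_X(t,y)$. Hence the concatenated walk has length $< (1+\eps)\nd_X(t,y)$.

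It remains to replace $G$ by $G-(t,y)$ in these three subpaths, i.e.\ to check $(t,y)$ lies on none of the chosen shortest paths. The subpath realizing $\nd_G(s,q)$ is the single edge $(s,q) \neq (t,y)$, so it is untouched. For the other two, $\nd_X(t,s) \leq \rho \leq \tfrac{\eps}{4}R \leq \tfrac{\eps}{4}\nd_X(t,y)$, so $(1+\eps)\nd_X(t,s) < \nd_X(t,y)$, meaning any path of length $\nd_G(t,s)$ is strictly shorter than the edge $(t,y)$ and therefore cannot use it; the same holds for $(q,y)$. Thus all three distances survive in $G-(t,y)$, giving $\nd_{G-(t,y)}(t,y) \leq \nd_{G-(t,y)}(t,s) + \nd_{G-(t,y)}(s,q) + \nd_{G-(t,y)}(q,y) < (1+\eps)\nd_X(t,y)$, contradicting \Cref{fact:path-greedy}. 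The only mild subtlety — the ``main obstacle'' — is handling the degenerate cases $s=t$ or $q=y$ cleanly: if $s=t$ then $(t,y)$ and $(t,q)$ share an endpoint and the walk $t=s\to q\to y$ has just two edges, and if $q=y$ similarly; in either case the same inequalities go through a fortiori since one of the two $\rho$-terms vanishes, so no cone argument is needed at all and the $(4/\eps)$-separation does all the work.
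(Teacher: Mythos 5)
Your proof is correct and takes essentially the same route as the paper: assume two crossing edges, remove the longer one, route along the shorter edge plus two short "within-side" detours of length at most $\rho$ each, use the spanner property and the $(4/\eps)$-separation to show the total is below $(1+\eps)$ times the removed edge's length, and contradict \Cref{fact:path-greedy}. You are somewhat more explicit than the paper about checking that the two short detour paths cannot pass through the removed edge (the paper asserts the resulting bound in $G-(u,v)$ without spelling this out), and about the degenerate cases $s=t$ or $q=y$, but the argument, the estimates, and the contradiction are the same.
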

\begin{proof}
	Suppose that there are two edges $(u, v)$ and $(u', v')$ such that $u$ and $u'$ are in $A$ and $v$ and $v'$ are in $B$. W.l.o.g, we assume that $\nd_X(u, v) \geq \nd_X(u', v')$. By the definition of separated pairs, the diameters of $A$ and $B$ are at most $(\eps/4)\nd_X(A,B)$. Thus, $\max\{\nd_X(u,u'),\nd_X(v,v')\} \leq (\eps/4)\nd_X(A,B)$. By \Cref{fact:path-greedy}, $\nd_{G-(u,v)}(u,u')\leq (1+\eps)\nd_X(u,u') \leq (1+\eps)(\eps/4)\nd_X(A,B)$. Similarly,  $\nd_{G-(u,v)}(v,v')\leq (1+\eps)\nd_X(v,v') \leq (1+\eps)(\eps/4)\nd_X(A,B)$.  We have:
	\begin{equation}
		\begin{split}
			\nd_{G  - (u, v)}(u, v) &\leq \underbrace{\nd_{G - (u, v)}(u, u')}_{\leq~ (1+\eps) \eps \nd_X(A, B)/4} + \nd_{G - (u, v)}(u', v') + \underbrace{\nd_{G - (u, v)}(v', v)}_{\leq~ (1+\eps) \eps \nd_X(A, B)/4}\\
			&\leq 2(1 + \eps)\eps \nd_X(A, B)/4 + \nd_X(u', v') \leq \eps \nd_X(A, B) + \nd_X(u', v') \quad\mbox{(since $\eps < 1$)}\\
			&< (1 + \eps)\nd_X(u, v) \quad \mbox{(since $\nd_X(A, B) \leq \nd_X(u', v') \leq \nd_X(u, v)$)}~.
		\end{split}
	\end{equation}
	Thus, $(1 + \eps)\nd_X(u, v) > \nd_{G - (u, v)}(u, v)$, contradicting that $G$ is a $(1+\eps)$-spanner (\Cref{fact:path-greedy}). 
\end{proof}

We obtain the following corollary of \Cref{lm:WSPDSpEdges}.

\begin{corollary}\label{cor:WSPDSEdges}Let $(A, B)$ be a $\beta$-separated pair in $(X,\nd_X)$. Then, there are at most $ O\left(\frac{1}{\eps \beta}\right)^{2d}$ edges in $G$ between $A$ and $B$.
\end{corollary}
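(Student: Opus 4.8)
The plan is to reduce the general $\beta$-separated case to the $(4/\eps)$-separated case already handled by \Cref{lm:WSPDSpEdges}, in exactly the way \Cref{col:GuptaWSP} was derived from \Cref{lm:GuptaWSP} in the CGMZ setting: I would chop $A$ and $B$ into small balls so that every pair of pieces, one from $A$ and one from $B$, is $(4/\eps)$-separated, and then apply \Cref{lm:WSPDSpEdges} to each such pair and sum.

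In more detail, set $r_{AB} = \max\{\diam(A),\diam(B)\}$, so that $\nd_X(A,B)\geq \beta\, r_{AB}$ by the definition of a $\beta$-separated pair. I would first dispose of the easy regime $\eps\beta > 8$: there $\beta > 4/\eps$, so \Cref{lm:WSPDSpEdges} already bounds the number of edges between $A$ and $B$ by one, which is within the claimed bound. Otherwise set $\rho := \tfrac{\eps\beta}{8}\, r_{AB}$, so that $\rho \leq r_{AB}$, and since $A$ (resp.\ $B$) sits inside a ball of radius $r_{AB}$, the packing bound (\Cref{lm:packing}) lets me cover $A$ (resp.\ $B$) by a family $\mathcal{A}$ (resp.\ $\mathcal{B}$) of at most $2^{O(d)}(r_{AB}/\rho)^d = O\!\left(\tfrac{1}{\eps\beta}\right)^{d}$ balls of radius $\rho$.

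The core step is the separation check: for any $(A',B')\in\mathcal{A}\times\mathcal{B}$, every point of $A'\cap A$ lies at distance at least $\nd_X(A,B)\geq \beta r_{AB}$ from every point of $B'\cap B$, while $\diam(A'\cap A),\,\diam(B'\cap B)\leq 2\rho$; hence $(A'\cap A,\,B'\cap B)$ is $\tfrac{\beta r_{AB}}{2\rho}$-separated, and $\tfrac{\beta r_{AB}}{2\rho}=\tfrac{4}{\eps}$ by the choice of $\rho$. So \Cref{lm:WSPDSpEdges} gives at most one edge of $G$ between $A'\cap A$ and $B'\cap B$. Since every edge of $G$ with one endpoint in $A$ and the other in $B$ is captured by some pair in $\mathcal{A}\times\mathcal{B}$, summing over all pairs yields at most $|\mathcal{A}|\cdot|\mathcal{B}|=O\!\left(\tfrac{1}{\eps\beta}\right)^{2d}$ edges, as claimed.

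I do not expect a genuine obstacle; this is a routine covering argument. The only subtlety worth flagging is the borderline regime where $\beta$ is already of order $1/\eps$ or larger, so that $\rho$ would exceed $r_{AB}$ and the covering degenerates — which is why the plan peels that case off at the start and settles it directly through \Cref{lm:WSPDSpEdges}.
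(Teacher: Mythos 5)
Your proposal matches the paper's proof essentially verbatim: cover $A$ and $B$ by balls of radius $\tfrac{\eps\beta}{8}r_{AB}$ (i.e., centered at a net of that scale), note each pair of pieces is $(4/\eps)$-separated, apply \Cref{lm:WSPDSpEdges} to each pair, and multiply. The only difference is that you explicitly peel off the degenerate regime $\eps\beta>8$, which the paper leaves implicit but which is indeed covered directly by \Cref{lm:WSPDSpEdges}.
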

\begin{proof}Let $r_{AB} = \max\{\diam(A),\diam(B)\}$. By the definition of $\beta$-separated pairs, $r_{AB} \leq \frac{\nd_X(A, B)}{\beta}$. Let $N_A$ ($N_B$) be a $\left(\frac{\eps\beta r_{AB}}{8}\right)$-net of $A$ ($B$).  By \Cref{lm:packing}, it holds that:
	\begin{equation} \label{eq:NAsize}
		|N_A| \leq 2^{O(d)}\left(\frac{\diam(A)}{\eps \beta r_{AB}/8}\right)^{d} \leq 2^{O(d)}\left(\frac{r_{AB}}{\eps \beta r_{AB}/8}\right)^{d} = O\left(\frac{1}{\eps \beta}\right)^{d}~.
	\end{equation}
By the same argument $|N_B| = O\left(\frac{1}{\eps \beta}\right)^{d}$. Let $\mathcal{A}$ ($\mathcal{B}$) be a collection of balls with radius $\frac{\eps\beta r_{AB}}{8}$ centered at  points in $A$ ($B$). Note that balls in $\mathcal{A}$ and $\mathcal{B}$ cover $A$ and $B$, respectively.  For any two pairs of balls $(A',B') \in \mathcal{A}\times \mathcal{B}$, we observe that $(A'\cap A, B'\cap B)$ is a $(\frac{4}{\eps})$-separated pair. Thus, by \Cref{lm:WSPDSpEdges}, there is at most one edge in $G$ between $A'\cap A$ and $B'\cap B$. Therefore, the number of edges in $G$ between $A$ and $B$ is at most:
\begin{equation*}
	|\mathcal{A}|\cdot|\mathcal{B}| = |N_A|\cdot |N_B| =  O\left(\frac{1}{\eps \beta}\right)^{2d},
\end{equation*}
 by \Cref{eq:NAsize}; the corollary follows. 
\end{proof}

Next, we show that $G$ is lanky.

\begin{lemma}\label{lm:DoublingGreedyLanky}
	Let $P$ be a set of $n$ points in a metric $(X,\nd)$ of dimension $d$ and spread $\Delta$. Let $G$ be a greedy $(1+\epsilon)$-spanner for $P$ with $\eps \in (0,1/2]$. Then $G$ is $O(\eps^{-d}\log(\Delta))$-lanky.  
\end{lemma}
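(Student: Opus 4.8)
The plan is to show directly that $G$ is $\eps^{-O(d)}\log\Delta$-lanky, which is $O(\eps^{-d}\log\Delta)$ since $\eps$ and $d$ are treated as constants. Concretely, I will fix a vertex $p$ and a radius $r$, let $\tilde E$ be the set of edges of length at least $r$ cut by $\ball(p,r)$, and bound $|\tilde E|$. The conceptual point is that, unlike in Euclidean space, there is no cone/angular argument available (that is what let \Cref{lm:XtoYEdges} dispatch \emph{all} long edges leaving a small region at once); in a doubling metric we must instead bucket the edges by length scale and pay once per scale, which is exactly where the $\log\Delta$ factor comes from.

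First I would normalize so that the minimum pairwise distance is $1$ and the maximum is $\Delta$; then every edge of $G$ has length in $[1,\Delta]$. If $r>\Delta$ there is nothing to prove, so assume $r\le\Delta$. For $j\ge 0$ let $\tilde E_j$ be the edges of $\tilde E$ whose length lies in $[2^j,2^{j+1})$, so $\tilde E=\bigcup_j \tilde E_j$. Because these edges have length both $\ge r$ and in $[1,\Delta]$, at most $O(\log\Delta)$ indices $j$ give a nonempty class. (If one is worried about radii $r$ below the minimum interpoint distance, one may first enlarge $r$ to $\max(r,1)$ without changing the set of cut edges of length $\ge r$, exactly as in the proof of \Cref{lm:G2Lanky}.) Hence it suffices to prove $|\tilde E_j|=\eps^{-O(d)}$ for each fixed $j$.

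Fix $j$ and set $\rho=2^j$; every edge of $\tilde E_j$ then has length in $[\rho,2\rho)$. An edge $(u,v)\in\tilde E_j$ has one endpoint, say $u$, in $\ball(p,r)$, and since $r\le\rho$ and $\nd_X(u,v)<2\rho$ both $u,v$ lie in $\ball(p,3\rho)$. Cover $\ball(p,3\rho)$ by a collection $\mathcal B$ of balls of radius $\eps\rho/8$; by the packing bound (\Cref{lm:packing}), $|\mathcal B|=2^{O(d)}(24/\eps)^d=\eps^{-O(d)}$. Each edge of $\tilde E_j$ then has its two endpoints in some pair of balls of $\mathcal B$, and no such edge lies inside a single ball of $\mathcal B$ (that would force length $\le\eps\rho/4<\rho$). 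For two distinct balls $B_1,B_2\in\mathcal B$ containing the endpoints of an edge of $\tilde E_j$, the triangle inequality gives $\nd_X(B_1,B_2)\ge\rho-\eps\rho/2\ge\rho/2$ while $\max\{\diam(B_1),\diam(B_2)\}\le\eps\rho/4$, so $(B_1\cap P,B_2\cap P)$ is an $\Omega(1/\eps)$-separated pair; by \Cref{cor:WSPDSEdges} there are $2^{O(d)}$ edges of $G$ between them. Summing over the at most $|\mathcal B|^2=\eps^{-O(d)}$ relevant ball-pairs gives $|\tilde E_j|=\eps^{-O(d)}$, and summing over the $O(\log\Delta)$ nonempty classes gives $|\tilde E|=\eps^{-O(d)}\log\Delta=O(\eps^{-d}\log\Delta)$, which is the definition of $O(\eps^{-d}\log\Delta)$-lanky.

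The bookkeeping (covering radii, the exact separation constant, counting scales) is routine once the structure is set. The only genuinely delicate issue is conceptual rather than technical: one must accept that the $\log\Delta$ dependence is intrinsic here — there is no analogue of the Euclidean cone argument, so the per-scale decomposition is forced, and Smid's example (where $\Delta=2^{\Omega(n)}$ yields a greedy spanner with a linear-size separator) shows the $\log\Delta$ cannot be removed.
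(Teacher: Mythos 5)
Your overall strategy matches the paper's: bucket the cut edges of length $\geq r$ by dyadic length scale (paying the $\log\Delta$ there), then for each scale cover the neighborhood of $\ball(p,r)$ by balls of radius $\Theta(\eps\rho)$, show that endpoints of any long edge land in a well-separated pair of cover balls, and invoke \Cref{cor:WSPDSEdges} (your use of the corollary is exactly the role it plays in the paper's proof). The idea, and the observation that the minimum-distance normalization lets you enlarge $r$ to $\max(r,1)$ as in \Cref{lm:G2Lanky}, are the same.

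Two quantitative points are worth flagging. First, for a nonempty class $\tilde E_j$ you only get $\rho=2^j>r/2$, not $r\le\rho$: an edge of length $\ell\in[\rho,2\rho)$ with $\ell\ge r$ forces $2\rho>r$ but not $\rho\ge r$. So the enclosing ball should be $\ball(p,4\rho)$ rather than $\ball(p,3\rho)$; this changes only constants. Second, and more substantively, by covering the entire ball $\ball(p,O(\rho))$ uniformly by $\Theta(\eps^{-d})$ balls and then summing over all ordered pairs $|\mathcal B|^2$, you pay $\eps^{-2d}$ per scale, for a final bound of $2^{O(d)}\eps^{-2d}\log\Delta$. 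Your last step, writing $\eps^{-O(d)}\log\Delta=O(\eps^{-d}\log\Delta)$, is not literally valid: $\eps^{-2d}$ is not $O(\eps^{-d})$ as $\eps\to0$. The paper saves the extra $\eps^{-d}$ by covering the \emph{inside} $\ball(p,r)$ and the \emph{outside} $\ball(p,(2^{i+1}+1)r)$ separately with balls of the same radius $t=\max\{\eps 2^i r,r\}/4$: since $t\ge r/4$, the inside needs only $2^{O(d)}$ balls, and only the outside contributes $\eps^{-O(d)}$, so the number of relevant $(B_{\mathrm{in}},B_{\mathrm{out}})$ pairs is $2^{O(d)}\cdot\eps^{-O(d)}$ rather than the square of a $\eps^{-O(d)}$ quantity. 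If you want your write-up to match the stated $O(\eps^{-d}\log\Delta)$, you should use this asymmetric covering (note that every edge cut by $\ball(p,r)$ has one endpoint inside, so only inside$\times$outside pairs are relevant). For the purposes of the paper, where $\eps$ and $d$ are treated as fixed constants and the lemma is used only to obtain $O(n^{1-1/d}+\log\Delta)$, your $\eps^{-O(d)}\log\Delta$ bound is entirely sufficient.
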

\begin{proof} Let $\ball(p, r)$ be a ball of radius $r$ centered at a point $p \in P$.We assume that the minimum distance is at least $1$ and that $r\geq 1/2$; otherwise, we could increase $r$ to $1/2$ without changing the number of edges that cut $\ball(p, r)$. 
	
	Let $E_\text{cut}$ be the set of edges with length at least $1$ and cutting $\ball(p,r)$. 
	We partition the set $E_\text{cut}$ into $O(\log{\Delta})$ subsets: $E_\text{cut}^i = \{(u, v) \in E_{\text{cut}} ~|~ 2^{i}r < \nd_X(u, v) \leq 2^{i+1}\}$ for $1\leq i \leq \lceil \log{\Delta} \rceil$. Let $(u, v)$ be an edge in $E_\text{cut}^i$. Since $(u, v)$ is cut by $\ball(p, r)$, at least one of the two endpoints, say $u$, of $(u,v)$ is in $\ball(p,r)$.  By the triangle inequality, it follows that	$\nd_X(p, v) \leq \nd_X(u, v) + \nd_X(p, u) \leq (2^{i+1}+1)r$.

	Let $t = \max\{\eps 2^ir,r\}/4$. Let $\mathcal{B}_{R}$ for $R > 0$ be a minimal collection of balls of radius $t$ that covered the ball $\ball(p,R)$ of radius $R$ centered at $p$;  $\mathcal{B}_{R}$ can be constructed by taking balls of radius $R$ centered at a $R$-net of  $\ball(p,R)$. By \Cref{lm:packing}, when $R = r$, we have:
	\begin{equation*}
		|\mathcal{B}_{r}| = 2^{O(d)}\left(\frac{r}{r/4}\right)^d = 2^{O(d)}
	\end{equation*}
	and when $R = (2^{i+1}+1)r$, we have:
	\begin{equation*}
	|\mathcal{B}_{(2^{i+1}+1)r}| = 2^{O(d)}\left(\frac{(2^{i+1}+1)r}{\eps2^i r/4}\right)^d = O(\eps)^{-d}
\end{equation*}
	since $\eps\leq 1/2$.
	
	Let $A$ and $B$ be two balls in $\mathcal{B}_r$ and $\mathcal{B}_{(2^{i+1}+1)r}$, respectively,  that contain two endpoints of an edge $(u,v)\in E^i_{cut}$. Observe by the triangle inequality that:
	\begin{equation*}
		\nd_X(A,B)\geq w(u,v)- \diam(A) - \diam(B) \geq w(u,v) - 4(\eps 2^i r/4) = 2^i(1-\eps)r \geq 2^{i-1}r
	\end{equation*}
	since $\eps \leq 1/2$. By \Cref{cor:WSPDSEdges} with $\beta = \frac{2^{i-1}r}{\eps 2^ir/4} = \frac{2}{\eps}$, it follows that there are $O(1)^{-d}$ edges between $A$ and $B$. Thus, we have:
	\begin{equation*}
		|E^i_{\text{cut}}| = O(1)^{-d}|\mathcal{B}_{r}|\cdot |\mathcal{B}_{(2^{i+1}+1)r}| = O(\eps)^{-d}~.
	\end{equation*}
	It follows that $|E_{\text{cut}}| = O(\eps)^{-d}\log(\Delta)$, which implies the lemma. 	
\end{proof}

We are now ready to show \Cref{thm:strong_doubling_separator}.

\StrongDoubilngSeparator*
\begin{proof}
	By \Cref{lm:DoublingGreedyLanky}, $G$ is $\tau$-lanky with $\tau = O(\eps^{-d}\log(\Delta))$, and hence $G$ is weakly $\tau$-lanky with the same parameter. By applying \Cref{cor:WSPDSEdges} with $\beta = 1$, $G$ is $\kappa$-thin with $\kappa = \eps^{-2d}$. Since a doubling metric of dimension  $d$ is $(\eta, d)$-packable with $\eta = 2^{O(d)}$ by \Cref{lm:packing}, by \Cref{cor:anotherSubGSeparator}, $H$ has a separator $S$ of size $O(2^{O(d)}\eps^{-d} (\eps^{1-d} k^{1-1/d} + \log(\Delta))) = O(k^{1-1/d})$ for fixed constants $d$ and $\eps \in (0,1/2]$. Also by \Cref{cor:subgraphSepLanky}, $S$ can be found in $O(2^{O(d)}\eps^{1-2d}k) = O(k)$ expected time.
\end{proof}

\section*{Acknowledgments} The first author thanks Shay Solomon for helpful discussions. This material is based upon work supported by the National Science Foundation under Grant No. CCF-2121952.  We thank an anonymous reviewer of  SODA22 for suggesting  derandomizing \Cref{thm:strong_euclidean_separator} and pointing us to \cite{EMT95}, which leads to  results in \Cref{subsec:derandomize}. Thanks to Joachim Gudmundsson for pointing out an erroneous claim, now removed, on linear time SSSP for $\tau$-lanky graphs in the algorithmic application section in the previous version of our paper.

	\bibliographystyle{alphaurlinit}
	\bibliography{RamseyTreewidthBib,RPTALGbib}
\appendix
\end{document}